\newtheorem{theorem}{Theorem}
\newtheorem{corollary}{Corollary}
\newtheorem{lemma}{Lemma}
\newtheorem{proposition}{Proposition}
\newtheorem{definition}{Definition}
\newtheorem{remark}{Remark}
\newcommand{\myleft}{\mathopen{}\mathclose\bgroup\left}
\newcommand{\myright}{\aftergroup\egroup\right}
\newcommand{\id}{\ensuremath\mathrm{id}}
\DeclareMathOperator{\Tr}{Tr}
\DeclareMathOperator{\rank}{rank}
\DeclareMathOperator{\Id}{Id}
\DeclareMathOperator{\diag}{diag}
\newcommand{\conv}{\operatorname{conv}}
\DeclareMathOperator{\SO}{SO}
\DeclareMathOperator{\CSP}{CSP}
\DeclareMathOperator{\AD}{AD}
\def\FF{\mathbb{F}}
\def\CC{\mathbb{C}}
\def\stab{\mathrm{STAB}}
\def\SP{\mathrm{SP}}
\DeclareMathOperator{\Cl}{Cl}% Clifford group Cl(n) \subset U(n)
\DeclareMathOperator{\Sp}{Sp}% Clifford group Cl(n) \subset U(n)
\newcommand{\Pa}{\mathcal{P}}
\DeclareMathOperator\tr{Tr}
\DeclareMathOperator\GL{GL}
\newcommand{\ket}[1]{\left.\left|{#1}\right.\right\rangle}
\newcommand{\ketn}[1]{| #1 \rangle}
\newcommand{\ketb}[1]{\bigl| #1 \bigr\rangle}
\newcommand{\bra}[1]{\left.\left\langle{#1}\right.\right|}
\newcommand{\bran}[1]{\langle #1 |}
\newcommand{\brab}[1]{\bigl\langle #1 \bigr|}
\newcommand{\braket}[2]{\left\langle #1 \middle| #2 \right\rangle}
\newcommand{\ketbra}[2]{\ket{#1} \!\! \bra{#2}}
\newcommand{\ketbran}[2]{\ketn{#1} \! \bran{#2}}
\newcommand{\ketbrab}[2]{\ketb{#1} \! \brab{#2}}
\newcommand{\sandwich}[3]
  {\left\langle  #1 \right| #2 \left| #3 \right\rangle}
\newcommand{\R}{\mathbb{R}}
\newcommand{\N}{\mathbb{N}}
\newcommand{\C}{\mathbb{C}}
\newcommand{\Z}{\mathbb{Z}}
\newcommand{\F}{\mathbb{F}}
\newcommand{\ie}{i.\,e.}
\newcommand{\eg}{e.\,g.}
\newcommand{\one}{\mathbbm{1}}
\newcommand{\ind}{\mathbf{1}}
\def\be                 {\begin{equation}}
\def\ee                 {\end{equation}}
\setlist[enumerate]{topsep=1ex,itemsep=0.25ex,partopsep=1ex,parsep=1ex}
\setlist[itemize]{topsep=1ex,itemsep=0.25ex,partopsep=1ex,parsep=1ex}
\begin{document}

\title{The axiomatic and the operational approaches to resource theories of magic do not coincide}

\author{Arne Heimendahl}
\thanks{Authors contributed equally.}
\affiliation{Department for Mathematics and Computer Science, University of Cologne, Germany}%
\author{Markus Heinrich}
\thanks{Authors contributed equally.}
\affiliation{Institute for Theoretical Physics, University of Cologne, Germany} 
\affiliation{Institute for Theoretical Physics, Heinrich Heine University Düsseldorf, Germany} 
\author{David Gross}
\affiliation{Institute for Theoretical Physics, University of Cologne, Germany}

\begin{abstract}
    \emph{Stabiliser operations} occupy a prominent role in fault-tolerant quantum computing. 
    They are defined \emph{operationally}: by the use of Clifford gates, Pauli measurements and classical control.
			These operations can be efficiently simulated on a classical computer, a result which is known as the \emph{Gottesman-Knill} theorem.
    However, an additional supply of \emph{magic states} is enough to promote them to a universal, fault-tolerant model for quantum computing.
	To quantify the needed resources in terms of magic states, a \emph{resource theory of magic} has been developed.
    Stabiliser operations (SO) are considered free within this theory, however they are not the most general class of free operations.
    From an axiomatic point of view, these are the \emph{completely stabiliser-preserving} (CSP) channels, defined as those that preserve the convex hull of stabiliser states.
	It has been an open problem to decide whether these two definitions lead to the same class of operations.
    In this work, we answer this question in the negative, by constructing an explicit counter-example.
	This  indicates that recently proposed stabiliser-based simulation techniques of CSP maps are strictly more powerful than Gottesman-Knill-like methods.
	The result is analogous to a well-known fact in entanglement theory, namely that there is a gap between the operationally defined class of local operations and classical communication (LOCC) and the axiomatically defined class of separable channels.
\end{abstract}

\maketitle

%%% =============================================
\section{Introduction}
\label{sec:intro}
%%% =============================================

Despite the advances in the development of quantum platforms, understanding the precise set of quantum phenomena that is required for a quantum advantage over classical computers remains an elusive task.
However, for the design of fault-tolerant quantum computers, it seems imperative to understand these necessary resources.
Here, the \emph{magic state model} of quantum computing offers a particularly fruitful perspective.
In this model, all operations performed by the quantum computer are divided into two classes.
The first class consists of the preparation of stabiliser states, the implementation of Cifford gates, and Pauli measurements.
These \emph{stabiliser operations} by themselves can be efficiently simulated classically by the Gottesman-Knill Theorem \cite{gottesman_stabilizer_1997,aaronson_improved_2004}.
Secondly, the quantum computer needs to be able to prepare \emph{magic states}, defined as states that allow for the implementation of any quantum algorithm when acted on by stabiliser operations \cite{bravyi_universal_2005}. 
In this sense, the magic states provide the ``non-classicality'' required for a quantum advantage.

During recent years, there has been an increasing interest in developing a resource theory of quantum computing that allows for a precise quantification of \emph{magic}.
First resource theories were developed for the somewhat simpler case of odd-dimensional systems, based on a phase-space representation via Wigner functions.
There, the total negativity in the Wigner function of a state is a \emph{resource monotone} called \emph{mana}, and
non-zero mana is a necessary condition for a quantum speed-up \cite{galvao_discrete_2005,gross_non-negative_2007,veitch_negative_2012,veitch_resource_2014,mari_positive_2012,howard_contextuality_2014,delfosse_equivalence_2017}.
In the practically more relevant case of qubits, this theory breaks down, which has led to a number of parallel developments \cite{howard_application_2017,heinrich_robustness_2019,seddon_quantifying_2019,raussendorf_2019,seddon_quantifying_2021,beverland_lower_2020,heimendahl_stabilizer_2021,liu_many-body_2020}.
A common element is that the finite set of stabiliser states, or more generally their convex hull, the \emph{stabiliser polytope}, is taken as the set of free states.
Since stabiliser operations preserve the stabiliser polytope, they are considered free operations in this theory and any monotones should be non-increasing under those.
A number of such \emph{magic monotones} have been studied and their values linked to the runtime of classical simulation algorithms \cite{pashayan_estimating_2015,bravyi_improved_2016,bravyi_simulation_2019,seddon_quantifying_2021}.
In this sense, the degree of magic present in a quantum circuit does seem to correlate with the quantum advantages it confers -- thus validating the premise of the approach.

The set of stabiliser operations ($\SO$) are defined in terms of concrete actions (``prepare a stabiliser state, perform a Clifford unitary, make a measurement, ...'') and thus represent an \emph{operational} approach to defining free transformations in a resource theory of magic.
It is often fruitful to start from an \emph{axiomatic} point of view, by defining the set of free transformations as those physical maps that preserve the set of free states.
This approach has been introduced recently by \textcite{seddon_quantifying_2019}. 
They suggest to refer to a linear map as \emph{completely stabiliser-preserving} (CSP) if it preserves the stabiliser polytope, even when acting on parts of an entangled system.
It has been shown that the magic monotones mentioned above are also non-increasing under CSP maps \cite{seddon_quantifying_2021}.

A natural question is therefore whether the two approaches coincide -- i.e.\ whether $\SO = \CSP$, or whether there are CSP maps that cannot be realised as stabiliser operations \cite{seddon_quantifying_2019}.

To build an intuition for the question, consider the analogous problem in entanglement theory, where the free resources are the separable states.
The axiomatically defined free transformations are the \emph{separable maps} -- completely positive maps that preserve the set of separable states.
The operationally defined free transformations are those that can be realised by local operations and classical communication (LOCC). 
It is known that the set of separable maps is strictly larger than the set of LOCC \cite{bennett_quantum_1999,chitambar_everything_2014} -- a fact that 
leads e.g.\ to a notable gap in the success probability of quantum state discrimination \cite{koashi_quantum_2007,duan_distinguishability_2009} and entanglement conversion \cite{chitambar_increasing_2012} between the two classes.

In this work, we show that -- also in resource theories of magic -- the axiomatic and the operational approaches lead to different classes, that is $\SO \neq \CSP$.

As an auxiliary result, we derive a normal form for stabiliser operations which is used to prove our main result.
From this form, it is evident that any stabiliser operation can be realised in a finite number of rounds -- a statement which is known to not hold for LOCC operations in entanglement theory \cite{chitambar_local_2011}.
Furthermore, we give a characterisation of CSP channels in terms of certain generalised stabiliser measurements and adaptive Clifford operations.
This characterisation has been used in a classical simulation algorithm of CSP channels by \textcite{seddon_quantifying_2021}.

\subsection*{Outline}

In Section~\ref{sec:pre}, we give an introduction to the relevant concepts used throughout the main part of this work.
Next, we prove a minimal version of our main result and illustrate our proof technique for the $2$-qubit case in Section~\ref{sec:min-result}. 
There, we show that there is a $ 2 $-qubit $ \CSP $ channel that is not a stabiliser operation. 
In Section~\ref{sec:gen-results}, we generalise this minimal result to an arbitrary number of qudits. Furthermore, we prove equality of $ \CSP $ and $ \SO $ for a single qudit. 
In Section~\ref{sec:additional}, we describe additional properties of $ \CSP $ channels and give some examples. 
We conclude the main part by commenting on potential implications and future work in Sec.~\ref{sec:summary}.

%%% =============================================
\section{Preliminaries}
\label{sec:pre}
%%% =============================================

% -------------------------------
\subsection{Stabiliser formalism}
\label{sec:stabiliser-formalism}
% -------------------------------

Consider the Hilbert space $\mathcal H = (\C^d)^{\otimes n}$ of $n$ qudits of dimension $d$, where we assume that $d$ is prime.
We label the computational basis $\ket{x}$ by vectors $x$ in the discrete vector space $\F_d^n$. 
Here, $\F_d$ is the finite field of $d$ elements which can be taken to be the residue field $\Z/d\Z$ of integers modulo $d$.
Let $\omega=e^{2\pi i/d}$ be a $d$-th root of unity, then we define the $n$-qudit $Z$ and $X$ operator as usual by their action on the computational basis:
\begin{equation}
\label{eq:pauli-zx}
 Z(z)\ket{y} := \omega^{z\cdot y}\ket{y}, \quad X(x)\ket{y}:=\ket{y+x}, \quad z,x,y\in\F_d^n.
\end{equation}
Here, all operations take place in the finite field $\F_d$ (i.e.~modulo $d$), if not stated otherwise.
To treat the slightly different theory for even and odd $d$ on the same footing, we introduce the convention
\begin{align}
\tau &:= (-1)^d e^{2\pi i /d},
    &
 D &:= 
    \begin{cases} 
        2d & \text{if } d=2 \\
        d & \text{else.}
    \end{cases}
\end{align}
Note that $\tau$ is always a $D$-th root of unity such that $\tau^2=\omega$.    
We group the $Z$ and $X$ operators and their coordinates to define an arbitrary (generalised) Pauli operator indexed by $a=(a_z,a_x)\in\F_d^{2n}$:
\begin{align}
 \label{eq:weyl-ops}
 w(a) &:= \tau^{-\gamma(a)} Z(a_z) X(a_x), & \gamma(a) &:= a_z\cdot a_x \mod D.
\end{align}

Finally, the \emph{Heisenberg-Weyl} or \emph{generalised Pauli group} is the group generated by Pauli operators and can be written as:
\begin{equation}
 \Pa_n(d) := \langle\{ w(a) \; | \; a\in\F_d^{2n} \}\rangle = \{ \tau^k w(a) \; | \; k\in\Z_D, a\in\F_d^{2n} \}.
\end{equation}
The \emph{Clifford group} is defined as the group of unitary symmetries of the Pauli group:
\begin{equation}
 \label{eq:def-clifford-group}
 \Cl_n(d) := \big\{ U \in U(d^n) \; | \; U\Pa_n(d) U^\dagger = \Pa_n(d) \big\} \, / \, U(1).
\end{equation}
We take the quotient with respect to irrelevant global phases in order to render the Clifford group a finite group. 
If the dimension $d$ is clear from the context, we often omit it to simplify notation.

\begin{figure}[b]
\begin{minipage}[c]{0.5\textwidth}
\centering
\begin{tikzpicture}[z=-5,scale=2]
\coordinate (X1) at (1,0,0);
\coordinate (X2) at (-1,0,0);
\coordinate (Z1) at (0,1,0);
\coordinate (Z2) at (0,-1,0);
\coordinate (Y1) at (0,0,1);
\coordinate (Y2) at (0,0,-1);
\coordinate (H) at (0.71,0,0.71);
\coordinate (T) at (0.5774,0.5774,0.5774);

\draw (X1) -- (Y1);
\draw (X2) -- (Y1);

\begin{scope}[dashed,opacity=0.6]
\draw (Y2) -- (Z1);
\draw (Y2) -- (Z2);
\draw (X2) -- (Y2);
\draw (X1) -- (Y2);
\end{scope}

\draw (X1) -- (Z1);
\draw (X2) -- (Z1);
\draw (Y1) -- (Z1);

\draw [fill=black!15!white,opacity=0.6] (X1) -- (Z2) -- (Y1) -- (X1);
\draw [fill=black!30!white,opacity=0.6] (Y1) -- (X2) -- (Z2) -- (Y1);
\draw [fill=black!15!white,opacity=0.6] (X1) -- (Z1) -- (Y1) -- (X1);
\draw [fill=black!30!white,opacity=0.6] (Y1) -- (X2) -- (Z1) -- (Y1);

\draw (X2) -- (Z2);
\draw (Y1) -- (Z2);
\draw (X1) -- (Z2);
   
\node [above] at (Z1) {\footnotesize $\ket{+Z}$};
\node [below] at (Z2) {\footnotesize $\ket{-Z}$};
\node [right] at (X1) {\footnotesize $\ket{+X}$};
\node [below left] at (Y1) {\footnotesize $\ket{+Y}$};

\end{tikzpicture}
\end{minipage}
\begin{minipage}[c]{0.45\textwidth}
\caption{Bloch representation of the single-qubit \emph{stabiliser polytope}, which is the octahedron spanned by the six $\pm 1$ eigenstates of the Pauli $X$,$Y$, and $Z$ operators. The simple geometry is not representative for the general situation in high dimensions.}
\label{fig:stab_polytope}
\end{minipage}
\end{figure}
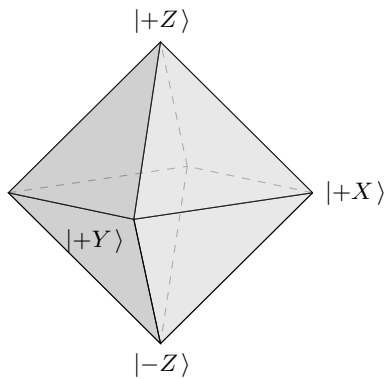

An Abelian subgroup $S\subset\Pa_n(d)$  that does not contain $\omega\one$ is called a \emph{stabiliser group}.
The subspace $C(S)\subset(\C^d)^{\otimes n}$ of common fixed points of $S$ is the \emph{stabiliser code} associated with $S$.
One verifies easily that the orthogonal projection onto $C(S)$ is given by 
\begin{align}\label{eqn:stab projection}
				P_S = |S|^{-1} \sum_{s\in S} s.
\end{align}
By taking traces, it follows that the dimension $\dim C(S)$ equals $d^n/|S| = d^{n-k}$, where $k=\rank(S)$ is the rank of $S$.
Hence, $S$ defines a $[[n,n-k]]$ quantum code and we denote by $\stab(d,n,k)$ the set of these stabiliser codes.
Of particular interest is the case $k=n$, for which $P_S$ is rank 1 and thus defines a pure quantum state, called \emph{stabiliser state}.
The set of pure stabiliser states $\stab(d,n)\equiv\stab(d,n,n)$ spans a convex polytope that is full-dimensional in state space, the \emph{stabiliser polytope} $\SP_n(d):=\conv\stab(d,n)$.
For a single qubit, i.e.~$d=2$ and $n=1$, this is the well-known octahedron spanned by the Pauli $X,Y,Z$ eigenstates, see Fig.~\ref{fig:stab_polytope}.
Elements of $\SP_n(d)$ will be referred to as \emph{mixed stabiliser states}.

%---------------------------------
\subsection{Stabiliser operations}
%---------------------------------
\label{sec:stab-ops}

The Gottesman-Knill theorem states that \emph{stabiliser operations} can be simulated in a time which is polynomial in the system size \cite{gottesman_stabilizer_1997,aaronson_improved_2004}. 
These operations are defined as follows.

\begin{definition}[Stabiliser operation]
 \label{def:so}
 A quantum channel taking $n$ input qudits to $m$ output qudits, each of prime dimension $d$, is a \emph{stabiliser operation}, if it is composed of the following fundamental operations:
 \begin{itemize}
  \item preparation of qudits in stabiliser states,
  \item application of Clifford unitaries,
  \item Pauli measurements, and
  \item discarding of qudits.
\end{itemize}
An arbitrary random function of previous measurement outcomes can be used to decide which fundamental operation to perform in each step.
The set of all stabiliser operations is denoted by $\SO_{n,m}(d)$, with $\SO_n(d) := \SO_{n,n}(d)$.
If the dimension $d$ is clear from the context, we often omit it to simplify notation.
\end{definition}

Typically, one requires that the classical control logic can be implemented in a computationally efficient way
(and the Gottesman-Knill Theorem applies only under this additional assumption).
In the present paper we will drop the efficiency requirement and show that even the resulting larger class of stabilizer operations is smaller than the set of CSP channels.
As we lay out in Remark~\ref{rem:classical-control}, this strengthening of the problem formulation is actually necessary in order to avoid a trivial separation of SO and CSP due to their different computational capabilities.

Because of the possibility to make use of randomness, the set of stabiliser operations $\SO_{n,m}$ is convex.
Its extreme points will turn out to play an important role in our construction.

By definition, stabiliser operations can be seen as an iterative protocol where a quantum computer capable of performing fundamental stabiliser operations interacts with a classical control logic.
Generalising results on the structure of Kraus operators of stabiliser operations obtained in Ref.~\cite{campbell_structure_2009}, we will establish in Thm.~\ref{thm:kraus-decomposition-SO} that any operation in $\SO_{n,m}$ requires at most $n$ interactive rounds.
This stands in contrast to the class LOCC studied in entanglement theory, where no analogous finite bound exists \cite{chitambar_local_2011}.

In our analysis, we will come across the class of stabiliser operations that involve no measurements or classical randomness.
This class coincides with the set of channels whose dilation can be realized with a Clifford unitary:

\begin{definition}
A superoperator $\mathcal{E}:\, L((\C^d)^{\otimes n})\rightarrow L((\C^d)^{\otimes m})$ has a \emph{Clifford dilation} if there exists 
a number $k$, 
a $k$-qudit stabiliser state $|s\rangle$,
and 
a Clifford unitary $U$ on $n+k$ qudits 
such that
\begin{align*}
				\mathcal{E}(\rho) = \tr_{m+1,\dots,n+k} \big[ U (\rho\otimes\ketbran{s}{s}) U^\dagger \big].
\end{align*}
\end{definition}

%-----------------------------------------------------
\subsection{Completely stabiliser-preserving channels}
%-----------------------------------------------------
\label{sec:csp-channels}

From a resource-theoretic perspective, the maximal set of free operations is the set of quantum channels which do not generate resources, \ie~which preserve the set of free states, see \eg~Ref.~\cite{chitambar_quantum_2019}.
If we take the set of free states to be the \emph{stabiliser polytope} $\SP_n(d)$, the resource non-generating (RNG) channels are the \emph{stabiliser-preserving} (SP) \emph{channels}.
For this maximal set of free operations, relatively strong statements can be made from general resource-theoretic arguments.
For instance, it has been recently shown that the resource theory with stabiliser-preserving channels is asymptotically reversible which implies that resource-optimal distillation rates can be achieved with stabiliser-preserving channels \cite{liu_many-body_2020}.

In general, a resource theory with RNG channels has the disadvantage that it is not closed under tensor products since RNG channels may fail to be free when applied to subsystems.
The class of RNG channels for which this is still the case are the \emph{completely} resource non-generating channels \cite{chitambar_quantum_2019}.
For some resource theories, these two classes coincide, but not for the resource theory of magic \cite{seddon_quantifying_2019}.

Following this idea, \textcite{seddon_quantifying_2019,seddon_quantifying_2021} have studied completely stabiliser-preserving (CSP) channels as the free operations in a resource theory of magic state quantum computing.

\begin{definition}
\label{def:csp}
	A superoperator $\mathcal{E}:\, L((\C^d)^{\otimes n})\rightarrow L((\C^d)^{\otimes m})$ is called \emph{completely stabiliser-preserving} (CSP) if and only if
	$\mathcal{E}\otimes\id_k(\SP_{n+k}(d))\subset \SP_{m+k}(d)$ for all $k\in\N$.
	The set of CSP maps is denoted by $\CSP_{n,m}(d)$ and $\CSP_n(d):=\CSP_{n,n}(d)$.
    If the dimension $d$ is clear from the context, we often omit it to simplify notation.
\end{definition}
As it is the case for completely positive maps, one can show that it is indeed enough to check the condition for $k=n$ \cite[Lem.~4.1]{seddon_quantifying_2019}.

It will be helpful to characterise CSP maps via their 
\emph{Choi-Jamiołkowski representation}.
Recall that in this representation, a linear map $\mathcal E: L((\C^d)^{\otimes n}) \to L((\C^d)^{\otimes m})$ is associated with an operator
\begin{equation}
 \mathcal{J}(\mathcal E):= \mathcal{E}\otimes\id_{n}(\ketbrab{\phi^+}{\phi^+}) \in 
  L((\C^d)^{\otimes m}) \otimes L((\C^d)^{\otimes n}),
\end{equation}
where $\ketb{\phi^+} = d^{-n} \sum_{x\in\F_d^n} \ket{xx}$ is the standard maximally entangled state with respect to the computational basis. 
Choi's theorem states that $\mathcal{E}$ is completely positive if and  only if its Choi-Jamiołkowski representation lies in the positive semidefinite cone 
\begin{equation}
  \mathrm{PSD}_{n+m}(d) \subset L((\C^d)^{\otimes m}) \otimes L((\C^d)^{\otimes n}).
\end{equation}
What is more, the map $\mathcal{E}$ is trace-preserving if and only if its Choi-Jamiołkowski representation lies in the affine space
\begin{equation}
  \mathrm{TP}_{n,m}(d)
  = \left\{ \rho \in L((\C^d)^{\otimes m}) \otimes L((\C^d)^{\otimes n}) \,|\, \tr_1 \rho = \one/d^m \right\}.
\end{equation}
In particular, for the set $\mathrm{CPTP}_{n,m}(d)$ of completely positive and trace-preserving maps, we have the characterization
\begin{align}\label{eqn:choi-jam}
  \mathcal{J}( \mathrm{CPTP}_{n,m}(d) ) = \mathrm{PSD}_{n+m}(d) \cap \mathrm{TP}_{n,m}(d).
\end{align}

We now turn to the CSP version of this theory.
It turns out that the CSP property has strong implications:
\begin{lemma}
  Any CSP map is completely positive and trace-preserving.
\end{lemma}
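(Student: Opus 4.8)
The plan is to treat complete positivity and trace preservation separately, in each case exploiting that the defining inclusion for CSP maps already supplies the required structural information about a single distinguished stabiliser state. No genuinely new machinery is needed beyond the Choi--Jamiołkowski characterisation~\eqref{eqn:choi-jam} and two standard facts about the stabiliser formalism.

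For complete positivity I would invoke~\eqref{eqn:choi-jam}: a map is completely positive precisely when its Choi operator lies in $\mathrm{PSD}_{n+m}(d)$. The key observation is that the maximally entangled state $\ketb{\phi^+}$ on the $2n$-qudit system (here $n$ input plus $n$ reference qudits) is itself a pure stabiliser state --- it is stabilised by the group generated by the operators $X(a)\otimes X(-a)$ and $Z(b)\otimes Z(b)$ for $a,b\in\F_d^n$. Hence, up to the normalisation convention fixed for $\mathcal{J}$, the Choi operator
\begin{equation*}
 \mathcal{J}(\mathcal E)=\mathcal E\otimes\id_n(\ketbrab{\phi^+}{\phi^+})
\end{equation*}
is a nonnegative multiple of the image of an element of $\SP_{2n}(d)$ under $\mathcal E\otimes\id_n$. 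Applying the CSP property with $k=n$ places this image in $\SP_{m+n}(d)$, which consists of mixed stabiliser states and is therefore contained in the positive semidefinite cone. Thus $\mathcal J(\mathcal E)\in\mathrm{PSD}_{n+m}(d)$, and Choi's theorem yields complete positivity.

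For trace preservation I would argue that $\mathcal E$ preserves trace on stabiliser states and then extend by linearity. Every element of $\SP_m(d)$ is a density operator and hence has unit trace, so $\Tr\mathcal E(\rho)=1=\Tr\rho$ for all $\rho\in\stab(d,n)$. Consider the linear functional $f(X):=\Tr\mathcal E(X)-\Tr X$ on $L((\C^d)^{\otimes n})$, which vanishes on all pure stabiliser states. Because the stabiliser polytope $\SP_n(d)$ is full-dimensional in state space, its affine hull is the entire hyperplane of trace-one Hermitian operators, and since this hyperplane does not pass through the origin, the pure stabiliser states span $L((\C^d)^{\otimes n})$ as a complex vector space. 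Therefore $f$ vanishes identically, i.e.\ $\Tr\mathcal E(X)=\Tr X$ for all $X$, which is trace preservation.

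I expect the only non-routine points to be the two structural facts just used: that $\ketb{\phi^+}$ is a stabiliser state (so that the Choi operator falls within the scope of the CSP inclusion), and that the stabiliser states linearly span the full operator space (which upgrades unit-trace preservation on $\SP_n(d)$ to trace preservation on all of $L((\C^d)^{\otimes n})$). Both are immediate from the stabiliser formalism --- the former by writing down the stabiliser group of $\ketb{\phi^+}$, the latter from the full-dimensionality of $\SP_n(d)$ recorded in Section~\ref{sec:stabiliser-formalism} --- so once they are in place the argument reduces to Choi's theorem and elementary linear algebra.
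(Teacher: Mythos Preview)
Your proposal is correct and follows essentially the same route as the paper's proof: complete positivity via Choi's theorem together with the fact that $\ketb{\phi^+}$ is a stabiliser state, and trace preservation via the fact that stabiliser states span $L((\C^d)^{\otimes n})$ so that preserving trace on them forces trace preservation everywhere. The paper's argument is terser but relies on exactly the same two structural facts you singled out.
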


\begin{proof}
  The first claim follows from the Choi-Jamiołkowski Theorem, because $\ket{\phi^+}$ is a stabiliser state.
  As for the second claim: Because the set of stabiliser states (as projections) spans $L((\C^d)^{\otimes n})$, every Hermitian trace-one operator can be written as an affine combination of stabiliser states. 
  By definition, any CSP map maps this to an affine combination of stabiliser states in the output space $L((\C^d)^{\otimes m})$.
  In particular, it is trace-preserving.
\end{proof}

The CSP-analogue of Eq.~\eqref{eqn:choi-jam} was proven in Ref.~\cite{seddon_quantifying_2019}.
\begin{lemma}[Lem.~4.2 in \cite{seddon_quantifying_2019}]
\label{lem:csp-choi}
A linear map $\mathcal E: L((\C^d)^{\otimes n}) \to L((\C^d)^{\otimes m})$ is CSP if and only if its Choi representation lies in the intersection of the stabiliser polytope with the affine space $\mathrm{TP}_{n,m}(d)$:
\begin{equation}
  \mathcal{J}(\mathrm{CSP}_{n,m}(d)) = \mathrm{SP}_{n+m}(d) \cap \mathrm{TP}_{n,m}(d).
\end{equation}
In particular, $\CSP_{n,m}(d)$ is a convex polytope.
\end{lemma}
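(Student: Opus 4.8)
The plan is to use that the Choi--Jamiołkowski map $\mathcal{J}$ is an affine-linear bijection between linear maps $L((\C^d)^{\otimes n})\to L((\C^d)^{\otimes m})$ and operators on the output--reference system, and to establish the claimed set equality $\mathcal{J}(\CSP_{n,m}(d)) = \SP_{n+m}(d)\cap\mathrm{TP}_{n,m}(d)$ by proving the two inclusions separately. Once the equality is in hand, the ``polytope'' addendum is immediate: the right-hand side is the intersection of a convex polytope (the stabiliser polytope is the convex hull of finitely many states) with an affine space, hence itself a polytope, and its preimage under the linear bijection $\mathcal{J}$ is again a polytope.

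For the inclusion ``$\subseteq$'', suppose $\mathcal{E}$ is CSP. By the preceding lemma it is trace-preserving, so $\mathcal{J}(\mathcal{E})\in\mathrm{TP}_{n,m}(d)$. Moreover $\ketbrab{\phi^+}{\phi^+}$ is a pure stabiliser state on $2n$ qudits, hence lies in $\SP_{2n}(d)$; applying the defining CSP inclusion with $k=n$ gives $\mathcal{J}(\mathcal{E}) = \mathcal{E}\otimes\id_n(\ketbrab{\phi^+}{\phi^+})\in\SP_{n+m}(d)$. This settles the easy direction.

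The content lies in the reverse inclusion ``$\supseteq$'': assuming $J:=\mathcal{J}(\mathcal{E})\in\SP_{n+m}(d)\cap\mathrm{TP}_{n,m}(d)$, I would show $\mathcal{E}$ is CSP by reconstructing its action through gate teleportation. Fix $k$ (by the cited reduction it suffices to take $k=n$, although the argument is uniform in $k$) and let $\rho\in\SP_{n+k}(d)$, where the first $n$ qudits are fed to $\mathcal{E}$ and the last $k$ form a spectator reference. Form the product $\rho\otimes J$: since the stabiliser polytope is closed under tensor products, this is again a mixed stabiliser state. Now perform a generalised Bell measurement on the $n$ input qudits together with the $n$ reference qudits of $J$, and condition on the outcome $\ketbrab{\phi^+}{\phi^+}$. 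A routine computation (the teleportation identity) shows that the resulting unnormalised state is proportional to $\mathcal{E}\otimes\id_k(\rho)$, and trace-preservation of $\mathcal{E}$ (guaranteed by $J\in\mathrm{TP}_{n,m}(d)$) fixes the normalisation so that the conditional state equals $\mathcal{E}\otimes\id_k(\rho)$ exactly.

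It remains to argue that this conditional state lies in $\SP_{m+k}(d)$, which is the crux of the proof and the main obstacle. The Bell measurement is a stabiliser measurement (a Clifford maps the Bell basis to the computational basis, which is then read out by Pauli measurements), so conditioning on the $\phi^+$ outcome amounts to projecting the mixed stabiliser state $\rho\otimes J$ onto a stabiliser code and renormalising. Writing $\rho\otimes J = \sum_i p_i\ketbrab{\sigma_i}{\sigma_i}$ as a convex combination of pure stabiliser states and invoking the fact that the projection of a stabiliser state onto a stabiliser subspace is either zero or proportional to a stabiliser state, one sees that the conditioned state is a convex combination of pure stabiliser states, hence an element of $\SP_{m+k}(d)$. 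This yields $\mathcal{E}\otimes\id_k(\rho)\in\SP_{m+k}(d)$ for every stabiliser input, i.e.\ $\mathcal{E}$ is CSP. The hard part is thus not the teleportation identity but the stabiliser-preservation bookkeeping -- verifying that each elementary step (tensoring, Bell measurement, postselection) keeps us inside the stabiliser polytope -- and this is where the stabiliser formalism must be applied carefully.
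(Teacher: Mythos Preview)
The paper does not actually supply a proof of this lemma; it is quoted verbatim as Lemma~4.2 of Ref.~\cite{seddon_quantifying_2019}, so there is no in-paper argument to compare against.

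That said, your argument is correct and is essentially the standard one. The forward inclusion is immediate from the preceding lemma and the fact that $\ketb{\phi^+}$ is a stabiliser state. For the reverse inclusion your gate-teleportation strategy works: forming $\rho\otimes J$, projecting the input register of $\rho$ together with the reference register of $J$ onto $\ketbrab{\phi^+}{\phi^+}$, and tracing those registers out yields $d^{-2n}\,(\mathcal{E}\otimes\id_k)(\rho)$; since $J\in\mathrm{TP}_{n,m}(d)$ this has strictly positive trace, so the normalised conditional state is well-defined and equals $(\mathcal{E}\otimes\id_k)(\rho)$. The stabiliser-preservation bookkeeping you outline is sound: mixed stabiliser states are closed under tensor products, and projecting a pure stabiliser state onto the range of a stabiliser code projector yields either $0$ or a (sub-normalised) stabiliser state, so the postselected mixture remains in $\SP_{m+k}(d)$. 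One cosmetic point: you invoke ``the cited reduction'' that $k=n$ suffices, but as you note your argument is uniform in $k$, so you do not actually need that reduction and there is no circularity.
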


Additional properties of CSP channels, as well as a collection of examples, are provided in Sec.~\ref{sec:additional}.

For this work, the focus lies on channels which map the input space to itself, i.e.~$n=m$.
In the main part of this paper, we study the relation between completely stabiliser-preserving channels $\CSP_n(d)$ and stabiliser operations $\SO_n(d)$.
In particular, we show that they agree if and only if $n=1$.
The definitions in this section, as well as the general version of our main result, apply both to qubits $d=2$ and to qudits, where $d$ is an odd prime number.

However, we point out that in odd prime dimensions, the set of free states can be enlarged to include all states with a non-negative Wigner function.
This is a convex set $\mathcal{W}_n^+(d)$ given as the intersection of a probability simplex with the cone of positive-semidefinite matrices, and strictly larger than the stabiliser polytope \cite{gross_hudsons_2006,gross_non-negative_2007}.
The resulting resource theory differs quite significantly from the qubit case \cite{veitch_negative_2012,veitch_resource_2014,mari_positive_2012} and naturally leads to a different class of resource-non generating channels, namely those which do not induce Wigner negativity, see \eg~Ref.~\cite{wang_quantifying_2019}.
Thus, the questions we ask are arguably better motivated in the qubit case.

Another difference between the resource theories in even and odd dimensions is given by \textcite{ahmadi_quantification_2018}.
They show that for a single qutrit, there is a stabiliser-preserving channel which can induce negativity in a state's Wigner function, in particular it cannot be a stabiliser operation.
This shows that SP channels are not the correct free operations for a resource theory of magic in odd dimensions.
In contrast, we show in this work that the set of \emph{completely} stabiliser-preserving channels agrees with the set of stabiliser operations for a single qudit, independent of the dimension.
Moreover, arbitrary multi-qudit CSP channels for odd $d$ cannot induce negativity in the Wigner function by the following argument.
Analogous to Lemma \ref{lem:csp-choi}, one can show that the set $\mathrm{CWPP}_{n,m}(d)$ of completely $\mathcal{W}^+_n(d)$-preserving channels corresponds to $\mathcal{W}^+_{n+m}(d) \cap \mathrm{TP}_{n,m}(d)$.
This follows from the Choi-Jamiołkowski inversion formula and the fact that $\ketbra{\phi^+}{\phi^+}\in\SP_{n+m}(d)\subset \mathcal{W}^+_{n+m}(d)$.
Therefore, $\CSP_{n,m}(d)$ is contained in $\mathrm{CWPP}_{n,m}(d)$ and, in particular, cannot induce negativity in the Wigner function.
This establishes the chain of inclusions $\SO_{n,m}(d) \subset \CSP_{n,m}(d) \subset \mathrm{CWPP}_{n,m}(d)$ for odd prime $d$, where our main result \ref{thm:main} implies that the first inclusion is proper for $n,m>1$.
While one cannot readily dismiss the possibility that the last inclusion is an equality, we conjecture that it is indeed proper, too.

\begin{remark}
\label{rem:classical-control}
The definition \ref{def:csp} of CSP allows for quantum channels of the form \cite{campbell}
\begin{equation}
 \mathcal{E}(\ketbra{x}{y}) := \delta_{x,y} \ketbra{O(x)}{O(x)} 
\end{equation}
where $O$ can be an arbitrary Boolean function.
The definition does not preclude one to consider families $\mathcal{E}_n$ of channels that are associated with Boolean functions $O_n$ that are not Turing computable (e.g.\ functions that decide the halting problem).
The discussion shows that it is meaningless to compare stabilizer operations \emph{with} computational efficiency requirements to CSP channels defined \emph{without} such constraints.
To avoid a trivial separation of the classes, we show here that even stabilizer operations where the classical control logic can consist of arbitrary random functions of previous measurement results cannot implement all CSP channels.
\end{remark}

%%% =============================================
\section{The CSP class is strictly larger than the class of stabiliser operations}
\label{sec:min-result}
%%% =============================================

In this section, we prove a minimal version of the main result.
The general version, treating the multi-qudit case, is stated and proven in Sec.~\ref{sec:gen-results}.

\begin{theorem}
\label{thm:main-minimal}
	  For two qubits, the set $\CSP_2(2)$ is strictly larger than $\SO_2(2)$.
\end{theorem}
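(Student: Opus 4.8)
The plan is to exploit that both classes are convex polytopes and to compare their vertices. By Lemma~\ref{lem:csp-choi}, in the Choi picture $\CSP_2(2)$ is exactly the polytope $\SP_4(2)\cap\mathrm{TP}_{2,2}(2)$, while $\SO_2(2)$ is convex (Sec.~\ref{sec:stab-ops}) and, as the normal form discussed below shows, a polytope with finitely many extreme points. Since stabiliser operations preserve the stabiliser polytope even when acting on a subsystem, we have the inclusion $\SO_2(2)\subseteq\CSP_2(2)$. It therefore suffices to produce a single channel in $\CSP_2(2)\setminus\SO_2(2)$, and the most economical route is to find a \emph{vertex} of the CSP polytope that is not a vertex of $\SO_2(2)$: because of the inclusion, any extreme point of the larger polytope that happened to lie in $\SO_2(2)$ would automatically be extreme there as well, so a CSP vertex failing to be an SO vertex cannot lie in $\SO_2(2)$ at all.

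The first substantive step is to pin down the extreme points of $\SO_2(2)$. Here I would invoke the finite-round Kraus normal form for stabiliser operations established in this work (Thm.~\ref{thm:kraus-decomposition-SO}), which guarantees that every operation in $\SO_2(2)$ is realised in at most $n=2$ adaptive rounds of Clifford unitaries and Pauli measurements. This collapses the a priori infinite space of adaptive (and, by Remark~\ref{rem:classical-control}, possibly non-computable) classical control strategies into a finite combinatorial description: the extreme stabiliser operations have Kraus operators built from Clifford-conjugated stabiliser projectors, and their Choi representations $\mathcal{J}(\mathcal{E})$ form an explicit finite list of points inside $\SP_4(2)\cap\mathrm{TP}_{2,2}(2)$. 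Enumerating this list is the concrete book-keeping part of the argument.

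The second step is the explicit construction of a separating vertex, carried out entirely in Choi space. I would search the vertex set of $\SP_4(2)\cap\mathrm{TP}_{2,2}(2)$ for a mixed stabiliser state $J$ satisfying the trace constraint $\Tr_1 J=\one/4$ that is extremal in this affine slice but does \emph{not} occur in the SO list from the previous step. The intuition is that the CSP condition only constrains the \emph{action} of the map on the stabiliser polytope, whereas being a stabiliser operation additionally forces the individual Kraus operators to be stabiliser-structured; the counter-example should be a channel whose required recombination of stabiliser states cannot be reproduced by any finite adaptive stabiliser protocol. Checking CSP-membership of the candidate is then the easy direction, since it only requires writing $J$ as a convex combination of $4$-qubit stabiliser states respecting $\Tr_1 J=\one/4$, equivalently verifying the facet inequalities of the polytope.

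The main obstacle is the non-membership claim $\mathcal{E}\notin\SO_2(2)$, because one must exclude \emph{every} adaptive stabiliser protocol, not merely the efficient ones. The finite-round normal form is precisely what makes this tractable: it reduces the exclusion to the finite verification that the chosen vertex $J$ is not among the finitely many SO extreme Choi matrices. Equivalently, and perhaps more transparently for the write-up, I would try to distil from this analysis a single linear functional on Choi space that is bounded on all SO extreme points yet strictly exceeded by $J$, producing an explicit separating witness in the spirit of a magic monotone. The delicate point throughout is keeping the $4$-qubit Clifford and stabiliser enumeration honest, so that no exotic two-round adaptive protocol is accidentally overlooked when certifying that $J$ lies outside $\SO_2(2)$.
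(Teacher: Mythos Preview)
Your high-level logic is sound: $\SO_2(2)\subseteq\CSP_2(2)$, both sets are convex, and a vertex of $\CSP_2(2)$ that fails to be a vertex of $\SO_2(2)$ cannot lie in $\SO_2(2)$ at all. This is also the skeleton of the paper's argument. The difference is entirely in how the two hard steps --- producing the vertex and excluding it from $\SO$ --- are carried out, and here your plan has a genuine gap.

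You defer both steps to an enumeration: list the extreme Choi matrices of $\SO_2(2)$ via the normal form of Thm.~\ref{thm:kraus-decomposition-SO}, list the vertices of $\SP_4(2)\cap\mathrm{TP}_{2,2}(2)$, and compare. But neither enumeration is actually performed, and as stated they are not tractable ``book-keeping''. The CSP polytope lives in the $255$-dimensional space of Hermitian $16\times16$ matrices, is cut out by an affine slice of the convex hull of $36{,}720$ four-qubit stabiliser states, and its vertex set is not known in closed form. On the SO side, Thm.~\ref{thm:kraus-decomposition-SO} only tells you that $\SO_2(2)$ is the convex hull of the channels~\eqref{eq:SO-syndrome-measurement}; it does \emph{not} hand you its extreme points, since many of those channels are themselves non-extremal (cf.~the Remark after Lemma~\ref{lem:so-in-ad} and Lemma~\ref{lem:NoContOperations-NoExtremality}). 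So ``compare the two vertex lists'' is not a finite check you have in hand --- you still need either the explicit counterexample or a structural obstruction, and your proposal supplies neither.

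The paper closes exactly this gap by replacing enumeration with two analytic reductions. First, it passes to the low-dimensional face $\AD_2\cap\CSP_2$ of almost-diagonal channels, which by Lemma~\ref{lem:diag-subpolytope} is isomorphic to a small polytope $\mathrm{P}_2$ of $4\times4$ matrices; there the explicit channel $\Lambda$ of Eq.~\eqref{eq:lambda-2-comp-basis} is shown to be the unique maximiser of $\sigma\mapsto\langle+|\sigma|+\rangle$ and hence a vertex of $\CSP_2$ (Lemma~\ref{lem:lambda-extremality-AD-qubit}). Second, instead of listing the extreme points of $\SO_2$, the paper proves a structural invariant (Thm.~\ref{thm:stab-op-invariance}): every extremal stabiliser operation without a Clifford dilation has a Pauli operator in its kernel. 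Lemma~\ref{lem:lambda-invariance} then checks directly that $\Lambda$ has neither a Clifford dilation nor a Pauli in its kernel, which rules it out of $\SO_2$ without any enumeration. Your suggestion to ``distil a single linear functional'' is in spirit what Lemma~\ref{lem:lambda-extremality-AD-qubit} does on the $\AD_2$ face, but the separation from $\SO$ is achieved by the kernel argument, not by bounding a witness over an enumerated list.
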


Concretely, we will establish that the following two-qubit channel is completely stabiliser-preserving, but not a stabiliser operation:
\begin{align}\label{eq:lambda-2-comp-basis}
	\Lambda(\rho) := 
	\rho_{00,00} \ketbra{++}{++} +  
	\sum_{x\in\{01,10,11\}} \rho_{x,x} \ketbra{x}{x} + 
	\frac 1 2 \sum_{\substack{x, y \in \{01,10,11\}\\x\neq y}} \rho_{x,y} \ketbra{x}{y},
\end{align}
where $\ket+=\frac1{\sqrt 2}(\ket 0 + \ket 1)$ and $\rho_{x,y} = \langle x|\rho|y \rangle$.

The intuition behind the counter-example is as follows:
First, consider a projective measurement that distinguishes between $\ket{00}$ and its orthocomplement.
It is plausible that one cannot implement such a measurement using stabiliser operations -- if for no other reason than that Pauli measurements lead to Kraus operators whose rank is a power of two.
The channel $\Lambda$ may be realized by such a measurement, followed by the application of Hadamard gates on all qubits when the outcome $\ket{00}$ is obtained, or a partial dephasing operation in the alternate case.
It turns out that the second step makes $\Lambda$ CSP, while the no-go argument concerning the measurement remains valid.

Appendix~\ref{sec:on-lambda} describes some properties of $\Lambda$ that are not directly required for the proof below.

In the proof, we will use the fact that the channel (\ref{eq:lambda-2-comp-basis}) is an extreme point in the convex set $\CSP_2 \equiv \CSP_2(2)$.
To show this, it turns out to be sufficient to restrict attention to the intersection of $\CSP_2$ with a fairly low-dimensional affine space -- a step that greatly simplifies the description of the convex geometry.

Concretely, we define the convex set of \emph{almost-diagonal channels} $\AD_2$ as the set of two-qubit quantum channels $\mathcal{E}$ that act on the pure states of the computational basis in the following way:
\begin{align}
	\mathcal{E}(|00\rangle\langle 00|) =  \ket{++}\bra{++}, \qquad
	\mathcal{E}(|x\rangle\langle x|) =  |x\rangle\langle x| \quad x\in\{01,10,11\}.
\label{eq:def-AD} 
\end{align}
By comparison with Eq.~\eqref{eq:lambda-2-comp-basis} it is immediate that $\Lambda$  lies in $\CSP_2\cap \AD_2$.
This intersection is isomorphic, as a convex set, to a subpolytope of the two qubit stabiliser polytope.

\begin{definition}
Let 
$\mathrm{P}_{2}$ 
be the polytope of complex $4\times 4$ matrices $\sigma$ that 
(1) are a convex combination of two-qubit stabiliser states, and 
(2), when expressed in the $\{\ket{00},\ket{01},\ket{10}, \ket{11}\}$-basis, are of the form
\begin{align*}
	\sigma = 
	\frac13
	\begin{pmatrix}
		0 & 0 & 0 & 0 \\
		0 & 1 & * & * \\
		0 & * & 1 & * \\
		0 & * & * & 1 
	\end{pmatrix}
\end{align*}
with $*$'s denoting arbitrary complex values.
\end{definition}

\begin{lemma}\label{lem:diag-subpolytope}
A map $\mathcal{E}$ lies in $\CSP_2 \cap \AD_2$ if and only if there exists a $\sigma\in\mathrm{P}_2$ such that
\begin{align}
				\mathcal{E}(\rho) = 3\,\sigma \circ \rho + \langle 00|\rho|00\rangle \, \ket{++}\bra{++},
\end{align}
where $\circ$ is the \emph{Hadamard} (or element-wise) product.
In particular, the polytopes $\CSP_2\cap \AD_2$ and $\mathrm{P}_2$ are isomorphic.
\end{lemma}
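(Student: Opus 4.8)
The plan is to move to the Choi--Jamio{\l}kowski picture and reduce everything, via Lemma~\ref{lem:csp-choi}, to a single membership question about $\SP_4$. First I would compute the Choi matrix of a channel of the asserted form. Writing $V\colon\ket{x}\mapsto\ket{xx}$ for the ``computational-copy'' isometry on two qubits and $R:=\ketbra{++}{++}\otimes\ketbra{00}{00}$, a direct expansion of $\mathcal{J}(\mathcal{E})=(\mathcal{E}\otimes\id)(\ketbra{\phi^+}{\phi^+})$ gives
\begin{equation*}
 \mathcal{J}(\mathcal{E}) = \tfrac34\, V\sigma V^\dagger + \tfrac14\, R ,
\end{equation*}
because the Hadamard-product part contributes $\sum_{x,y}\sigma_{x,y}\ketbra{xx}{yy}=V\sigma V^\dagger$ while the rank-one term contributes $R$. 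Note that the supports are orthogonal: $V\sigma V^\dagger$ lives in the input sector $\{01,10,11\}$ (as $\sigma$ has a vanishing $00$-line), whereas $R$ has input $\ket{00}$.

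Next I would show that membership in $\AD_2$ already forces this block structure, yielding the affine bijection $\mathcal{E}\leftrightarrow\sigma$. Complete positivity makes $\mathcal{J}(\mathcal{E})$ a PSD block matrix, with blocks indexed by the input basis, whose diagonal blocks are pinned by \eqref{eq:def-AD} to the rank-one projectors onto $\ket{++}$ (for $00$) and $\ket{x}$ (for $x\neq00$). The range condition for PSD block matrices then forces every off-diagonal block to be a scalar multiple $c_{x,y}\ket{v_x}\bra{v_y}$ of the corresponding rank-one operators. Imposing $\tr_1\mathcal{J}(\mathcal{E})=\one/4$ and using $\braket{y}{++}=\tfrac12\neq0$ annihilates the entire $00$-row and $00$-column and fixes $c_{x,x}=\tfrac14$; setting $\sigma_{x,y}=\tfrac43 c_{x,y}$ for $x,y\neq00$ (and zero otherwise) produces exactly a matrix of the $\mathrm{P}_2$-shape. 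Thus, by Lemma~\ref{lem:csp-choi} (trace preservation being automatic), the whole claim collapses to the equivalence $\tfrac34 V\sigma V^\dagger+\tfrac14 R\in\SP_4\iff\sigma\in\SP_2$.

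For the direction $\sigma\in\SP_2\Rightarrow\mathcal{E}\in\CSP_2$ I would observe that $V$ is realised by the stabiliser operation ``append $\ket{00}$ and apply $\mathrm{CNOT}_{1\to3}\mathrm{CNOT}_{2\to4}$'', so $V\sigma V^\dagger\in\SP_4$; since $R$ is a product of stabiliser states, the convex combination lies in $\SP_4$. The converse is the crux. One cannot simply project the input onto $\spann\{\ket{01},\ket{10},\ket{11}\}$ to discard $R$, since that is a rank-$3$ projector and hence \emph{not} a stabiliser operation --- this is precisely the obstruction that powers the whole counter-example. The trick I would use is to measure the \emph{output} qubits in the complementary ($X$-)basis: because the output of $R$ is the definite $X$-eigenstate $\ket{++}$, the contaminating term contributes only to the outcome $\tilde s=(0,0)$, whereas on $V\sigma V^\dagger$ projecting the output onto the $X$-stabiliser state $\ket{s}$ and tracing it out collapses the input register to a positive multiple of $Z(\tilde s)\,\sigma\,Z(\tilde s)$. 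Post-selecting on the three outcomes $\tilde s\neq(0,0)$ and applying the Pauli correction $Z(\tilde s)$ --- all stabiliser-cone-preserving operations --- extracts $\tfrac{9}{16}\sigma$ from $\mathcal{J}(\mathcal{E})\in\SP_4$. Hence $\sigma$ lies in the cone generated by stabiliser states, and since $\tr\sigma=1$ it is a genuine convex combination, i.e.\ $\sigma\in\SP_2$.

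The step I expect to be the main obstacle is exactly this last one: the orthogonal correction term $R$ cannot be removed by any stabiliser operation, so a naive ``invert $V$ and discard'' extraction only yields $\sigma$ up to admixture with a fixed stabiliser state, which is too weak. The resolution is to route $R$ entirely into a single measurement branch by working in the conjugate basis, after which the remaining branches isolate $\sigma$. Once the equivalence $\mathcal{J}(\mathcal{E})\in\SP_4\iff\sigma\in\SP_2$ is in hand, the affine bijection $\mathcal{E}\leftrightarrow\sigma$ identifies $\CSP_2\cap\AD_2$ with $\mathrm{P}_2$ as convex polytopes.
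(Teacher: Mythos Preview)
Your proof is correct and takes a genuinely different route from the paper in the ``only if'' direction. The paper works term-by-term on a stabiliser decomposition $\mathcal{J}(\mathcal{E})=\sum_s p_s\ketbra{s}{s}$: using the $\AD_2$ constraints together with Proposition~\ref{prop:well-known stab basis}, it shows that every $\ket{s}$ with $p_s>0$ is either $\ket{++}\ket{00}$ (necessarily with weight $1/4$) or of the diagonal form $\sum_x\tilde s(x)\ket{xx}$ with $\tilde s(00)=0$; the $\ket{\tilde s}$ are then seen to be two-qubit stabiliser states via the CNOT identity $\ket{\tilde s}\otimes\ket{00}=CX_{1,3}CX_{2,4}\ket{s}$, and assembling them yields $\sigma$. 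You instead decouple the two ingredients: first you use only complete positivity and trace preservation (via the range condition for PSD block matrices and the observation $\braket{y}{++}\neq 0$) to force the Choi matrix into the shape $\tfrac34\,V\sigma V^\dagger+\tfrac14\,R$; only then do you invoke the CSP hypothesis, extracting $\sigma$ by measuring the \emph{output} register in the $X$-basis so that $R$ is confined to the single discardable outcome $\ket{++}$. Both arguments ultimately rest on the same structural fact --- the partial contraction of a bipartite stabiliser state against a stabiliser state is proportional to a stabiliser state (Proposition~\ref{prop:well-known stab basis}) --- but you apply it in one stroke as ``projection onto an $X$-eigenstate preserves the stabiliser cone'', whereas the paper applies it to classify each constituent $\ket{s}$ individually. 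Your separation of the CP/TP step from the CSP step is conceptually clean and makes the affine isomorphism with $\mathrm{P}_2$ transparent; the paper's per-term analysis yields the mildly sharper information that \emph{every} stabiliser decomposition of $\mathcal{J}(\mathcal{E})$ already has the diagonal-plus-$R$ form.
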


\begin{proof}
\textbf{``Only if'':}
Assume that $\mathcal{E}$ is CSP, i.e.\ its Choi state is expressible as
\begin{align}\label{eqn:lambda decomp}
    \mathcal{J}(\mathcal{E}) = \sum_{s\in\stab(4)} p_s |s\rangle\langle s|.
\end{align}
The Choi state has the property that
\begin{align}\label{eqn:choi property}
    \mathcal{E}(|x\rangle\langle y|)
    =
    4 
			(\one\otimes\bra x)\,
    \mathcal{J}(\mathcal{E}) 
    \,(\one\otimes \ket y)
    = 
    4 
			\sum_s p_s
    \,
    (\one\otimes\bra x) |s\rangle
    \,
    \langle s|(\one\otimes \ket y)
    \quad
    \forall x,y \in \FF_2^{2}.
\end{align}
Evaluating Eq.~(\ref{eqn:choi property}) on the diagonal and using Eq.~(\ref{eq:def-AD})  implies that, for all $s$ with $p_s\neq 0$,
\begin{align}
	(\one\otimes\bra{00}) |s\rangle  &\propto \ket{++}, \label{eqn:0 to +} \\
	(\one\otimes\bra{x}) |s\rangle  &\propto |x \rangle \quad \forall x \neq 00, \label{eqn:x to x} 
\end{align}
where $\propto$ denotes equality up to a proportionality constant including $0$.

There must be at least one $|s\rangle$ with $(\one\otimes\bra{00}) |s\rangle \neq 0$.
We claim that this implies $|s\rangle=|{++}\rangle|{00}\rangle$ and $p_s = \frac14$.
Indeed, assume for the sake of reaching a contradiction that $\ket s$ has Schmidt rank larger than one.
Then for at least one $x\in\FF_2^2$, the contraction $(\one\otimes\bra{x}) |s\rangle $ is not proportional to $\ket{++}$.
By a well-known property of stabiliser states (c.f.~Prop.~\ref{prop:well-known stab basis}), $(\one\otimes\bra{x}) |s\rangle$ is then orthogonal to $|++\rangle$, which contradicts (\ref{eqn:x to x}).
Thus $|s\rangle$ is a product state. 
The claimed form follows from (\ref{eqn:0 to +}), and the value of $p_s$ from (\ref{eqn:choi property}). 

We now treat the terms $\ket{s}$ different from $\ket{++}\ket{00}$.
Equations~(\ref{eqn:0 to +}, \ref{eqn:x to x}) and Proposition~\ref{prop:well-known stab basis} imply that these stabiliser states are ``diagonal in the computational basis'' in the sense that
\begin{align*}
|s\rangle = 
\sum_{x\in\FF_2^2} \tilde s(x) |x \rangle |x\rangle
\qquad\text{for some $\tilde s: \FF_2^2\to\C$ with $\tilde s(00)=0$}.
\end{align*}
Define the $n$-qudit state $|\tilde s\rangle=\sum_x \tilde s(x)\ket x$.
Then $\ket{\tilde s}$ is orthogonal to $\ket{00}$.
It is also a normalised stabiliser state, because it arises from the action of a Clifford unitary on $\ket{s}$:
\begin{align*}
				\ket{\tilde s}\otimes\ket{00} = CX_{1,3}\,CX_{2,4} \ket{s},
\end{align*}
where $CX_{i,j}$ is the controlled-NOT gate with the $i$-th qubit controlling the $j$-th one.
Setting
\begin{align*}
    \sigma=
			\frac43
			\, \sum_{s\neq |{++}\rangle|00\rangle} p_s \ketbra{\tilde s}{\tilde s} \in \mathrm{P}_{2}, 
\end{align*}
we get that for all $(x,y)\neq(00,00)$,
\begin{align*}
    \mathcal{E}(\ketbra{x}{y})
			&=
			4
    \sum_s
    p_s
    \,
    (\one\otimes\bra x) \ketbra{s}{s} (\one\otimes \ket y) \\
			&=
			4
			\sum_{s\neq \ket{++}\ket{00}}
    p_s
    \,
    \tilde s(x)
    \overline{\tilde s(y)}
    \ketbra{x}{y} \\
			&=
			3
			\,
    \sigma \circ
    \ketbra{x}{y}.
\end{align*}

\textbf{``If'':} The construction above can be reversed straight-forwardly.
\end{proof}

Under the correspondence given in Lemma~\ref{lem:diag-subpolytope}, the channel $\Lambda$ defined in Eq.~(\ref{eq:lambda-2-comp-basis}) corresponds to the matrix
\begin{align}\label{eqn:lambda-def}
		\lambda=
		\frac16
		\begin{pmatrix}
		0 & 0 & 0 & 0 \\
		0 & 2 & 1 & 1 \\
		0 & 1 & 2 & 1 \\
		0 & 1 & 1 & 2
		\end{pmatrix},
\end{align}
Using the relative simplicity of the polytope $\mathrm{P}_{2}$, we can now show that $\Lambda$ is an extremal CSP channel.

\begin{lemma}
\label{lem:lambda-extremality-AD-qubit}
The matrix $\lambda$ in Eq.~(\ref{eqn:lambda-def}) is a vertex of $\mathrm{P}_2$.
What is more, $\Lambda$ is a vertex of $\CSP_2$.  
\end{lemma}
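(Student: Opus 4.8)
The plan is to prove the two assertions in turn, reducing the second to the first. For the first, that $\lambda$ is a vertex of $\mathrm{P}_2$, I would identify the smallest face of the stabiliser polytope $\SP_2$ that contains $\lambda$ and show that this face meets the affine space cut out by the diagonal constraints defining $\mathrm{P}_2$ in the single point $\lambda$. The starting observation is the explicit decomposition
\begin{align*}
  \lambda = \tfrac13\bigl(\ketbra{\Psi^+}{\Psi^+} + \ketbra{+1}{+1} + \ketbra{1+}{1+}\bigr),
\end{align*}
where $\ket{\Psi^+}=\tfrac1{\sqrt2}(\ket{01}+\ket{10})$, $\ket{+1}=\tfrac1{\sqrt2}(\ket{01}+\ket{11})$ and $\ket{1+}=\tfrac1{\sqrt2}(\ket{10}+\ket{11})$ are two-qubit stabiliser states, which one checks reproduces the off-diagonal entries $\tfrac16$ and the diagonal $(0,\tfrac13,\tfrac13,\tfrac13)$. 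Since the $(00,00)$-entry of $\lambda$ vanishes and $\lambda$ is positive semidefinite, every stabiliser state occurring with positive weight in any decomposition of $\lambda$ must be orthogonal to $\ket{00}$.

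The key step is to show that the displayed decomposition is the \emph{unique} representation of $\lambda$ as a convex combination of stabiliser states. I would do this by enumerating the two-qubit stabiliser states orthogonal to $\ket{00}$ — there are only finitely many (the three remaining computational basis states, the eight product states in which one tensor factor is $\ket1$ and the other is a Pauli-$X$ or Pauli-$Y$ eigenstate, and the four odd-parity Bell states) — and solving the resulting linear system for the weights. Matching the prescribed off-diagonal entries and fixed diagonal forces all weights except those of the three states above to vanish. This establishes that the minimal face of $\SP_2$ containing $\lambda$ is the triangle $F=\conv\{\ketbra{\Psi^+}{\Psi^+},\ketbra{+1}{+1},\ketbra{1+}{1+}\}$ and that $\lambda$ is its barycentre. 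A short computation with the diagonals $(0,0,\tfrac12,\tfrac12)$, $(0,\tfrac12,0,\tfrac12)$, $(0,\tfrac12,\tfrac12,0)$ of the three vertices then shows that the only convex combination of them with diagonal $(0,\tfrac13,\tfrac13,\tfrac13)$ is the barycentre itself, so $F$ meets the defining affine space of $\mathrm{P}_2$ only in $\lambda$. Hence $\lambda$ is a vertex of $\mathrm{P}_2$. I expect this uniqueness-of-decomposition step to be the main obstacle, as it is what rules out the many superficially possible decompositions and pins down the face; note in particular that the positive-semidefinite constraint alone is not active at $\lambda$, so the binding constraints must come from genuine stabiliser facets.

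For the second assertion, I would lift extremality from the slice $\CSP_2\cap\AD_2\cong\mathrm{P}_2$ to the full polytope $\CSP_2$ using the fact that $\Lambda$ sends computational basis projectors to \emph{pure} states. Suppose $\Lambda=\tfrac12(\mathcal E_1+\mathcal E_2)$ with $\mathcal E_1,\mathcal E_2\in\CSP_2$. Since every CSP map is completely positive and trace-preserving, each $\mathcal E_i(\ketbra{x}{x})$ is a state; evaluating on $\ketbra{x}{x}$ exhibits the pure state $\Lambda(\ketbra{x}{x})$ as a convex combination of these, and as a pure state is an extreme point of state space, both summands must equal $\Lambda(\ketbra{x}{x})$. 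Thus $\mathcal E_1$ and $\mathcal E_2$ agree with $\Lambda$ on all computational basis projectors, i.e.\ $\mathcal E_1,\mathcal E_2\in\AD_2$. By the isomorphism of Lemma~\ref{lem:diag-subpolytope} they correspond to elements of $\mathrm{P}_2$, and the vertex property of $\lambda$ established above forces $\mathcal E_1=\mathcal E_2=\Lambda$. Therefore $\Lambda$ is a vertex of $\CSP_2$.
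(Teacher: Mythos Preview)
Your proposal is correct; the second claim is argued exactly as in the paper (purity of $\Lambda(\ketbra{x}{x})$ forces any convex summands into $\AD_2$, then reduce to $\mathrm{P}_2$). For the first claim you take a different route.

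The paper does not establish uniqueness of the stabiliser decomposition of $\lambda$ by enumeration. Instead it exhibits the linear functional $L(\sigma)=\sandwich{+}{\sigma}{+}$ and observes that among stabiliser states $\ket{s}$ orthogonal to $\ket{00}$ one has $|\braket{+}{s}|^2\le 1/2$, with equality precisely for your three states $\ket{\Psi^+},\ket{+1},\ket{1+}$. Since $L(\lambda)=1/2$, any $\sigma\in\mathrm{P}_2$ with $L(\sigma)=1/2$ must be supported on these three states, and the three diagonal constraints $\sigma_{xx}=1/3$ then force the uniform weights. This simultaneously yields your minimal face $F$ and the uniqueness of the barycentric decomposition in a single stroke, without solving the $15$-variable linear system. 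Your approach is logically sound and in fact proves the slightly stronger statement that the minimal face of $\SP_2$ through $\lambda$ is the triangle $F$; the price is the bookkeeping of the enumeration, which the linear-functional argument sidesteps. Note also that the functional $L$ is exactly what the paper later uses in the general $n$-qudit case (Lemma~\ref{lem:lambda-extremality-AD}), so their method is chosen with an eye toward that generalisation, whereas a direct enumeration would not scale.
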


\begin{proof}
We will establish the first claim by showing that $\lambda$ is the unique maximiser in $\mathrm{P}_2$ of the linear functional
\begin{align*}
	L: \mathrm{P}_2 \to \R,	\qquad
    \sigma
    \mapsto
    \langle +| \sigma | + \rangle
    =
    \sum_s p_s |\langle +| s\rangle|^2. 
\end{align*}
There are 15 stabiliser states $\ket{s}$ orthogonal to $\ket{00}$, given by
\begin{align*}
    \ket{01}, \qquad 2^{-1/2}\big(\ket{01} + \omega \ket{10}\big), \quad \omega\in\{1, -1, i, -i \},
\end{align*}
and their images under permutations of $\{ \ket{01}, \ket{10}, \ket{11} \}$.
	Among those, the inner product $|\langle +|s\rangle|^2$ attains its maximum (of $1/2$)
	exactly for the cases
$2^{-1/2}\big(\ket{01} +\ket{10}\big),
2^{-1/2}\big(\ket{01} + \ket{11}\big),
2^{-1/2}\big(\ket{10} + \ket{11}\big)$.
Among the linear combinations of their projection operators, a uniform mixture is the unique solution to the three constraints $\sigma_{x,x}=1/3$.
This solution is equal to $\lambda$.

To prove the second claim, assume that $\AD_2\cap\CSP_2\ni\mathcal{E} = p \mathcal{E}_1 + (1-p)\mathcal{E}_2$ for some CSP maps $\mathcal{E}_1$ and $\mathcal{E}_2$, and $p\in [0,1]$.
The extremality of the pure states on the right hand sides of Eq.~\eqref{eq:def-AD} forces $\mathcal{E}_1$ and $\mathcal{E}_2$ to fulfil the same constraints, i.e.~$\mathcal{E}_1,\mathcal{E}_2\in\AD_2\cap\CSP_2$.
Hence, a channel $\mathcal{E}\in\AD_2\cap\CSP_2$ is extremal in $\CSP_2$ if and only if it is extremal in the subpolytope $\AD_2\cap\CSP_2$.
\end{proof}

If $\Lambda$ was a stabiliser operation, Lem.~\ref{lem:lambda-extremality-AD-qubit} would imply that it is extremal in the convex set $\SO_2$.
This is because extremality of a point in a convex set implies extremality in every convex subset containing the point.
Our strategy now is to identify a property shared by all extremal stabiliser operations, and then to show that $\Lambda$ fails to posses it.

\begin{theorem}[Pauli invariance of extremal stabiliser operations]
 \label{thm:stab-op-invariance}
 Let $\mathcal O\in\SO_{2}$ be an extremal stabiliser operation that does not have a Clifford dilation.
 Then the kernel of $\mathcal O$ contains a Pauli operator.
\end{theorem}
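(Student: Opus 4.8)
The plan is to show that the kernel of $\mathcal O$ inherits a $\ZZ_2$-symmetry from the first genuine Pauli measurement in the protocol, and that this symmetry forces an anticommuting Pauli into the kernel.

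First I would invoke the normal form of Thm.~\ref{thm:kraus-decomposition-SO} to present $\mathcal O$ as a finite sequence of rounds, each consisting of a Clifford unitary followed by a Pauli measurement, acting on the input together with an ancilla prepared in a stabiliser state $\ketn{s}$. Since $\mathcal O$ has no Clifford dilation, at least one measurement must be \emph{genuine}, meaning its outcome statistics depend on the input. After absorbing the initial Cliffords and discarding the (removable) measurements preceding the first genuine one, I expect $\mathcal O$ to take the form
\[
 \mathcal O(\rho) = \sum_{\pm} \mathcal M_\pm\bigl( Q_\pm (\rho\otimes\ketbran{s}{s}) Q_\pm \bigr), \qquad Q_\pm = \tfrac12(\one \pm g),
\]
where $g$ is the measured Pauli and $\mathcal M_\pm$ are the subsequent stabiliser operations (absorbing the final discard). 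Writing $g = g_A\otimes g_B$ as a tensor product of an input Pauli $g_A$ and an ancilla Pauli $g_B$, the reduction should ensure that $g_B$ stabilises $\ketn{s}$ up to a phase and that $g_A\neq\one$.

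The core is then a short computation. Because $g_B\ketbran{s}{s}g_B=\ketbran{s}{s}$ and $g_A^\dagger=g_A$, $g_A^2=\one$ for qubit Paulis, conjugating the input by $g_A$ equals conjugating the joint state by $g$, i.e.\ $(g_A\rho g_A)\otimes\ketbran{s}{s} = g(\rho\otimes\ketbran{s}{s})g$. Since $Q_\pm g = \pm Q_\pm$, the post-measurement operators $Q_\pm(\rho\otimes\ketbran{s}{s})Q_\pm$ are invariant under this conjugation, and hence so is $\mathcal O$, giving $\mathcal O\circ\Ad_{g_A}=\mathcal O$. Now I take any Pauli $w(b)$ that anticommutes with $g_A$; such a $w(b)$ exists precisely because $g_A\neq\one$. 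Then $\Ad_{g_A}(w(b)) = g_A w(b) g_A = -w(b)$, so $\mathcal O(w(b)) = \mathcal O(-w(b)) = -\mathcal O(w(b))$, which forces $\mathcal O(w(b))=0$. Thus $w(b)\in\ker\mathcal O$, establishing the claim.

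The main obstacle is the reduction in the first paragraph, i.e.\ justifying that we may assume $g_A\neq\one$ with $g_B$ stabilising $\ketn{s}$; this is where extremality must be used. A measurement whose outcome is input-independent falls into one of two cases: either it is deterministic, because $g_B\in\pm\mathrm{Stab}(\ketn{s})$ and $g_A=\one$, so one branch carries zero weight and the round can be dropped; or its statistics are balanced and input-independent, because $g_B$ anticommutes with $\mathrm{Stab}(\ketn{s})$, in which case $\mathcal O$ splits as an even convex combination of two genuine stabiliser channels and extremality forces the two branches to coincide, again allowing the round to be removed. Iterating over the finitely many rounds, if no genuine input-dependent measurement ever occurred then all measurements would be removable and $\mathcal O$ would reduce to preparation, a Clifford, and a partial trace -- a Clifford dilation, contrary to hypothesis. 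Hence a first genuine measurement with $g_A\neq\one$ exists, and since the centraliser of a maximal stabiliser group is that group itself up to phases, its ancilla part $g_B$ necessarily stabilises $\ketn{s}$, closing the gap.
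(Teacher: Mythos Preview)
Your argument is correct and follows essentially the same route as the paper: reduce to the first ``genuine'' Pauli measurement by eliminating input-independent ones via extremality, then exhibit an anticommuting Pauli in the kernel. Two small remarks: your invocation of Thm.~\ref{thm:kraus-decomposition-SO} is not really needed (the rounds presentation is just the definition of SO; the paper works directly from that), and your assertion that in the balanced case ``$\mathcal O$ splits as an even convex combination of two genuine stabiliser channels'' is precisely the content of the paper's Lemma~\ref{lem:inv-case-two}, which the paper proves separately --- you should either cite it or supply the short argument that $Q_\pm(\,\cdot\,\otimes\ketbran{s}{s})Q_\pm$ is, up to normalisation, a Clifford conjugation when $\ketn{s}$ is not an eigenstate of $g_B$.
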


The proof will make use of the following lemma.	
It says that the operation ``preparing an ancilla stabiliser state and performing a Pauli measurement jointly on an input and the ancilla'' can be replaced by a random Clifford channel, if the stabiliser state is not an eigenstate of the Pauli operator.
(One could also approach the statement through the theory of quantum error correction.
In this language, the measured Pauli is a correctable error for the stabiliser code $(\CC^2)^{\otimes n}\otimes |s\rangle$, and the Clifford unitaries that appear are the ones correcting the projections onto the eigenspaces of the Pauli operator.)

\begin{lemma}
\label{lem:inv-case-two} 
	Let $w(a)\otimes w(b)$ be an $(n+k)$-qubit Pauli operator.
Denote the projectors onto the two eigenspaces of $w(a)\otimes w(b)$ by $P_\pm$.
	Let $\ket s$ be a $k$-qubit stabiliser state that is \emph{not} an eigenstate of $w(b)$.
	Then there are two $(n+k)$-qubit Clifford unitaries $U_\pm$ such that, for all $n$-qubit states $|\psi\rangle$, we have 
	$P_\pm \ket\psi \otimes \ketn{s} = \frac1{\sqrt{2}} U_\pm \ket\psi \otimes \ketn{s}$.
\end{lemma}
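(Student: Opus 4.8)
The plan is to realize the rescaled eigenprojector $P_\pm$, acting on the fixed ancilla $\ket s$, as a single Clifford unitary, by manufacturing a logical qubit out of the ancilla register. First I would record the elementary qubit facts: $w(a)$ and $w(b)$ are Hermitian and square to $\one$, so $w(a)\otimes w(b)$ has eigenvalues $\pm1$ and $P_\pm=\tfrac12(\one\pm w(a)\otimes w(b))$. Applied to a product input this gives
\begin{equation*}
  P_\pm\big(\ket\psi\otimes\ket s\big)
  =\tfrac12\big(\ket\psi\otimes\ket s\pm w(a)\ket\psi\otimes w(b)\ket s\big),
\end{equation*}
so the whole question is controlled by how $w(b)$ acts on the ancilla.

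Next I would extract a logical qubit from the hypothesis. Let $S$ be the rank-$k$ stabiliser group of the $k$-qubit state $\ket s$. Since $\ket s$ is \emph{not} an eigenstate of $w(b)$, the Pauli $w(b)$ cannot commute with all of $S$ — otherwise maximality of $S$ would force $w(b)\in S$ up to a phase, making $\ket s$ an eigenstate. Hence there is $g\in S$ anticommuting with $w(b)$. Writing $\ket{s_1}:=w(b)\ket s$, the relations $g\ket s=\ket s$, $g\ket{s_1}=-\ket{s_1}$ and $w(b)\ket{s_1}=\ket s$ exhibit $g$ and $w(b)$ as logical $\bar Z$ and $\bar X$ on the orthonormal pair $\ket s,\ket{s_1}$; orthonormality is immediate since $\langle s|w(b)|s\rangle=\langle s|w(b)g|s\rangle=-\langle s|gw(b)|s\rangle=-\langle s|w(b)|s\rangle=0$.

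The construction then uses two Clifford ingredients. The ancilla operator $\bar H_\pm:=\tfrac{1}{\sqrt2}(g\pm w(b))$ is a Hermitian unitary (using $g^2=w(b)^2=\one$ and $\{g,w(b)\}=0$) and is Clifford, being the Hadamard-type gate built from the anticommuting pair $g,w(b)$; crucially it sends $\ket s\mapsto\tfrac{1}{\sqrt2}(\ket s\pm\ket{s_1})$ \emph{on the nose}, with no stray global phase. The second ingredient is the controlled-Pauli $C:=\one\otimes\tfrac{\one+g}{2}+w(a)\otimes\tfrac{\one-g}{2}$, which applies $w(a)$ conditioned on the $g=-1$ eigenspace; since $g$ and $w(a)$ are commuting Hermitian Paulis, $C$ is a Clifford unitary (a direct generalisation of $\mathrm{CNOT}$). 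Setting $U_\pm:=C\,(\one\otimes\bar H_\pm)$ and evaluating on $\ket\psi\otimes\ket s$ — first produce $\ket\psi\otimes\tfrac{1}{\sqrt2}(\ket s\pm\ket{s_1})$, then let $C$ act using $g\ket s=\ket s$ and $g\ket{s_1}=-\ket{s_1}$ — yields $\tfrac{1}{\sqrt2}(\ket\psi\otimes\ket s\pm w(a)\ket\psi\otimes\ket{s_1})=\sqrt2\,P_\pm(\ket\psi\otimes\ket s)$, i.e.\ $P_\pm(\ket\psi\otimes\ket s)=\tfrac{1}{\sqrt2}U_\pm(\ket\psi\otimes\ket s)$ for every $\ket\psi$, which is exactly the claim.

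The routine part is verifying the two Clifford assertions and keeping track of phases so the identity holds as stated rather than up to a scalar. The main conceptual step is the second paragraph: converting the non-eigenstate hypothesis into an anticommuting stabiliser generator $g$, which simultaneously delivers the orthogonality of $\ket s$ and $w(b)\ket s$ and supplies the logical $\bar Z$ used to control $w(a)$. Everything downstream of that is bookkeeping.
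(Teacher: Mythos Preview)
Your proof is correct. Both your argument and the paper's exploit the same underlying observation --- that $\ket s$ and $w(b)\ket s$ span a ``logical qubit'' on which the projector $P_\pm$ acts as a controlled-$w(a)$ following a logical Hadamard --- but the executions differ in a meaningful way.

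The paper's proof first performs a \emph{reduction to normal form}: it conjugates by a $k$-qubit Clifford $V$ sending $\ket s\mapsto\ketn{0^k}$ and $w(b)\ket s\mapsto\ketn{1}\ketn{0^{k-1}}$, and by an $n$-qubit Clifford sending $w(a)\mapsto Z_1$, then writes the answer in this canonical case as $U_+=CZ_{(n+1),1}H_{n+1}$ and $U_-=CZ_{(n+1),1}H_{n+1}X_{n+1}$. Your proof is \emph{intrinsic}: you locate an anticommuting stabiliser element $g\in S$ directly from the non-eigenstate hypothesis, and build the required unitaries in place as $U_\pm=\big(\one\otimes\tfrac{\one+g}{2}+w(a)\otimes\tfrac{\one-g}{2}\big)\big(\one\otimes\tfrac{1}{\sqrt2}(g\pm w(b))\big)$, without ever changing coordinates. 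The two constructions are Clifford-conjugate to one another (your $g$ and $w(b)$ become $Z_{n+1}$ and $X_{n+1}$ after the paper's reduction), so the content is the same; your route buys a cleaner, coordinate-free formula and makes the phase bookkeeping transparent, while the paper's route makes it immediately obvious that the resulting gate is Clifford because it is literally a product of named Clifford gates. Your verification that $\tfrac{1}{\sqrt2}(g\pm w(b))$ and the controlled-$w(a)$ are Clifford is correct (four commutation cases for the former, standard controlled-Pauli argument for the latter), so nothing is missing.
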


\begin{proof}
	There is a $k$-qubit Clifford unitary $V$ which maps 
	$\ketn{s}\mapsto\ketn{0^k}$ and $(w(b)\ketn{s})\mapsto \ketn{1}\ketn{0^{k-1}}$.
  There is also an $n$-qubit Clifford $U$ such that $U w(a) U^\dagger=Z_1$.
	It thus suffices to show the claim for the special case $w(a)=Z_1$ and $w(b)\ketn{0^k} = \ketn{1}\ketn{0^{k-1}}$.
	In terms of a controlled $Z$-gate $CZ_{(n+1),1}$ (first ancilla qubit controlling the first input qubit):
  \begin{equation}
	P_\pm (\ket{\psi}\otimes\ketn{0^k}) = \frac12\Big[\ket\psi\otimes \ket 0 \pm (Z_1\ket\psi)\otimes \ket{1}\Big]\otimes \ketn{0^{k-1}}
	=
	\frac1{\sqrt 2} \Big[CZ_{(n+1),1} \ket\psi\ket\pm\Big] \otimes \ketn{0^{k-1}}.
  \end{equation}
	Hence, we can choose $U_+ = CZ_{(n+1),1}H_{n+1}$ and $U_- = CZ_{(n+1),1}H_{n+1}X_{n+1}$ where $H_{n+1}$ and $X_{n+1}$ are the Hadamard and $X$ gate acting on the first ancilla qubit, respectively.
\end{proof}

\begin{proof}[Proof of Theorem~\ref{thm:stab-op-invariance}]
Consider an implementation of $\mathcal{O}$ using elementary Clifford operations.
By extremality, we may assume that no classical randomness is used.
Thus the implementation must contain at least one Pauli measurement, for else $\mathcal{O}$ would have a Clifford dilation.
Propagating the first Pauli measurement past preceding Clifford unitaries if necessary, there is no loss of generality in assuming that the implementation starts by preparing $k$ ancilla qubits in a stabiliser state $|s\rangle$ and then immediately measures an $(n+k)$-qubit Pauli operator $w(a)\otimes w(b)$ with $a\in\F_2^{2n}$ and $b\in\F_2^{2k}$. 

We will show now that one may in fact assume that $a\neq 0$ and $b=0$, i.e.\ that the implementation starts by measuring a non-trivial Pauli without involving the ancillas.

Indeed, if $a=0$, the measurement only acts on the ancilla systems.
We can thus write $\mathcal O = p_1 \mathcal O_1 + p_{-1} \mathcal O_{-1}$, where $\mathcal O_\pm$ are the operations conditioned on the outcome, and the probabilities $p_\pm$ do not depend on the input state.
Extremality implies that either $\mathcal O_1 = \mathcal O_{-1}$ or only one of the $p_\pm$ differs from $0$.
Hence one can eliminate the measurement from the implementation and restart the proof.
Iterating this argument if necessary, we will eventually obtain a Pauli measurement with $a\neq 0$, as $\mathcal{O}$ does not have a Clifford dilation.

Next assume that $b\neq 0$.
First consider the case where $\ket s$ is not an eigenstate of $w(b)$.
By Lemma~\ref{lem:inv-case-two}, the measurement can be replaced by a process that applies one of two Clifford unitaries, each with probability $1/2$.
Arguing as above, this process either contradicts extremality or can be eliminated.
Thus we may assume that $\ket s$ is an eigenstate of $w(b)$. 
In this case, the ancilla system affects the measurement process only by changing the labels of the measurement results (specifically by multiplying them with the eigenvalue).
Absorbing this deterministic relabelling into any classical control, we may set $b=0$.

Let $P_\pm =\frac12(\one \pm w(a))$ be the projections onto the eigenspaces of $w(a)$.
Choose any two-qubit Pauli operator $w(u)$ that anti-commutes with $w(a)$.
Using the above expression for $P_\pm$, one finds that
\begin{equation*}
			P_+ w(u) P_+ + P_{-} w(u) P_{-} = 0
\end{equation*}
and thus $w(u)\in\ker\mathcal{O}$.
\end{proof}

The following lemma is thus sufficient to establish Theorem~\ref{thm:main-minimal}.
 
\begin{lemma}
\label{lem:lambda-invariance}
	Let $\Lambda$ be as in Eq.~\eqref{eq:lambda-2-comp-basis}.
	 Then $\Lambda$ has no Clifford dilation, and $\ker\Lambda$ does not contain a Pauli operator.
\end{lemma}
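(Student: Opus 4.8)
The plan is to treat the two assertions separately, since they rest on different structural features of $\Lambda$. I would dispatch the kernel claim first as the routine part, then spend the real effort on ruling out a Clifford dilation.

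For the kernel statement, I would read off $\ker\Lambda$ directly from Eq.~\eqref{eq:lambda-2-comp-basis}. Examining the output matrix entries, $\Lambda(\rho)=0$ forces $\rho_{00,00}=0$ (the $(00,00)$ output entry receives a contribution only from the $\ketbra{++}{++}$ term), and then $\rho_{x,y}=0$ for all $x,y\in\{01,10,11\}$; the entries $\rho_{00,x}$ and $\rho_{x,00}$ with $x\neq 00$ are annihilated and hence unconstrained. Thus a matrix lies in $\ker\Lambda$ iff its $(00,00)$ entry and its entire $\{01,10,11\}\times\{01,10,11\}$ block vanish. Now any Pauli operator $w(a)$ is a monomial matrix, with exactly one nonzero entry per row and per column. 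To meet the kernel condition, each of the three rows $01,10,11$ would have to place its unique nonzero entry in the single column $00$; but that column can host only one nonzero entry. This is impossible, so no Pauli---nor any phase multiple $\tau^k w(a)$, since rescaling preserves the nonzero pattern---lies in $\ker\Lambda$.

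For the absence of a Clifford dilation, the key observation is that $\Lambda$ sends two computational-basis projectors to pure states of incompatible stabiliser type: $\Lambda(\ketbra{00}{00})=\ketbra{++}{++}$ and $\Lambda(\ketbra{01}{01})=\ketbra{01}{01}$. Suppose toward a contradiction that $\Lambda(\rho)=\tr_{\mathrm{env}}[U(\rho\otimes\ketbran{s}{s})U^\dagger]$ for a Clifford $U$ and stabiliser $\ketn{s}$, and set $V\ket{\psi}:=U(\ket{\psi}\otimes\ketn{s})$, so that $V$ is a Stinespring isometry for $\Lambda$. Since the two outputs above are rank one, the vectors $V\ket{00}$ and $V\ket{01}$ must be product states across the output/environment cut with the prescribed output factors, i.e.\ $V\ket{00}=\ket{++}\otimes\ketn{\chi_0}$ and $V\ket{01}=\ket{01}\otimes\ketn{\chi_1}$ for some environment states $\ketn{\chi_0},\ketn{\chi_1}$. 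The Clifford structure then relates these vectors by a Pauli: writing $\ket{01}=(I\otimes X)\ket{00}$ on the input and conjugating by $U$ gives $V\ket{01}=\tilde P\,V\ket{00}$ with $\tilde P:=U\big((I\otimes X)\otimes\one\big)U^\dagger$ a Pauli on the joint output--environment system. Factorising $\tilde P=P^{\mathrm{out}}\otimes P^{\mathrm{env}}$ (every Pauli on a tensor product of qubits splits this way up to a phase) and comparing the two product states forces $P^{\mathrm{out}}\ket{++}\propto\ket{01}$. This is impossible: $\ket{++}$ is stabilised by the $X$-type group $\langle X\otimes I,\,I\otimes X\rangle$ while $\ket{01}$ is stabilised by the $Z$-type group $\langle Z\otimes I,\,I\otimes Z\rangle$, and Pauli conjugation can only flip signs of stabiliser generators, never turn $X$-type generators into $Z$-type ones (equivalently, each single-qubit Pauli maps $\ket{+}$ into $\{\ket{+},\ket{-}\}$, never to $\ket{0}$ or $\ket{1}$). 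The contradiction shows $\Lambda$ has no Clifford dilation.

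I expect the main obstacle to lie in the second part, specifically in making the reduction to the relation $V\ket{01}=\tilde P\,V\ket{00}$ fully rigorous: one must argue that purity of the two outputs genuinely forces the product form of $V\ket{00}$ and $V\ket{01}$ for the given (possibly non-minimal) Clifford dilation, and that the ensuing factorisation $\tilde P=P^{\mathrm{out}}\otimes P^{\mathrm{env}}$ over the output--environment bipartition is legitimate. The kernel computation and the monomial-matrix argument are routine by comparison.
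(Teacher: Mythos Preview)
Your proof is correct. Both parts are sound; the worry you flag about the Stinespring argument is not an issue, since purity of a partial trace of a pure state always forces a product form (regardless of whether the dilation is minimal), and any multi-qubit Pauli factorises across an arbitrary bipartition up to a global phase.

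The kernel argument is essentially the paper's, phrased differently: the paper observes that every element of $\ker\Lambda$ has rank at most $2$, which immediately excludes the full-rank Pauli operators; your monomial pigeonhole argument is an equivalent way to reach the same conclusion.

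For the Clifford-dilation part you take a genuinely different route. The paper argues on the Heisenberg side: a Clifford dilation would force $\Lambda^\dagger$ to send each Pauli to a scalar multiple of a Pauli, and then checks that the diagonal expectations $\langle xy|\Lambda^\dagger(Z_1)|xy\rangle$ are incompatible with this (the value at $\ket{00}$ forces an $X$- or $Y$-type factor, which is contradicted by the nonzero values at the other basis states). You instead work on the Schr\"odinger side with the Stinespring isometry $V$, use that pure outputs force $V\ket{00}$ and $V\ket{01}$ to be product, and derive a single-Pauli relation $P^{\mathrm{out}}\ket{++}\propto\ket{01}$ that is manifestly impossible. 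Your approach is arguably more pictorial and avoids computing adjoints; the paper's approach is slightly more mechanical and extends readily to checking other Pauli images if one wanted additional constraints. Both are short and self-contained.
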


\begin{proof}
	Assume, for the sake of reaching a contradiction, that $\Lambda$ does have a Clifford dilation.  
	Then $\Lambda^\dagger$ maps Pauli operators to Pauli operators, up to a phase. 
	From Eq.~\eqref{eq:lambda-2-comp-basis}:
    \begin{align}
     0 &= \sandwich{++}{Z_1}{++} = \tr\left( \ketbra{00}{00} \Lambda^\dagger(Z_1) \right), \label{eq:lambda-dilation-1}\\
	 (-1)^x &= \sandwich{xy}{Z_1}{xy} = \tr\left( \ketbra{xy}{xy} \Lambda^\dagger(Z_1) \right), \quad 
	 |xy\rangle \in \{|01\rangle, |10\rangle, |11\rangle\}.
	 %\forall (x,y)\in\F_2^2\setminus 0. 
	 \label{eq:lambda-dilation-2}
    \end{align}
	Eq.~\eqref{eq:lambda-dilation-1} implies that $\Lambda^\dagger(Z_1)$ is proportional to $X$ or $Y$ on at least one of the factors.
	This, however, is incompatible with Eq.~\eqref{eq:lambda-dilation-2}, which is the sought-for contradiction.

	One reads off Eq.~\eqref{eq:lambda-2-comp-basis} that $\Lambda(\ketbra{x}{y})=0$ if and only if $x = 0$ and $y\neq 0$ or $x\neq 0$ and $y=0$.
	That means that the kernel of $\Lambda$ consists of the operators of the form
	 \begin{equation}
	    \begin{pmatrix}
	        0 & * & * & * \\
	        * & 0 & 0 & 0 \\
	        * & 0 & 0 & 0 \\
	        * & 0 & 0 & 0 
	    \end{pmatrix},
	 \end{equation}
	 each of which has rank at most $2$.
	 In particular, this rules out Pauli operators.
\end{proof}

%%% =============================================
\section{General formulation}
%%% =============================================
\label{sec:gen-results}

In this section, we generalise Theorem \ref{thm:main-minimal} to our main result: $\CSP_n(d)$ strictly contains $\SO_n(d)$ for any (prime) dimension $d$ and system size $n\geq 2$.

\begin{theorem}[$\SO_n \subsetneq \CSP_n$]
\label{thm:main}
 For any prime dimension $d$, we have $\CSP_n(d) = \SO_n(d)$ if and only if $n=1$.
 In particular, the set of CSP maps is strictly larger than the set of stabiliser operations for $n\geq 2$.
\end{theorem}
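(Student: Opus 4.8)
The plan is to prove the two directions of the equivalence separately. For $n\geq 2$ I would establish the strict inclusion $\SO_n(d)\subsetneq\CSP_n(d)$ by generalising, to $n$ qudits of arbitrary prime dimension, the three ingredients that proved Theorem~\ref{thm:main-minimal}: the Pauli-invariance of extremal stabiliser operations (Theorem~\ref{thm:stab-op-invariance}), the explicit extremal counterexample $\Lambda$, and the combination argument. For $n=1$ I would prove the equality $\CSP_1(d)=\SO_1(d)$ directly. Since the chain $\SO_{n,m}(d)\subseteq\CSP_{n,m}(d)$ is already established, in each case only one containment is genuinely at stake.

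For the strict containment I would first lift Theorem~\ref{thm:stab-op-invariance} to qudits: an extremal stabiliser operation on $n$ qudits without a Clifford dilation has a Pauli in its kernel. The proof transfers once Lemma~\ref{lem:inv-case-two} is replaced by its qudit analog, in which a measurement of $w(a)\otimes w(b)$ with $\ket{s}$ not an eigenstate of $w(b)$ is replaced by the random application of one of $d$ Clifford unitaries, each with probability $1/d$, realised by a controlled-$Z$-type gate. Reducing as before to a nontrivial ancilla-free measurement of $w(a)$ with $a\neq 0$, one then picks any Pauli $w(u)$ that does not commute with $w(a)$; since $d$ is prime, $w(u)$ permutes the $d$ eigenspaces of $w(a)$ nontrivially, so each diagonal block $P_j\,w(u)\,P_j$ vanishes and hence $w(u)\in\ker\mathcal{O}$. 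Next I would build an extremal CSP channel $\Lambda_n^{(d)}$ with no Clifford dilation and no Pauli in its kernel, generalising $\Lambda$: it distinguishes $\ket{0^n}$ (a rank-one event, which cannot be a single eigenspace of a qudit Pauli, whose eigenspaces all have dimension $d^{n-1}$) from its orthocomplement, sends $\ket{0^n}$ to a full-support stabiliser state, and dephases the complement in a stabiliser-preserving way. To prove extremality I would generalise Lemmas~\ref{lem:diag-subpolytope} and~\ref{lem:lambda-extremality-AD-qubit}: define the analogous ``almost-diagonal'' channels, identify $\CSP_n(d)$ intersected with them with a subpolytope $\mathrm{P}_n$ of the stabiliser polytope via a Hadamard product with a suitable normalisation, and exhibit the image of $\Lambda_n^{(d)}$ as the unique maximiser of a linear functional on $\mathrm{P}_n$. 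The absence of a Clifford dilation and of a Pauli in the kernel would be read off the explicit action exactly as in Lemma~\ref{lem:lambda-invariance}. A tempting shortcut, $\Lambda\otimes\id_{n-2}$, does have no Pauli in its kernel (every $n$-qubit Pauli factorises and no factor lies in $\ker\Lambda$), but extremality in the larger set $\CSP_n$ is not inherited from extremality of $\Lambda$ in $\CSP_2$, so one is forced back to the direct subpolytope argument. Combining, $\Lambda_n^{(d)}$ is extremal in $\CSP_n(d)$, yet were it a stabiliser operation the generalised Theorem~\ref{thm:stab-op-invariance} would force a Pauli into its kernel, a contradiction.

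For the equality at $n=1$ I would reduce to a classification of the vertices of the polytope $\CSP_1(d)$ and verify that each is a stabiliser operation; since $\SO_1(d)$ is convex, this yields $\CSP_1(d)\subseteq\SO_1(d)$ and hence equality. By the Choi characterisation of Lemma~\ref{lem:csp-choi}, the extreme CSP channels correspond to the vertices of $\mathrm{SP}_{2}(d)\cap\mathrm{TP}_{1,1}(d)$, and I would argue that each such extremal map is either a Clifford unitary channel or a measure-and-prepare channel consisting of a single-qudit Pauli measurement followed by a Clifford correction, both manifestly stabiliser operations. The feature that makes $n=1$ special is that the eigenspaces of a single-qudit Pauli are already one-dimensional, so the rank-one obstruction driving the counterexamples in higher $n$ simply does not exist.

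I expect the main obstacle to be the extremality of the generalised channel $\Lambda_n^{(d)}$. Showing that its Choi image is a genuine vertex of the high-dimensional qudit stabiliser polytope, via the linear-functional argument on $\mathrm{P}_n$, requires controlling the combinatorics of the qudit stabiliser states orthogonal to $\ket{0^n}$, which is considerably more delicate than the enumeration of the fifteen two-qubit states used in Lemma~\ref{lem:lambda-extremality-AD-qubit}; the realizability classification underlying the $n=1$ equality is a secondary, but also nontrivial, difficulty.
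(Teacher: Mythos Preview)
Your plan is sound; the paper itself remarks that the Pauli-invariance route of Theorem~\ref{thm:stab-op-invariance} ``also works for arbitrary $d$ and $n\geq 2$'', so your proposed generalisation is a legitimate proof. However, for the general statement the paper actually takes a \emph{different} path. Instead of lifting Theorem~\ref{thm:stab-op-invariance} to qudits and invoking a kernel argument, it first proves an independent structural result, the normal form Theorem~\ref{thm:kraus-decomposition-SO}, which decomposes every stabiliser operation as a convex combination of ``measure a stabiliser projective measurement on the input, then apply a conditional Clifford''. From this it extracts (Lemma~\ref{lem:so-in-ad}) that every $\mathcal{O}\in\SO_n\cap\AD_n$ corresponds, under the isomorphism of Lemma~\ref{lem:diag-subpolytope-2}, to a $\sigma\in\mathrm{P}_n$ built from a disjoint partition of $\F_d^n\setminus 0$ by affine subspaces. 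The separation then follows by a single-line counting argument (Corollary~\ref{cor:SO-cap-AD}): since $n\geq 2$, no such partition can consist entirely of affine hyperplanes, so $L(\sigma)=\langle+|\sigma|+\rangle<1/d$, whereas $L(\lambda)=1/d$ for the CSP point $\lambda$ of Lemma~\ref{lem:lambda-extremality-AD}. This bypasses both the no-Clifford-dilation check and the no-Pauli-in-kernel check that your approach requires, and it delivers Theorem~\ref{thm:kraus-decomposition-SO} (in particular, the finite-round statement for SO) as a byproduct of independent interest. Your approach, by contrast, stays closer to the two-qubit argument and is arguably more conceptual, but it does not yield the normal form and it forces you to verify the two auxiliary properties of $\Lambda_n^{(d)}$ in every dimension. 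For $n=1$ your outline coincides with the paper's: classify the extreme points of $\CSP_1(d)$ (via the polar form and the MUB structure of single-qudit stabiliser states) and observe that each is manifestly in $\SO_1(d)$.
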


The proof of Theorem \ref{thm:main} is accomplished in two parts.
The equality in the case $n=1$ is proven independently in Sec.~\ref{sec:csp-eq-so}.
For the case $n\geq 2$, we start with an identical approach as in Sec.~\ref{sec:min-result} and concentrate on the intersection of $\CSP_n(d)$ with almost-diagonal (AD) channels.
Although the proof strategy of Sec.~\ref{sec:min-result} based on Pauli invariances, i.e.~Theorem \ref{thm:stab-op-invariance}, also works for arbitrary $d$ and $n\geq 2$, we follow a more direct route in this section.
As we show, the restriction to AD channels directly simplifies the description of both general CSP channels and stabiliser operations considerably.
Using this result, it is then straightforward to define a linear functional $L$ which separates the almost-diagonal CSP channels from stabiliser operations.
As we show, this linear functional is again maximal on a generalisation of the $\Lambda$ channel to be defined later, cp.~Eq.~\eqref{eq:lambda-2-comp-basis}.

To arrive at the mentioned simplification for stabiliser operations, we first derive a suitable ``normal form'' in Sec.~\ref{sec:so-normal-form}.

%--------------------------------------------------
\subsection{Normal form for stabiliser operations}
\label{sec:so-normal-form}
%--------------------------------------------------

In this section, we show that any stabiliser operation is a convex combination of circuits performing a projective stabiliser measurement on the input followed by a global, ancilla-assisted Clifford unitary conditioned on the measurement outcome.

\begin{theorem}[Kraus decomposition of SO]
\label{thm:kraus-decomposition-SO}
 Consider the family of stabiliser operations in $\SO_{n,m}(d)$ of the following type:
 \begin{equation}
 \label{eq:SO-syndrome-measurement}
  \mathcal{E}(\rho) = \tr_{m+1,\dots,n+r} \sum_{i} U_i \left( P_i \rho P_i \otimes\ketbran{0^r}{0^r} \right) U_i^\dagger,
 \end{equation}
 where $\{ P_i \}$ is a projective measurement given by mutually orthogonal stabiliser code projectors and the $ U_i $'s are Clifford unitaries acting on $ n+r $ qudits.
 Then, the following holds:
 \begin{enumerate}[label=(\roman*)]
  \item Any $\mathcal{O}\in\SO_{n,m}(d)$ is a convex combination of SO of the above type \eqref{eq:SO-syndrome-measurement}. 
  \item In particular, any stabiliser operation can be realised in at most $n$ rounds.
 \end{enumerate}	 
\end{theorem}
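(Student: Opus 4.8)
The plan is to reduce an arbitrary $\mathcal{O}\in\SO_{n,m}(d)$ to the normal form \eqref{eq:SO-syndrome-measurement} in three stages: (a) push all state preparations to the front and all discards to the end; (b) commute every Clifford past the Pauli measurements, so that only a single outcome-dependent Clifford survives at the very end; and (c) collapse the resulting adaptive sequence of Pauli measurements into one projective stabiliser measurement on the input together with conditional Cliffords, pushing all genuine randomness into the convex combination claimed in part~(i). I would dispose of classical randomness first: conditioning on the internal coins of the control logic exhibits $\mathcal{O}$ as a convex combination of operations whose control is a \emph{deterministic} function of the measurement outcomes, so it suffices to bring each such deterministic-control operation into a convex combination of normal forms. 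This whole approach generalises the Kraus-operator analysis of Ref.~\cite{campbell_structure_2009}.

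For the standardisation in (a)–(b) I would use that every stabiliser state equals $C\ket{0^r}$ for some Clifford $C$, so all ancillas may be prepared in $\ket{0^r}$ with the preparation Cliffords folded into later unitaries, while discards commute to the end as a single final partial trace. To move the Cliffords to the right I would repeatedly apply the identity $P^\epsilon C = C\,(C^\dagger P^\epsilon C)$, where $P^\epsilon$ is the eigenprojector selected by a Pauli measurement and $C^\dagger P^\epsilon C$ is again a Pauli eigenprojector because $C$ is Clifford. Iterating, and allowing the conjugated Paulis to depend on earlier outcomes, leaves an adaptive sequence of Pauli measurements on the $n+r$ qudits, followed by a single outcome-dependent Clifford $C(\vec\epsilon)$ and the final trace.

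The crux is step~(c), which I expect to be the main obstacle. Here I would process the measured Paulis one at a time, tracking the stabiliser group of the current state (initialised by the ancilla stabilisers $\langle Z_{n+1},\dots,Z_{n+r}\rangle$) exactly as in the Gottesman--Knill update. A newly measured Pauli $w(a)\otimes w(b)$ falls into two cases. If $\ket{0^r}$ is not an eigenstate of $w(b)$, or more generally the Pauli fails to commute with the current stabiliser group, then its outcome is uniformly random and independent of the input and the branches differ only by a Pauli correction; using $Q^{+}\sigma Q^{+}+Q^{-}\sigma Q^{-}=\tfrac12(\sigma+Q\sigma Q)$ (and its $d$-ary analogue, a Pauli twirl) this is a random Clifford channel, which I would absorb into the conditional Clifford and into the convex combination, just as in Lemma~\ref{lem:inv-case-two}. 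Otherwise $\ket{0^r}$ is a $w(b)$-eigenstate, the ancilla contributes only a deterministic relabelling of the outcome (the $b=0$ reduction from the proof of Theorem~\ref{thm:stab-op-invariance}), and the measurement effectively refines a projective stabiliser measurement on the input by one commuting generator. Collecting the surviving refining measurements yields a single projective stabiliser measurement $\{P_i\}$ acting on the $n$-qudit input, with the accumulated Cliffords becoming the conditional $U_i$; this is precisely the form \eqref{eq:SO-syndrome-measurement}, establishing~(i). The delicate bookkeeping is to verify that the two-case split is exhaustive at every step and that absorbing random outcomes keeps the remaining object input-independent enough to still define a genuine stabiliser measurement.

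Finally, part~(ii) follows from~(i): a projective stabiliser measurement on $n$ qudits is generated by at most $n$ commuting Pauli operators, since any abelian stabiliser subgroup of $\Pa_n(d)$ has rank at most $n$. Measuring these generators uses at most $n$ adaptive rounds, after which the single conditional Clifford $U_i$ is applied without a further interactive round. Hence every stabiliser operation is realisable in at most $n$ rounds.
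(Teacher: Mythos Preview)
Your proposal is correct and follows the same overall strategy as the paper. The only structural difference is that the paper splits your step~(c) into two separate inductions---one using Lemma~\ref{lem:inv-case-general} (the $d$-ary version of Lemma~\ref{lem:inv-case-two}) to eliminate Pauli measurements that act non-trivially on the ancilla, and a second using Lemma~\ref{lem:non-commuting-codes} (which gives $P_1P_2=d^{-1/2}VP_2$ for non-commuting rank-$d^{n-1}$ code projectors) to eliminate non-commuting consecutive input measurements---whereas you unify both under ``fails to commute with the current stabiliser group'' via Gottesman--Knill tracking; both organisations lead to the same convex decomposition and the same $n$-round bound.
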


\begin{remark}
 A projective measurement composed of mutually orthogonal stabiliser code projectors is not necessarily associated to a single set of mutually commuting Pauli operators (i.e.~a syndrome measurement).
 An example for this is the measurement of the basis $\{ \ket{00}, \ket{01}, \ket{1+}, \ket{1-} \}$.
\end{remark}

The proof of Theorem \ref{thm:kraus-decomposition-SO} is similar to related results in Ref.~\cite{campbell_structure_2009} and Ref.~\cite[Thm.~5.3]{beverland_lower_2020}.
However, the latter works focus on the form of \emph{post-selected} stabiliser operations, i.e.~on the form of a single Kraus operator in Eq.~\eqref{eq:SO-syndrome-measurement}.
Moreover, Ref.~\cite{beverland_lower_2020} only considers the form of post-selected stabiliser operations which map a fixed input to a fixed output state.
Here, we show that a careful argumentation allows us to manipulate all Kraus operators simultaneously to arrive at a similar result for the entire quantum channel.
The

To prove Theorem~\ref{thm:kraus-decomposition-SO}, we use Lemmata~\ref{lem:non-commuting-codes} and~\ref{lem:inv-case-general} to eliminate non-commuting Pauli measurements and Pauli measurements on ancilla qudits.
In this way, an arbitrary stabiliser operation can be iteratively decomposed into a convex combination of stabiliser operations of the form \eqref{eq:SO-syndrome-measurement}.

Lemma \ref{lem:non-commuting-codes} generalises \cite[{Sec.~6}]{campbell_structure_2009} and \cite[{Prop.~A8}]{beverland_lower_2020} to arbitrary prime dimension $d$.

\begin{lemma}
\label{lem:non-commuting-codes}
 Suppose $P_1$ and $P_2$ are non-commuting $[[n,n-1]]$ stabiliser code projectors.
 Then, $P_1P_2 = d^{-1/2} VP_2$ for a suitable Clifford unitary $V$.
\end{lemma}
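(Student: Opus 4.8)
The plan is to reduce the pair $(P_1,P_2)$ to a standard form by a single Clifford conjugation and then verify the identity by an elementary single-qudit calculation. Since $P_1$ and $P_2$ are projectors onto $[[n,n-1]]$ codes, their stabiliser groups $S_1,S_2$ have rank $1$, so each is generated by one order-$d$ Pauli operator $g_i$, and by Eq.~\eqref{eqn:stab projection} we have $P_i = P_{S_i} = \tfrac1d\sum_{j=0}^{d-1} g_i^{\,j}$. Writing $g_i = \tau^{c_i} w(a_i)$ as in Eq.~\eqref{eq:weyl-ops}, the hypothesis that $P_1$ and $P_2$ do not commute is equivalent to the associated symplectic vectors satisfying $a_{1,z}\cdot a_{2,x} - a_{1,x}\cdot a_{2,z}\neq 0$ in $\F_d$.

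First I would invoke the standard fact that the Clifford group realises every symplectic transformation in $\Sp(2n,d)$ together with Pauli translations. This lets me choose a Clifford $W$ conjugating $S_1$ to $\langle Z_1\rangle$ and $S_2$ to $\langle X_1\rangle$, where $Z_1,X_1$ are the $Z$ and $X$ operators on the first qudit. Only the \emph{groups} $S_1,S_2$ enter $P_i$, not the chosen generators, so the precise value of the symplectic product is immaterial: for prime $d$ and $s\neq 0$, $X(s)$ and $X$ generate the same cyclic group. Because $W P_{S} W^\dagger = P_{WSW^\dagger}$, it suffices to prove the claim in the normalised situation where $P_1 = \ketbra{0}{0}$ and $P_2 = \ketbra{+}{+}$ act on the first qudit (tensored with the identity on the remaining $n-1$ qudits), with $\ket{+} := d^{-1/2}\sum_{x\in\F_d}\ket{x}$ the $+1$-eigenstate of $X$.

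In this normalised form the computation is immediate. Since $\braket{0}{+} = d^{-1/2}$, one finds on the first qudit that $P_1 P_2 = \ketbra{0}{0}\,\ketbra{+}{+} = \braket{0}{+}\,\ketbra{0}{+} = d^{-1/2}\,\ketbra{0}{+}$. Letting $F$ be the qudit Fourier transform on the first qudit (the Hadamard gate when $d=2$), which is Clifford and satisfies $F\ket{0}=\ket{+}$, the unitary $\tilde V := F^\dagger$ on the first qudit obeys $\tilde V P_2 = (F^\dagger\ket{+})\bra{+} = \ketbra{0}{+} = d^{1/2} P_1 P_2$. Conjugating back, $V := W^\dagger \tilde V W$ is again Clifford and, using $P_i = W^\dagger \tilde P_i W$, gives $P_1 P_2 = W^\dagger(\tilde P_1 \tilde P_2)W = d^{-1/2}W^\dagger(\tilde V \tilde P_2)W = d^{-1/2} V P_2$, as desired.

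\textbf{Main obstacle.} The only point requiring care is the phase bookkeeping in the normalisation. While the symplectic action transitively maps the pair of non-commuting one-dimensional symplectic subspaces spanned by $a_1,a_2$ onto those of $Z_1$ and $X_1$, one must additionally compose with suitable Pauli operators to match the eigenvalue-$1$ eigenspaces — that is, to ensure the \emph{generators}, and not merely their symplectic vectors, map as required so that $W P_{S_i} W^\dagger$ is exactly the $+1$ projector $\ketbra{0}{0}$ respectively $\ketbra{+}{+}$. This is routine given the standard structure of the Clifford group, and the remainder of the argument is the elementary single-qudit calculation above.
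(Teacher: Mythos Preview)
Your proof is correct and follows essentially the same approach as the paper: both reduce the pair $(P_1,P_2)$ to a single-qudit standard form via the transitive action of the Clifford group on non-commuting rank-one stabiliser pairs, and then verify the identity by the elementary computation $\ketbra{0}{0}\,\ketbra{+}{+}=d^{-1/2}F^\dagger\ketbra{+}{+}$ (the paper uses the swapped roles $P_1=\ketbra{+}{+}$, $P_2=\ketbra{0}{0}$ and writes $H$ in place of $F^\dagger$). Your explicit conjugation back and the remark on phase bookkeeping are welcome additions, but the underlying argument is the same.
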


\begin{proof}
 Pairs of non-commuting $[[n,n-1]]$ stabiliser codes form a single orbit under the Clifford group.
 To see this, let $w_1$ and $w_2$ be Pauli operators that generate such a pair $P_1$ and $P_2$ and let $\tilde w_1$, $\tilde w_2$ generate another pair $\tilde P_1$ and $\tilde P_2$.
 By redefining the generators with a suitable power of $\omega$, we can assume that $w_1w_2 = \omega w_2w_1 $ and $\tilde w_1 \tilde w_2 = \omega \tilde w_2 \tilde w_1$.
 Then there is a Clifford unitary $U$ mapping $w_1$ to $\tilde w_1$ and $w_2$ to $\tilde w_2$ and thus $P_1$ to $\tilde P_1$ and $P_2$ to $\tilde P_2$ as claimed.
 Thus, we may assume that $P_1=\ketbra{+}{+}\otimes\one_{n-1}$ and $P_2=\ketbra{0}{0}\otimes\one_{n-1}$ and clearly $P_1 P_2 = \ketbra{+}{+} \ketbra{0}{0} \otimes \one_{n-1} = \frac{1}{\sqrt{d}} H P_2$ where $H$ is the Hadamard gate.
\end{proof}

The following lemma is a generalisation of Lemma \ref{lem:inv-case-two} to any prime dimension $ d $.
\begin{lemma}
\label{lem:inv-case-general}
 Let $w(a)\otimes w(b)$ be a $(n+k)$-qudit Pauli operator and let $\ketn{s}$ be a $k$-qudit stabiliser state which is not an eigenstate of $w(b)$.
 For any $x\in\F_d$, denote the projector onto the eigenspace of $w(a)\otimes w(b)$ with eigenvalue $\omega^x$ by $P_x$.
 Then, there are Clifford unitaries $U_x$ such that $P_x \ket\psi \otimes \ketn{s} = d^{-1/2} U_x \ket\psi \otimes \ketn{s} $ for all $\psi\in(\C^d)^{\otimes n}$ and $x\in\F_d$.
\end{lemma}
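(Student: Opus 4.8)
The plan is to follow the strategy of Lemma~\ref{lem:inv-case-two}, replacing the two‑outcome Hadamard construction by its $d$‑outcome analogue built from the qudit controlled‑$Z$ gate and the quantum Fourier transform. The key simplification is that the asserted identity only has to hold on vectors of the form $\ket\psi\otimes\ketn s$, which lets me fix the ancilla and reduce everything to an explicit single‑qudit computation.

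First I would reduce to a canonical special case by Clifford conjugation. On the input side there is an $n$‑qudit Clifford $U$ with $U w(a) U^\dagger = Z(e_1) =: Z_1$, since every nonzero Pauli is Clifford‑equivalent to $Z_1$. On the ancilla side I claim there is a $k$‑qudit Clifford $V$ with $V\ketn s = \ketn{0^k}$ and $V w(b) V^\dagger = X(e_1) =: X_1$. In the symplectic picture, $\ketn s$ corresponds to a Lagrangian $M\subset\F_d^{2k}$ and $w(b)$ to a vector $\beta$, and the hypothesis that $\ketn s$ is not an eigenstate of $w(b)$ is precisely $\beta\notin M$. As $M$ is self‑dual, there is $\mu\in M$ with nonzero symplectic product with $\beta$, so $(\mu,\beta)$ is a hyperbolic pair that can be completed to a symplectic basis sending $M$ to the standard $Z$‑Lagrangian and $\beta$ to $(0,e_1)$; this is the sought $V$. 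Conjugating the statement by $U\otimes V$ (and, if Clifford conjugation attaches a Pauli phase to $X_1$, merely relabeling the eigenvalues $x$), it suffices to treat $w(a)=Z_1$, $w(b)=X_1$ and $\ketn s=\ketn{0^k}$.

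Next I would compute the eigenprojectors explicitly. Writing $P_x = \tfrac1d\sum_{j=0}^{d-1}\omega^{-jx}(Z_1\otimes X_1)^j$, using $(Z_1\otimes X_1)^j = Z_1^j\otimes X_1^j$ and $X_1^j\ketn{0^k} = \ketn{j}\ketn{0^{k-1}}$, one obtains
\[
  P_x\bigl(\ket\psi\otimes\ketn{0^k}\bigr)
  = \frac1d\sum_{j=0}^{d-1}\omega^{-jx}\,(Z_1^j\ket\psi)\otimes\ketn{j}\otimes\ketn{0^{k-1}}.
\]
The factor $Z_1^j$ conditioned on the first ancilla being $\ketn{j}$ is exactly the action of the qudit controlled‑$Z$ gate $CZ := \sum_{j=0}^{d-1}\ketn{j}\!\bran{j}\otimes Z_1^j$ (first ancilla controlling the first input qudit), which is Clifford. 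Pulling it out and recognising $\tfrac1{\sqrt d}\sum_{j}\omega^{-jx}\ketn{j} = F^\dagger X(x)\ketn{0}$ as a single‑qudit stabiliser state (with $F$ the quantum Fourier transform), I would set $U_x := CZ\cdot\bigl(F^\dagger X(x)\bigr)_{n+1}$, which is Clifford, and conclude $P_x(\ket\psi\otimes\ketn{0^k}) = d^{-1/2}U_x(\ket\psi\otimes\ketn{0^k})$. Conjugating back by $U\otimes V$ yields Clifford unitaries for the original $w(a), w(b), \ketn s$; for $d=2$ this reproduces $U_\pm$ of Lemma~\ref{lem:inv-case-two}, with $F$ specialising to $H$.

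I expect the only genuine subtlety to be the ancilla reduction $(\ketn s, w(b))\mapsto(\ketn{0^k}, X_1)$: one must check that a single symplectic transformation simultaneously standardises the Lagrangian $M$ and sends $\beta$ to the conjugate $X$‑direction, and keep track of the (immaterial) Pauli phase on $X_1$ that merely permutes the eigenvalue labels. The remaining computation is routine and directly parallels the $d=2$ argument, with the qudit $CZ$ and $F$ in place of the qubit $CZ$ and $H$.
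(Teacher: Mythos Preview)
Your proposal is correct and follows essentially the same route as the paper: reduce by Clifford conjugation to the canonical case $w(a)=Z_1$, $\ketn{s}=\ketn{0^k}$ with $w(b)$ acting as a shift on the first ancilla, then expand $P_x$ and recognise the controlled-$Z$ plus Fourier/Hadamard structure to read off $U_x$. The only cosmetic difference is that you justify the ancilla reduction via a symplectic Witt-type argument, whereas the paper invokes the fact that the states $w(xb)\ketn{s}$ lie in a common stabiliser basis and can therefore be sent to $\ketn{x}\ketn{0^{k-1}}$ by a single Clifford; both arguments establish the same normal form.
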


\begin{proof}
  Since $\ket{s}$ is not an eigenstate of $w(b)$, the stabiliser states $w(xb)\ket{s}$ for $x\in\F_d$ are part of the same stabiliser basis.
  In particular, there is a Clifford unitary $V$ such that $Vw(xb)\ket{s} = \ket{x}\ketn{0^{k-1}}$
  Moreover, there is a Clifford unitary $U$ such that $U w(a) U^\dagger=Z_1$. 
  Thus, up to acting with $U$ on the input register, and with $V$ on the ancilla register, we may assume that $w(a)=Z_1$ and $w(xb)\ketn{0^k} = \ketn{x}\ketn{0^{k-1}}$.
  In terms of a controlled $Z$-gate $CZ_{n+1,1}$ (first ancilla qudit controlling the first input qudit),
  the action of the projections onto the eigenspaces of $w(a)\otimes w(b)$ is then given by
  \begin{equation}
	P_x \ket\psi\otimes\ketn{0^k} = \frac1d \Big[\sum_{y\in\F_d} \omega^{xy} \big( Z_1^x \ket\psi \big) \otimes \ket x  \Big]\otimes \ketn{0^{k-1}}
	=
	\frac{1}{\sqrt d} \Big[CZ_{n+1,1} \big( \ket\psi \otimes H \ket{x} \big) \Big] \otimes \ketn{0^{k-1}}.
  \end{equation}
  Thus, the claim holds for the Clifford unitary $U_x := CZ_{n+1,1} H_{n+1} X_{n+1}(x)$.
\end{proof}

\begin{proof}[Proof of Theorem \ref{thm:kraus-decomposition-SO}]
Suppose $\mathcal{O}$ is a stabiliser operation which does not explicitly use classical randomness and involves $l$ Pauli measurements with outcomes labelled by the ditstring $x=(x_1,\dots,x_l)\in\F_d^l$.    
Let us introduce the shorthand notation $x_{[k]} := (x_1,\dots,x_k)$.
Since the partial trace is linear and the size of the ancilla system stays fixed throughout the proof, we can ignore the possibility of tracing out qudits.
Hence, $\mathcal{O}$ can be taken as follows:
\begin{align}
\label{eq:prop1-kraus-operators}
 \mathcal{O}(\rho) &= \sum_{x\in\F_d^l} K(x)\rho\otimes\ketbran{0^r}{0^r} K(x)^\dagger, &
 K(x) &= U(x) P(x_l\,|\,x_{[l-1]}) P(x_{l-1}\,|\,x_{[l-2]}) \cdots P(x_1).
\end{align}
Without loss of generality, the Kraus operators $K(x)$ are given by consecutive projectors $P$ associated to outcomes of Pauli measurements, and a global Clifford unitary $U$ at the end. 
All operations may be conditioned on previous measurement outcomes.
The projectors fulfil the POVM condition $\sum_{x_k} P(x_{k}\,|\,x_{[k-1]}) = \one$ for all $k$ and previous outcomes $x_{[k-1]}\in\F_d^{k-1}$. 
We can visualise the SO as a regular tree with root given by the initial measurement and branches corresponding to sequences of measurement outcomes.
The vertices of the tree are labelled by Pauli measurements (see Fig.~\ref{fig:SOtree}).

\begin{figure}
	\centering 
	\begin{tikzpicture}[ scale = 0.6,level/.style={sibling distance = 6cm/#1,
			level distance = 1.5cm}] 
		\node [align=center]at (3,0){Initial Measurement}
		child { node  {$ P(0) $} 
			child {node [color = red] {$ P(0|0) $}
				child {node {$ P(0|00) $}}
				child {node {$ P(1|00) $}}
			}	
			child {node  [color = red]{$ P(1|0) $}}
		}
		child { node {$ P(1) $}
			child {node {$ P(0|1) $}}
			child {node {$ P(1|1) $}}
		};
		\node at (-8,-8)  {\Large =};
		\node at (-7,-8) {\huge $ \frac{1}{2} $};
		\node[align=center] at (-2,-6) {Initial Measurement}
		child { node   {$ P(0) $} 
			child {node [color = red] {$ U(00) $}
				child {node {$ P(0|00) $}}
				child {node {$ P(1|00) $}} 
			}	
		}
		child { node {$ P(1) $}
			child {node {$ P(0|1) $}}
			child {node {$ P(1|1) $}}
		};
		\node at (4,-8)  {\Large +};
		\node at (5,-8)   {\huge $\frac{1}{2}$};
		
		\node[align=center] at (9.5,-6){Initial Measurement}
		child { node  {$ P(0) $} 
			child {node [color = red] {$ U(10) $}
				%			child {node {$ P(0|00) $}}
				%			child {node {$ P(1|00) $}} 
			}	
		}
		child { node {$ P(1) $}
			child {node {$ P(0|1) $}}
			child {node {$ P(1|1) $}}
		};
	\end{tikzpicture}
	\caption{Illustration of a tree model of a qubit stabiliser operation. The nodes correspond to outcomes of measurements which in turn depend on the previous outcomes $ x_{[k]} \in \FF_2^k $. We omit all nodes given by trivial measurements.
	If a Pauli measurement acts non-trivially on the ancilla state (in the given case $ P(0|0) $ and $ P(1|0) $ in red), the SO coincides with a uniform convex combination of two SOs, where the measurements $ P(0|0) $ and $ P(1|0) $ are replaced by Clifford unitaries $ U(00) $ and $ U(10) $.}
	\label{fig:SOtree}
\end{figure}

We first argue that we can write $\mathcal{O}$ as a convex combination of stabiliser operations which do not measure ancilla qudits.
To this end, we use Lemma \ref{lem:inv-case-general} to replace any Pauli measurement involving the ancilla system by a convex combination of suitable Clifford unitaries acting on input and ancilla system.
We prove this via induction over the depth $k$ of the tree, starting from the root $k=1$ and progressing to the leaves $k=l$.
Assume that up to depth $k-1$, all measurements are acting trivially on the ancilla system.
This implies that in every branch, the ancilla system is still in the initial state $\ketn{0^r}$.
Let $X_{k-1}$ be the set of previous outcomes $x_{[k-1]}$, such that the $k$-th measurement conditioned on $x_{[k-1]}\in X_{k-1}$ acts non-trivially on ancilla qudits.
By Lemma \ref{lem:inv-case-two}, we can then write $P(x_k|x_{[k-1]})\ket{\psi}\otimes\ketn{0^r} = d^{-1/2} U(x_{[k]})\ket{\psi}\otimes\ketn{0^r}$ for all input states $\ket\psi$ and suitable Clifford unitaries $U(x_{[k]})$.
For branches starting with $x_{[k-1]}$, we can thus treat $x_k$ as the outcome of a classical, uniformly distributed random variable $Y$.
By conditioning on the outcome $y\in\F_d$ of this random variable, we get new stabiliser operations $\mathcal{O}(y)$ given by the Kraus operators
\begin{align}
  K'(x_l,\dots,x_{k+1},y,x_{[k-1]}) &:= d^{1/2} K(x_l,\dots,x_{k+1},y,x_{[k-1]}),
   &x_{[k-1]} \in X_{k-1}, \\
  K'(x_l,\dots,x_{k},x_{[k-1]}) &:= K(x_l,\dots,x_{k},x_{[k-1]}), & x_{[k-1]} \notin X_{k-1}.
 \end{align}
$\mathcal{O}(y)$ performs the same operation as $\mathcal{O}$ if the first $k-1$ outcomes are not in $X_{k-1}$ and otherwise applies the Clifford unitary $U(y,x_{[k-1]})$ and follows the branch determined by $(y,x_{[k-1]})$, see Fig.~\ref{fig:SOtree}.
In particular, it is indeed a stabiliser operation and $\mathcal{O}= d^{-1} \sum_y \mathcal{O}(y)$.
Moreover, all measurements in $\mathcal{O}(y)$ act trivially on the ancilla system up to depth $k$.
We proceed with the induction for $\mathcal{O}(y)$.
This shows that $\mathcal{O}$ is a convex combination of stabiliser operations of the form \eqref{eq:prop1-kraus-operators}, where the measurements do not act on the $k$ ancilla qudits.

Next, let us assume that the measurements in $\mathcal{O}$ act on the input system only.
Then, using Lemma \ref{lem:non-commuting-codes}, we show that it is a convex combination of stabiliser operations where the measurements are given by mutually orthogonal stabiliser code projectors.
To this end, we consider consecutive measurements along a branch and argue again via induction over the depth $k$ of the tree.
Assume that up to depth $k-1$, all consecutive measurements are mutually commuting.
Let $X_{k-1}\subset\F_d$ be the set of outcomes $x_{[k-1]}$ such that the Pauli measurement given by $P(x_k|x_{[k-1]})$ is not commuting with a previous measurement, say $P(x_t|x_{[t-1]})$ for $t < k$.
Since by assumption the previous measurements mutually commute, we can write using Lemma \ref{lem:non-commuting-codes}
\begin{myalign}
 P(x_k\,|\,x_{[k-1]}) & P(x_{k-1}\,|\,x_{[k-2]})  \cdots P(x_t|x_{[t-1]}) \cdots P(x_1) \\
 &= P(x_k\,|\,x_{[k-1]}) P(x_t|x_{[t-1]}) P(x_{k-1}\,|\,x_{[k-2]}) \cdots   P(x_1) \\
 &= d^{-1/2} V(x_{[k]}) P(x_t|x_{[t-1]}) P(x_{k-1}\,|\,x_{[k-2]}) \cdots   P(x_1) \\
 &= d^{-1/2} V(x_{[k]}) P(x_{k-1}\,|\,x_{[k-2]})  \cdots P(x_t|x_{[t-1]}) \cdots P(x_1),
\end{myalign}
for suitable Clifford unitaries $V(x_{[k]})$.
Note that the remaining projectors are mutually commuting by assumption.
As before, this implies that for all branches starting with $x_{[k-1]}\in X_{k-1}$, we can treat $x_k\equiv y$ as the outcome of a classical, uniformly distributed random variable $Y$ and obtain new stabiliser operations $\mathcal{O}(y)$ by conditioning on its outcomes $y\in\F_d$:
\begin{align}
  K'(x_l,\dots,x_{k+1},y,x_{[k-1]}) &:= d^{1/2} K(x_l,\dots,x_{k+1},y,x_{[k-1]}), & x_{[k-1]} \in X_{k-1}, \\
  K'(x_l,\dots,x_{k},x_{[k-1]}) &:= K(x_l,\dots,x_{k},x_{[k-1]}), & x_{[k-1]} \notin X_{k-1}.
\end{align}
As in the previous argument, we have $\mathcal{O}= d^{-1} \sum_y \mathcal{O}(y)$ and proceeding with the induction for all $\mathcal{O}(y)$ shows that $\mathcal{O}$ can be written as a convex combination of stabiliser operations involving only mutually commuting, consecutive measurements.

Combining the above arguments shows that the initial stabiliser operation $\mathcal{O}$ defined in Eq.~\eqref{eq:prop1-kraus-operators} is a convex combination of stabiliser operations $\mathcal{O}'$ where all consecutive measurements are mutually commuting and not acting on ancilla qudits.
This implies that the mutually commuting projectors in every branch $i$ of the SO $\mathcal{O}'$ define a stabiliser code projector $P_i$ and trace-preservation requires that $\sum_i P_i = \one$.
Taking the trace inner product with some $P_j$ shows that this can only be fulfilled if the projectors are mutually orthogonal.
Hence, the terms in the convex combination are of the required type \eqref{eq:SO-syndrome-measurement}.
The additional use of classical randomness simply allows for arbitrary convex combinations of stabiliser operations of type \eqref{eq:SO-syndrome-measurement}.
\end{proof}

%---------------------------------------------------------
\subsection{Separation of CSP and SO in higher dimensions}
\label{sec:csp-neq-so}
%---------------------------------------------------------

To prove our main Theorem \ref{thm:main}, we proceed by generalising the results of Sec.~\ref{sec:min-result} on almost-diagonal channels and derive constraints on almost-diagonal stabiliser operations using the normal form in Thm.~\ref{thm:kraus-decomposition-SO}.

As in Sec.~\ref{sec:min-result}, we define the convex set of \emph{almost-diagonal channels} $\AD_n(d)$ as the set of quantum channels $\mathcal{E}:\,L((\C^d)^{\otimes n})\rightarrow L((\C^d)^{\otimes n})$ that act on the computational basis in the following way:
\begin{align}
    \mathcal{E}(\ketbra{0}{0}) &=  \ketbra{+}{+}, &
			\mathcal{E}(\ketbra{x}{x}) &=  \ketbra{x}{x} \quad x\in\F_d^n\setminus 0,
    \label{eq:def-AD-2} 
\end{align}
where we denote $\ket+ := d^{-n/2}\sum_{x\in\F_d^n}\ket{x}$.
A high-level reason why $\AD_n(d)$ might be relevant for the separation of $\CSP_n(d)$ and $\SO_n(d)$ is given by the observation that $\AD_n(d)$ defines a \emph{face} of the convex set of quantum channels.
In particular, $\AD_n(d) \cap \CSP_n(d)$ is a face of the CSP polytope and thus lies in its boundary.
To see this, consider the linear functional on quantum channels,
\begin{align}
    L(\mathcal{E}) := \frac{1}{d^n}\bigg( \sandwich{+}{\mathcal{E}( \ketbra{0}{0})}{+} + \sum_{x\neq 0} \sandwich{x}{\mathcal{E}(\ketbra{x}{x})}{x} \bigg)  
    \le \frac{1}{d^n} (1+ (d^n-1) \cdot 1)  
    = 1,
\end{align}
with equality if and only if $\mathcal{E} $ satisfies Eq.~\eqref{eq:def-AD-2}.
This shows that $\AD_n(d)$ is the intersection of a supporting hyperplane with the set of quantum channels, in particular it is a face.

As in the case $d=n=2$, the subpolytope $\AD_n(d) \cap \CSP_n(d)$ of $\CSP_n(d)$ is isomorphic to a subpolytope $\mathrm{P}_{n}(d)$ of the $n$-qudit stabiliser polytope which we define in the following.	

\begin{definition}
Let $\mathrm{P}_{n}(d)$ be the polytope of matrices $\sigma$ such that (1) $\sigma$ is a convex combination of $n$-qudit stabiliser states orthogonal to $\ket 0$, and (2)  the diagonal entries are $\sigma_{x,x} = (d^n-1)^{-1} \delta_{x\neq 0}$.
\end{definition}

\begin{lemma}\label{lem:diag-subpolytope-2}
Let $\mathcal{E}\in\AD_n(d)$ be an almost-diagonal quantum channel on $n$ qudits.
Then $\mathcal{E}$ is CSP if and only if it is of the form
\begin{align}
    \mathcal{E}(\rho) = (d^n-1)\,\sigma \circ \rho + \langle 0|\rho|0\rangle \, |+\rangle\langle+|,
\end{align}
where $\circ$ denotes the \emph{Hadamard} (or element-wise) product of two matrices and $\sigma \in \mathrm{P}_{n}(d)$.
In particular, the polytopes $\AD_n(d)$ and $\mathrm{P}_n(d)$ are isomorphic.
\end{lemma}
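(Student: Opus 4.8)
The plan is to generalise the proof of Lemma~\ref{lem:diag-subpolytope} from the case $d=n=2$ essentially verbatim, using the Choi--Jamio{\l}kowski characterisation of CSP maps (Lemma~\ref{lem:csp-choi}) as the main engine. For the \textbf{``only if''} direction, I would start with a CSP map $\mathcal{E}\in\AD_n(d)$ and write its Choi state as a convex combination $\mathcal{J}(\mathcal{E}) = \sum_{s\in\stab} p_s \ketbra{s}{s}$. Contracting against computational basis vectors on the second tensor factor recovers $\mathcal{E}(\ketbra{x}{y})$ up to the normalisation $d^n$, and evaluating on the diagonal together with the defining constraints \eqref{eq:def-AD-2} forces, for every $s$ with $p_s\neq 0$, the contractions $(\one\otimes\bra{0})\ket{s}\propto\ket{+}$ and $(\one\otimes\bra{x})\ket{s}\propto\ket{x}$ for $x\neq 0$.

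Next I would isolate the ``$\ket{+}$ branch'': any $\ket{s}$ with $(\one\otimes\bra{0})\ket{s}\neq 0$ must in fact be the product state $\ket{+}\ket{0}$ with weight $p_s = d^{-n}$. The argument is the same Schmidt-rank dichotomy used in the qubit case: by the cited property of stabiliser states (Prop.~\ref{prop:well-known stab basis}), if $\ket{s}$ had Schmidt rank $>1$ then some contraction $(\one\otimes\bra{x})\ket{s}$ would be orthogonal to $\ket{+}$, contradicting the proportionality constraint. The remaining stabiliser states are then ``diagonal'' of the form $\ket{s}=\sum_x \tilde{s}(x)\ket{x}\ket{x}$ with $\tilde{s}(0)=0$, and applying the Clifford $\prod_{i} CX_{i,n+i}$ disentangles them into $\ket{\tilde{s}}\otimes\ket{0^n}$, certifying that each $\ket{\tilde{s}}$ is a normalised stabiliser state orthogonal to $\ket{0}$. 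Setting $\sigma = \tfrac{d^n}{d^n-1}\sum_{s\neq\ket{+}\ket{0}} p_s \ketbra{\tilde{s}}{\tilde{s}}$ and checking the diagonal constraint $\sigma_{x,x}=(d^n-1)^{-1}\delta_{x\neq 0}$ places $\sigma\in\mathrm{P}_n(d)$, and a direct off-diagonal computation yields $\mathcal{E}(\ketbra{x}{y}) = (d^n-1)\,\sigma\circ\ketbra{x}{y}$ for $(x,y)\neq(0,0)$, which is the claimed Hadamard-product form. The \textbf{``if''} direction reverses this construction: given $\sigma\in\mathrm{P}_n(d)$, the displayed formula defines a channel whose Choi state lies in $\SP_{2n}(d)\cap\mathrm{TP}$, hence is CSP by Lemma~\ref{lem:csp-choi}. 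The isomorphism of the two polytopes then follows because the map $\sigma\mapsto\mathcal{E}$ is affine and invertible on the relevant affine spaces.

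The main obstacle I anticipate is \textbf{not} the off-diagonal bookkeeping — which is routine linear algebra — but rather ensuring that the Schmidt-rank argument and the disentangling step go through cleanly for general prime $d$ and general $n$, where the single ancilla/system qubit pairing of the $d=n=2$ case is replaced by $n$ pairs of qudits and the relevant ``stabiliser basis'' property (Prop.~\ref{prop:well-known stab basis}) must be invoked in its full qudit generality. In particular, one must verify that the normalisation constants $d^{-n}$ and $\tfrac{d^n}{d^n-1}$ come out correctly from the trace and POVM-type constraints, and that the diagonal constraint on $\sigma$ is exactly the one defining $\mathrm{P}_n(d)$. I expect these to be straightforward once the qubit proof is carefully tracked, so the proof should read almost identically to that of Lemma~\ref{lem:diag-subpolytope} with $\{01,10,11\}$ replaced by $\F_d^n\setminus 0$, the constant $3$ replaced by $d^n-1$, and $\tfrac14$ replaced by $d^{-n}$.
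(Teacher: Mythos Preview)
Your proposal is correct and follows exactly the approach the paper itself takes: the paper explicitly states that the proof of Lemma~\ref{lem:diag-subpolytope-2} ``is analogous to the case $d=n=2$, i.e.~Lemma~\ref{lem:diag-subpolytope}, and is thus omitted,'' and your plan is precisely this generalisation with the substitutions $\{01,10,11\}\to\F_d^n\setminus 0$, $3\to d^n-1$, $\tfrac14\to d^{-n}$. The only cosmetic point is that for odd $d$ the disentangling Clifford should be $\prod_i CX_{i,n+i}^{-1}$ (mapping $\ket{x}\ket{x}\mapsto\ket{x}\ket{0}$) rather than $CX$ itself, but this does not affect the argument.
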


The proof of Lemma \ref{lem:diag-subpolytope-2} is analogous to the case $d=n=2$, i.e.~Lemma \ref{lem:diag-subpolytope}, and is thus omitted.
Lemma \ref{lem:diag-subpolytope-2} implies that any CSP map fulfilling the constraints \eqref{eq:def-AD-2} is indeed ``almost-diagonal'' in the computational basis in the sense that $\mathcal{E}(\ketbra{x}{y}) \propto \ketbra{x}{y}$ except for $x=y=0$.
Hence, the matrix representation of $\mathcal{E}$ is a diagonal matrix with the first column (corresponding to $x=y=0$) replaced by $(d^{-n},\dots,d^{-n})^\top$.

\begin{lemma}
\label{lem:so-in-ad}
 Any stabiliser operation in the subpolytope $\AD_n(d) \cap \CSP_n(d)$
 is in the convex hull of stabiliser operations $\mathcal{O}$ defined through Lemma \ref{lem:diag-subpolytope-2} by mixed stabiliser states
 \begin{equation}\label{eq:extremal-Operations-In-ADn}
  \sigma = \frac{1}{d^n-1} \sum_{K\in \mathcal{K}} |K| \ketbra{s_K}{s_K} \in\mathrm{P}_n,
 \end{equation}
 where $\mathcal{K}$ is a disjoint partition of $\F_d^n\setminus 0$ by affine spaces $K\subset\F_d^n$ and $\ket{s_K}$ are stabiliser states supported on $K$.
\end{lemma}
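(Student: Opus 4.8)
The plan is to combine the normal form for stabiliser operations from Theorem~\ref{thm:kraus-decomposition-SO} with the observation (made just before the statement) that $\AD_n(d)\cap\CSP_n(d)$ is a face of the channel polytope, hence of $\SO_n(d)$. Since any $\mathcal{O}\in\SO_n(d)$ is a convex combination of normal-form operations $\mathcal{E}(\rho)=\tr_{\mathrm{anc}}\sum_i U_i(P_i\rho P_i\otimes\ketbran{0^r}{0^r})U_i^\dagger$, and since a point of a face can only be written as a convex combination of points of the same face, it suffices to show that every \emph{single} normal-form operation lying in $\AD_n(d)\cap\CSP_n(d)$ is already of the claimed form; the general statement then follows by taking convex combinations.

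First I would pin down the measurement $\{P_i\}$ from the almost-diagonal constraints~\eqref{eq:def-AD-2}. For $x\neq 0$ the output $\mathcal{E}(\ketbra{x}{x})=\ketbra{x}{x}$ is pure, which forces each vector $U_i(P_i\ket{x}\otimes\ketn{0^r})$ to be a product $\propto\ket{x}\otimes\ketn{a_{i,x}}$; and since the channel sends the orthogonal outputs $\ket{x},\ket{x'}$ to perfectly distinguishable states, the nonzero $P_i\ket{x}$ must be mutually orthogonal. Writing $A_i=\{x:P_i\ket{x}\neq 0\}$, a character-sum computation shows $A_i$ is the affine subspace cut out by the $Z$-type part of the stabiliser group of $P_i$, with $|A_i|=d^{n-k_Z}$, while the code dimension is $d^{n-k}\le|A_i|$. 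The orthogonality of $\{P_i\ket{x}:x\in A_i\}$ gives $\dim\ran P_i\ge|A_i|$, so (using that these quantities are powers of $d$) the code must be purely $Z$-type and $P_i=\sum_{x\in A_i}\ketbra{x}{x}$. Orthogonality and completeness of $\{P_i\}$ then make $\{A_i\}$ a partition of $\F_d^n$ into affine subspaces. The element $x=0$ needs separate care, since its image $\ket{+}$ is \emph{not} orthogonal to the $\ket{x}$; but a power of $d$ trapped between $|A_i|-1$ and $|A_i|$ still forces $P_{i_0}=\sum_{x\in A_{i_0}}\ketbra{x}{x}$ for the block $A_{i_0}\ni 0$, while unitarity of $U_{i_0}$ forces $\braketn{a_{i_0,0}}{a_{i_0,x}}=0$.

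With the measurement fixed, the channel is block-diagonal: $\mathcal{E}(\ketbra{x}{y})=0$ whenever $x,y$ lie in different blocks or one of them is $0$, and within a block $\mathcal{E}(\ketbra{x}{y})=\braketn{a_{i,y}}{a_{i,x}}\ketbra{x}{y}$. The key point is that the $\ketn{a_{i,x}}$ are \emph{pure stabiliser states}, because $U_i$ is Clifford and $U_i\ket{x}\ketn{0^r}=\ket{x}\ketn{a_{i,x}}$ is a stabiliser product state. Moreover, translating $x$ by a vector in the linear part of $A_i$ is implemented by a Pauli on the input, so the $\ketn{a_{i,x}}$ are related by a fixed family of ancilla Paulis. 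The main work is to exploit this: for a pure stabiliser state $\ketn{a}$ and a Pauli $Q$, the matrix element $\braketn{a}{Q a}$ is either $0$ or a phase, whence every entry of the Gram matrix $G^{(i)}_{x,y}=\braketn{a_{i,y}}{a_{i,x}}$ has modulus $0$ or $1$.

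The proof then finishes by a linear-algebraic observation. A positive-semidefinite matrix with unit diagonal whose entries have modulus $0$ or $1$ is block-diagonal with rank-one blocks: the relation ``$G^{(i)}_{x,y}\neq 0$'' is an equivalence relation (a vanishing $2\times 2$ principal minor forces the corresponding Gram vectors to be parallel), and its classes are cosets of a subspace, hence affine subspaces $K\subseteq\F_d^n\setminus 0$ (the class of $0$ is a singleton and is excluded). On each class $G^{(i)}$ equals $\phi_K\phi_K^\dagger$ for a unit-modulus vector $\phi_K$ whose quadratic phases define exactly a stabiliser state $\ketn{s_K}$ supported on $K$. Collecting these classes over all blocks yields the partition $\mathcal{K}$ and states $\ketn{s_K}$ with $\sigma=(d^n-1)^{-1}\sum_K|K|\,\ketbra{s_K}{s_K}$, so the normal-form operation is precisely the operation associated with $\sigma$ through Lemma~\ref{lem:diag-subpolytope-2}. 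I expect the hardest and most error-prone step to be the bookkeeping of the second paragraph---simultaneously controlling all Kraus operators via the affine/$Z$-type structure of the $P_i$, and the special treatment of the block containing $0$---because the phase conventions of the stabiliser formalism must be tracked carefully there. For completeness I would finally note that each such operation is genuinely a stabiliser operation (measure the stabiliser projectors $\sum_{x\in K}\ketbra{x}{x}$, then apply the conditional diagonal Clifford realising $\phi_K$, respectively the global Hadamard on the block of $0$), so that the convex hull indeed lies in $\SO_n(d)$.
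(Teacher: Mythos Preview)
Your approach is genuinely different from the paper's and is largely sound. The paper first invokes Lemma~\ref{lem:non-diagonal-projector-stuff} to show that every $P_i$ is diagonal, then proves that the block containing $0$ is the singleton $P_1=\ketbra{0}{0}$, and finally applies the auxiliary Lemma~\ref{lem:elimination-conditional-cliffords} to each remaining block: that lemma explicitly replaces the ancilla-assisted Clifford $U_i$ by a diagonal Clifford $D_i$ together with a projective $Z$-type measurement on the system, thereby refining each $K_i$ into smaller affine pieces and yielding the states $\ket{s_{i,j}}$ in closed form. Your Gram-matrix analysis of the ancilla states $\ket{a_{i,x}}$ bypasses Lemma~\ref{lem:elimination-conditional-cliffords} entirely; the observation that $|\braketn{a_{i,y}}{a_{i,x}}|\in\{0,1\}$ and that the nonvanishing pattern depends only on $x-y$ is a clean way to obtain the affine refinement directly from the Pauli action on stabiliser states. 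One point that deserves a line of justification is your claim that $\phi_K$ defines a stabiliser state: the shortest argument is that $\sum_{x\in K}\ket{x}\otimes\ketn{0^r}$ is a stabiliser state, so its image under the Clifford $U_i$ is the stabiliser product $\big(\sum_{x\in K}\phi_K(x)\ket{x}\big)\otimes\ket{a}$, hence $\ket{s_K}$ is a stabiliser state.

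There is, however, a genuine gap in your treatment of the block $A_{i_0}\ni 0$. You explicitly allow $|A_{i_0}|>1$ and record only the ancilla orthogonality $\braketn{a_{i_0,0}}{a_{i_0,x}}=0$, concluding that the class of $0$ is a singleton. But then the claim that the \emph{remaining} classes inside $A_{i_0}$ are cosets of a subspace $V_{i_0}$ can fail: if $V_{i_0}$ were nontrivial, one of its cosets is $V_{i_0}$ itself, which contains $0$; since $0$ sits in its own class, the actual class would be $V_{i_0}\setminus\{0\}$, which is not affine. In fact $|A_{i_0}|>1$ is impossible---this is exactly the paper's step $P_1=\ketbra{0}{0}$---and the quickest route in your setup is to note that the system part $w_1(v)$ of $U_{i_0}(X(v)\otimes\one)U_{i_0}^\dagger$ would have to send $\ket{+}$ to a scalar multiple of the computational basis state $\ket{v}$, which no Pauli can do. Once $A_{i_0}=\{0\}$ is established, your Gram-matrix argument applies cleanly to all remaining blocks and the proof goes through.
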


\begin{remark}
	Note that not every $ \sigma $ of the form \eqref{eq:extremal-Operations-In-ADn} gives rise to a stabiliser operation $ \mathcal{E} \in \AD_n(d)\cap\SO_n(d) $. For stabiliser operations, only particular partitions $ \mathcal{K} $ are allowed. These partitions exhibit a certain \emph{tree structure}, as explained in Proposition \ref{prop:polar-form} and \cite{campbell_structure_2009}. 
	
	Moreover, not all such $ \sigma $ are extremal within the polytope $\mathrm{P}_n(d)$. For example, if a stabiliser operation contains the measurement of a Pauli operator and the measurement is \emph{not} followed by an operation that is conditioned on at least one of the measurement outcomes, then such an operation cannot be extremal. 
	In this case, the measurement can be replaced by a convex combination of Clifford unitaries. This is a consequence of  Lemma \ref{lem:NoContOperations-NoExtremality} in App.~\ref{app:Measurements that are not followed by adaptive operations are never extremal}.
\end{remark}

To prove Lemma \ref{lem:so-in-ad}, we make use of the following Lemma which allows us to discard ancillary qubits for stabiliser operations in $\AD_n(d)$.

\begin{lemma}
\label{lem:elimination-conditional-cliffords}
 Assume $U\in\Cl_{n+k}(d)$ acts as $U(\ketn{0^k}\otimes\ket{x}) = c_x \ket{s_x}\otimes\ket{x}$ for $c_x\in\C$, some $k$-qudit stabiliser state $\ket{s_x}$ and $x$ is taking values in a subset $K\subset\F_d^n$.
 Then, there exists a diagonal Clifford unitary $D\in\Cl_n(d)$ and a subspace $M\subset \F_d^n$ such that the following identity holds for all $x,y\in K$:
 \begin{equation}
 \label{eq:elimination-conditional-cliffords}
  \tr_{1,\dots,k} \left( U \ketbran{0^k}{0^k}\otimes\ketbra{x}{y} U^\dagger \right) = D \left( \sum_{j=1}^{|M|} Q_{j}\ketbra{x}{y} Q_{j}  \right) D^\dagger.
 \end{equation} 
 Here, $Q_{j}$ are the mutually orthogonal projectors onto the joint eigenspaces of $Z(z)$ for $z\in M$.    
\end{lemma}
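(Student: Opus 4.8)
The plan is to compute both sides explicitly and match them. First I would expand the left-hand side using the hypothesis $U(\ketn{0^k}\otimes\ket x)=c_x\ket{s_x}\otimes\ket x$, which gives
\begin{equation*}
\tr_{1,\dots,k}\bigl(U\,\ketbran{0^k}{0^k}\otimes\ketbra xy\,U^\dagger\bigr)=c_x\,\overline{c_y}\,\braket{s_y}{s_x}\,\ketbra xy
\end{equation*}
for all $x,y\in K$. Everything thus reduces to understanding the scalar $c_x\overline{c_y}\braket{s_y}{s_x}$ as a function of $x,y$: I must show that it vanishes unless $x-y$ lies in a fixed subspace, and that on that subspace it factorises as a diagonal-Clifford phase.

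The second step is the overlap structure. For $x\in K$ the state $\ket{\psi_x}:=\ket{s_x}\otimes\ket x=c_x^{-1}U(\ketn{0^k}\ket x)$ is a stabiliser state, and since $U\in\Cl_{n+k}(d)$ its stabiliser group is $U S_x^0 U^\dagger$, where $S_x^0$ is generated by the ancilla operators $Z$ on qudits $1,\dots,k$ together with $\omega^{-x\cdot z}Z(z)$ on the system. The \emph{underlying} Pauli operators of this group are independent of $x$; only the phases on the images $R_i:=U Z(e_i)U^\dagger$ of the system generators carry an $x$-dependence, of the form $\omega^{-x_i}$. Because $\ket{\psi_x}$ is a product across the ancilla/system cut, a short rank argument shows that its ancilla-supported elements are exactly the rank-$k$ stabiliser group of $\ket{s_x}$; these run over a fixed index subgroup $H$, with fixed ancilla Paulis and phases $\omega^{-x\cdot b}$ for $b$ ranging over the subspace $B:=\pr_b(H)\subseteq\F_d^n$. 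Hence, for $x,y\in K$, the states $\ket{s_x}$ and $\ket{s_y}$ share the same rank-$k$ underlying stabiliser group and differ only through these phases: they are equal up to a global phase when $(x-y)\perp B$, and orthogonal otherwise. Setting $M:=B$, the condition $x-y\in M^\perp$ is precisely that $\ket x,\ket y$ lie in the same joint eigenspace of $\{Z(z):z\in M\}$, so that $\sum_j Q_j\ketbra xy Q_j$ equals $\ketbra xy$ when $x-y\in M^\perp$ and $0$ otherwise -- matching the vanishing pattern of the left-hand side.

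It remains to handle the phases. On each coset of $M^\perp$ meeting $K$, the vectors $f(x):=c_x\ket{s_x}$ are mutually parallel unit-modulus multiples of a common stabiliser vector, so $c_x\overline{c_y}\braket{s_y}{s_x}=\mu_x\overline{\mu_y}$ with $|\mu_x|=1$, while across distinct cosets the scalar vanishes. I would then produce a diagonal Clifford $D\in\Cl_n(d)$, $D\ket x=d_x\ket x$, with $d_x\overline{d_y}=\mu_x\overline{\mu_y}$ whenever $x-y\in M^\perp$, i.e.\ with $d_x/\mu_x$ constant on each coset. Since $D$ and the $Q_j$ are simultaneously diagonal, the right-hand side reduces to $d_x\overline{d_y}\sum_j Q_j\ketbra xy Q_j$, completing the match.

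The hard part will be exhibiting this diagonal Clifford. I would argue that the eigenvalue shift relating the states is implemented by a Pauli, $\ket{s_x}\propto w(\beta(x))\ket{s_{\mathrm{ref}}}$ with $\beta$ affine-linear in $x$ on each coset, and then combine this with $f(x)=(\one\otimes\bra x)U(\ketn{0^k}\ket x)$ to show -- via the quadratic (Gauss-sum) structure of the phases that Clifford operations produce in the computational basis -- that $\mu_x$ is, up to a per-coset constant, of the form $\tau^{Q(x)}$ for a quadratic function $Q$, and hence realisable by a genuine diagonal Clifford extended to all of $\F_d^n$. Verifying that this phase is truly quadratic, and that the per-coset constants can be absorbed consistently (with the usual care distinguishing $d=2$ from odd $d$ through the convention for $\tau$), is the delicate point; by comparison the support and orthogonality analysis of the preceding steps is routine.
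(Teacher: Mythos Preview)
Your outline is sound and would yield a correct proof, but it takes a genuinely different route from the paper and is harder at exactly the point you flag as delicate. You work intrinsically with the stabiliser groups of the $\ket{s_x}$: you identify the subspace $M$ from the character data $\omega^{-x\cdot b}$ on the ancilla-supported stabilisers, obtain the correct vanishing pattern for $\braket{s_y}{s_x}$, and then must argue directly that the residual phase $\mu_x$ is quadratic (hence realisable by a diagonal Clifford). The paper avoids this last step by a single normalising move: since all the $\ket{s_x}$ share the same underlying stabiliser group (differing only by characters), there is a Clifford $V\in\Cl_k(d)$ on the ancilla with $V\ket{s_x}=\ket{f(x)}$ a computational basis vector; by cyclicity of the partial trace one may replace $U$ by $(V\otimes\one)U$. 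Now $(V\otimes\one)U$ sends computational basis vectors to computational basis vectors, so it lies in the normaliser of the diagonal Pauli group -- a semi-direct product of diagonal Cliffords and $CX$ circuits, up to $X$ gates on the ancilla (which again drop out under the partial trace). In that parametrisation, $f$ is manifestly linear (giving $M=(\ker f)^\perp$) and the phase $c_x$ is automatically $\sandwich{x}{D}{x}$ for a diagonal Clifford $D$, with no separate Gauss-sum analysis needed. In short: your stabiliser-group argument for the support structure is equivalent to the paper's, but the paper's ancilla rotation converts your ``delicate point'' into a one-line appeal to the Siegel-parabolic form of basis-preserving Cliffords. If you want to keep your intrinsic approach, the cleanest way to establish quadraticity of $\mu_x$ is precisely to observe that $\ket{\Psi_x}=U(\one\otimes X(x))U^\dagger\,\ket{\Psi_0}$ with $U(\one\otimes X(x))U^\dagger$ a Pauli whose Weyl index is linear in $x$, and then track the Heisenberg commutation phases -- which is essentially re-deriving the parabolic decomposition by hand.
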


As we do not use this formulation in the following, we leave it to the reader to verify that the right hand side of Eq.~\eqref{eq:elimination-conditional-cliffords} can also be written as
\begin{align}
 \sum_{j=1}^{|M|} Q_{j}\ketbra{x}{y} Q_{j} 
 = \frac{1}{|M|} \sum_{z \in M} Z(z) \ketbra{x}{y} Z(z)^\dagger, \qquad \forall x,y\in\F_d^n.
\end{align}

\begin{proof}
The stabilisers of the states $\ket{s_x}$ can only differ by a character and hence we can find a Clifford $V\in\Cl_k(d)$ on the first system such that $V\ket{s_x}=\ket{f(x)}$ with $ f(x) \in \F_d^n $ for all $x\in K$.
Moreover, we find using the cyclicity of the partial trace:
\begin{align}
\label{eq:cyclicity-partial-trace}
  \tr_{1,\dots,k} \left( U \ketbran{0^k}{0^k}\otimes\ketbra{x}{y} U^\dagger \right) 
  &= \tr_{1,\dots,k} \left( (V\otimes\one)U \ketbran{0^k}{0^k}\otimes\ketbra{x}{y} U^\dagger (V^\dagger\otimes\one)\right).
\end{align}
Hence, we may without loss of generality assume that $U(\ketn{0^k}\otimes\ket{x}) = c_x \ket{f(x)}\otimes\ket{x}$ for a suitable function $f$ on $K\subset\F_d^k$.
It is well-known that the Clifford subgroup which normalises the group of Pauli $Z$ operators is given as the semi-direct product of diagonal Clifford unitaries and $CX$ circuits (this follows for instance from the properties of the associated "Siegel parabolic subgroup" of the symplectic group $\Sp_{2n}(\F_2)$, see e.g.~Ref.~\cite{heinrich_2021}).
Thus, the only Clifford unitaries which map computational basis states to computational basis states up to phases are given by this normaliser and $X$ gates. 
Since the second system is fixed for all $x\in K$, we can assume that the $X$ gates act on the first system only and can thus be discarded using the cyclicity of the partial trace, cp.~Eq.~\eqref{eq:cyclicity-partial-trace}.
Then, the form of diagonal Cliffords (see \eg~Ref.~\cite{dehaene_clifford_2003}) implies that $c_x = \sandwich{x}{D}{x}$ for some diagonal Clifford unitary $D\in\Cl_n(d)$.
Moreover, we can find a linear map $F\in\GL_{n+k}(\F_d)$ such that $F(0,x) = (f(x),x)$ and hence, $f$ is linear.

Next, we argue that we can infer whether $\braket{f(y)}{f(x)}$ is zero or one by a suitable measurement on the second system.
To this end, note that this overlap is one exactly if $f(x)=f(y)$, i.e.~$x-y \in \ker f$.
This in turn the case if and only if $z\cdot(x-y) = 0$ for all $z\in M:=(\ker f)^\perp$, hence if and only if$\ket x$ and $\ket y$ lie in the same joint eigenspace of the stabiliser group $\{ Z(z) \; | \; z\in M\}$.
Note that any computational basis state always lies in one of the eigenspaces.
Let $Q_j$ for $j=1,\dots,|M|$ be the projectors on these stabiliser codes.
We thus find
\begin{align}
D \left( \sum_{j=1}^{|M|} Q_{j}\ketbra{x}{y} Q_{j}  \right) D^\dagger
= \bar{c}_y c_x \braket{f(y)}{f(x)} \ketbra{x}{y} 
= \tr_{1,\dots,k} \left( U \ketbran{0^k}{0^k}\otimes\ketbra{x}{y} U^\dagger \right).
\end{align}
\end{proof}

\begin{proof}[Proof of Lemma \ref{lem:so-in-ad}]
  Assume that $\mathcal{O}\in\AD_n(d)$ is a stabiliser operation.
  Without loss of generality, we can assume that $\mathcal{O}$ is extremal in $\SO_n(d)$, since any $\mathcal{O}\in\AD_n(d)\cap\SO_n(d)$ can be written as a convex combination of extremal SO in $\AD_n(d)$ by the same argument as in Lemma \ref{lem:lambda-extremality-AD-qubit}.
  By Proposition \ref{thm:kraus-decomposition-SO}, we can thus assume that $\mathcal{O}$ has the following form
  \begin{equation}
  \label{eq:SO-form-prop1-MH-2}
    \mathcal{O}(\rho) = \tr_{1,\dots,k} \sum_{i=1}^{N} U_i \left(\ketbran{0^k}{0^k}\otimes P_i \rho P_i \right) U_i^\dagger,
  \end{equation}
  where the $P_i$ are mutually orthogonal stabiliser code projectors of rank $d^{n-r_i}$ on the input system, and the $U_i$ are Clifford unitaries conditioned on the measurement outcomes.
  Let $\tilde{\mathcal{O}}\in\SO_{n,n+k}$ be the SO given by Eq.~\eqref{eq:SO-form-prop1-MH-2} without the partial trace.
  Since the defining condition \eqref{eq:def-AD} for $\AD_n$ requires that the reduced state $\tr_{1,\dots,k} \tilde{\mathcal{O}}(\ketbra{x}{x})$ is pure for all $x\in\F_d^n$, it is necessary that $\tilde{\mathcal{O}}(\ketbra{x}{x})=\rho_x\otimes\ketbra{x}{x}$ for $x\neq 0$ and $\tilde{\mathcal{O}}(\ketbra{0}{0})=\rho_0\otimes\ketbra{+}{+}$ else.
  Similar to Eq.~\eqref{eqn:x to x} before, this requires that 
  \begin{align}
   U_i \left( \ketn{0^k}\otimes P_i\ketn{0^n} \right) & \propto \ket{s_{i,0}}\otimes\ket{+}, \label{eq:SO-AD-Kraus-action-1} \\
   U_i \left( \ketn{0^k}\otimes P_i\ket{x} \right) & \propto \ket{s_{i,x}}\otimes\ket{x}, \quad x\neq 0,  \label{eq:SO-AD-Kraus-action-2}    
  \end{align}
  where proportionality can also mean that the RHS vanishes.

  Let $i$ be such that Eq.~\eqref{eq:SO-AD-Kraus-action-1} holds with non-vanishing constant, without loss of generality $i=1$.
  If any of the $P_i$ were non-diagonal, a standard argument (cp.~Lem.~\ref{lem:non-diagonal-projector-stuff} in App.~\ref{sec:misc-stabiliser-stuff}) would show that there exist distinct computational basis states $ \ket{x} \neq \ket{y} $ such that $ P_i\ket{x} = P_i\ket{y} \neq 0 $, which would contradict Eqs.~\eqref{eq:SO-AD-Kraus-action-1} and \eqref{eq:SO-AD-Kraus-action-2}.
  Thus, all $P_i$ are diagonal.
  Moreover, if $P_1$ had rank larger than 1, there would be a $x\neq 0$ such that $P_1\ket{x} = \ket{x}$ is orthogonal to $P_1\ket 0 = \ket{0}$.
  But then, the second factor of $U_1(\ket{0}\otimes P_1\ket{x})$ has to be an $X$ eigenstate, in contradiction to Eq.~\eqref{eq:SO-AD-Kraus-action-2}.
  Hence, the projectors have the form 
  \begin{equation}
  	P_1 = \ketbra{0}{0}, \qquad P_i = \sum_{x\in K_i} \ketbra{x}{x},
  \end{equation}
  where $0 \notin K_i\subset \F_d^n$ is an affine subspace not containing zero.
  Then, orthogonality of the $P_i$ implies that the $K_i$ form a disjoint partition of $\F_d^n\setminus 0$.

  Note that we can assume that $U_1=V\otimes H^{\otimes n}$ (up to $Z$ operators).
  Therefore, we can simply trace out the ancilla for the first term. 
  For $i > 1$, consider the conditional Clifford unitary $U_i$ which acts on the code space $K_i$ as $U_i\ket{0}\otimes\ket{x} = c_i(x) \ket{s_{i,x}}\otimes\ket{x}$ where $c_i(x) \in \C$.
  Then, we can use Lemma \ref{lem:elimination-conditional-cliffords} to replace this action by a diagonal Clifford $D_i$ and $m_i$ mutually orthogonal diagonal stabiliser code projectors on the \emph{input system}.
  We can write these projectors as
  \begin{equation}
   Q^i_j = \sum_{x\in A^i_j} \ketbra{x}{x}, \qquad j=1,\dots,m_i,
  \end{equation}
  where $A^i_j\subset\F_d^n$ are suitable affine subspaces (which might contain zero), forming a disjoint partition of $\F_d^n$.
  Now consider
  \begin{equation}
   P_{i,j}:= Q^i_j P_i = \sum_{x\in A^i_j \cap K_i} \ketbra{x}{x}.
  \end{equation}
  Here, $K^i_j := A^i_j \cap K_i$ is an affine subspace not containing zero.
  Note that $\{K^i_j\}_{j=1,\dots,m_i}$ is a disjoint partition of $K_i$ and thus, the affine subspaces $\mathcal{K}:=\{K^i_j \; | \; i=2,\dots,N, \, j=1,\dots,m_i \}$ obtained in this way form a disjoint partition of $\F_d^n\setminus 0$.
  Finally, we can write
  \begin{equation}
   \mathcal{O}(\rho) 
   = \ketbra{+}{0}\rho\ketbra{0}{+} + \sum_{i=2}^N  \sum_{j=1}^{m_i} D_i Q^i_j P_i \rho P_i Q^i_j D_i^\dagger 
   = \ketbra{+}{0}\rho\ketbra{0}{+} + (d^n-1)\, \sigma \circ \rho,
  \end{equation}
  where
  \begin{equation}\label{eq:Explicit-Form-Of-Sigma-In-AD-cap-SO}
   \sigma = \frac{1}{d^n-1} \sum_{(i,j)} |K^i_j| \, \ketbra{s_{i,j}}{s_{i,j}}, \qquad \ket{s_{i,j}} := |K^i_j|^{-\frac 1 2} \sum_{x \in K^i_j} D_i \ket{x}.
  \end{equation}  
  To see that this is indeed an $\AD_n$ channel, note that any of the $\ket{s_{i,j}}$ is a stabiliser state and as the $K^i_j$ form a disjoint partition of $\F_d^n\setminus 0$, $\sigma$ is a proper convex combination and hence in $\SP_n(d)$.
  Finally, we have $\sandwich{0}{\sigma}{x} = 0$ for all $x\in\F_d^n$ and for $x\neq 0$:
  \begin{equation}
    (d^n-1)\sandwich{x}{\sigma}{x} 
    = \sum_{(i,j)} |K^i_j| \, |\braket{x}{s_{i,j}}|^2 = \sum_{(i,j)} \ind_{K^i_j}(x).
  \end{equation}
  Since any $x\neq 0$ is in exactly one affine subspace $K^i_j$, we thus find $\sigma\in\mathrm{P}_n(d)$.
\end{proof}

In Section \ref{sec:min-result}, Lemma \ref{lem:lambda-extremality-AD} has been proven for the case $n=d=2$.
Here, we treat the general case.

\begin{lemma}
\label{lem:lambda-extremality-AD}
The matrix $\lambda$ with elements
\begin{align*}
    \lambda_{x, t x} &= \lambda_{tx,x} =(d^n-1)^{-1} \delta_{t, 1}, 
    \quad \forall x \in \FF_d^n, t \in \F_d,
    &
    \lambda_{x,y} &= d^{-1}(d^n-1)^{-1},
    \quad \forall\,0\neq x \neq y \neq 0
\end{align*}
is the unique maximiser in $\mathrm{P}_n$ of the linear function $L: \sigma \mapsto \sandwich{+}{\sigma}{+}$ with $L(\lambda)=1/d$.
In particular, $\lambda$ is a vertex of $\mathrm{P}_{n}$.
\end{lemma}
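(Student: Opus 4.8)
The plan is to mirror the argument of Lemma~\ref{lem:lambda-extremality-AD-qubit}: first bound $L$ on individual stabiliser states orthogonal to $\ket0$ and identify the maximisers, then show that the affine constraints defining $\mathrm{P}_n$ pick out a unique convex combination of these maximisers, and finally verify that this combination is $\lambda$. For a stabiliser state $\ket s$ orthogonal to $\ket0$, the well-known structure of stabiliser states (Prop.~\ref{prop:well-known stab basis}) says its support is an affine subspace $A\subset\F_d^n$ avoiding $0$, on which the amplitudes have constant modulus $|A|^{-1/2}$. Since $0\notin A$ forces $\dim A\le n-1$, the triangle inequality gives $|\braket{+}{s}|^2\le d^{-n}|A|\le 1/d$, with equality exactly when $A$ is an affine hyperplane avoiding $0$ and all amplitudes share a common phase. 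The latter states are precisely the uniform superpositions $\ket{s_A}=|A|^{-1/2}\sum_{x\in A}\ket x$; each is a stabiliser state, and normalising $A=\{z:a\cdot z=1\}$ identifies them bijectively with nonzero $a\in\F_d^n$, so there are exactly $d^n-1$ of them.

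Since $L(\sigma)=\sum_s p_s\,|\braket{+}{s}|^2$ for any decomposition of $\sigma\in\mathrm{P}_n$ into stabiliser states orthogonal to $\ket0$, I get $L(\sigma)\le 1/d$, with equality only if the decomposition is supported on maximisers, i.e.\ $\sigma=\sum_A p_A\,\ketbra{s_A}{s_A}$. To see that the bound is attained and $L(\lambda)=1/d$, I would check that the uniform mixture $p_A=(d^n-1)^{-1}$ reproduces the stated entries of $\lambda$. This reduces to counting the hyperplanes $A$ avoiding $0$ that contain a given pair $x,y$: one finds $d^{n-1}$ such planes when $x=y\ne0$, $d^{n-2}$ when $x,y$ are linearly independent, and none when $y$ is a scalar multiple of $x$ with $0\neq y\neq x$; together with $\langle 0|s_A\rangle=0$ these counts match $\lambda_{x,x}=(d^n-1)^{-1}$, $\lambda_{x,y}=d^{-1}(d^n-1)^{-1}$, and the vanishing entries in the definition of $\lambda$.

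It remains to prove uniqueness, which is the crux. Imposing the diagonal constraints on $\sigma=\sum_A p_A\ketbra{s_A}{s_A}$ yields the linear system $\sum_{A\ni x}p_A=d^{n-1}/(d^n-1)$ for all $x\neq0$, i.e.\ $Mp=\beta\mathbf{1}$ with $\beta=d^{n-1}/(d^n-1)$ and incidence matrix $M_{x,a}=\ind[a\cdot x=1]$ acting on functions on the $d^n-1$ nonzero vectors. I would show $M$ is invertible: writing $\ind[a\cdot x=1]=\tfrac1d\sum_{t\in\F_d}\omega^{t(a\cdot x-1)}$, the condition $Mf=0$ becomes $\sum_{t\in\F_d}\omega^{-t}\hat f(tx)=0$ for all $x$, where $\hat f(\xi)=\sum_a\omega^{\xi\cdot a}f(a)$. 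Read along each line through the origin, this says the one-dimensional Fourier transform of $t\mapsto\hat f(tx)$ vanishes at every nonzero frequency, so $\hat f$ is constant on each line and hence globally constant; therefore $f\propto\delta_0$, and $f(0)=0$ forces $f=0$. Invertibility then gives the unique solution $p=\beta M^{-1}\mathbf{1}=(d^n-1)^{-1}\mathbf{1}$ (using $M\mathbf{1}=d^{n-1}\mathbf{1}$), so $\sigma=\lambda$.

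Finally, a unique maximiser of a linear functional over a polytope is automatically an extreme point: if $\lambda=\tfrac12(\mu+\nu)$ with $\mu,\nu\in\mathrm{P}_n$, then $L(\mu)=L(\nu)=1/d$, whence $\mu=\nu=\lambda$ by uniqueness. This yields the claim that $\lambda$ is a vertex of $\mathrm{P}_n$. I expect the invertibility of $M$ to be the only nontrivial step; the bounding of $L$ and the hyperplane counts are routine.
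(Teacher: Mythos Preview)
Your proposal is correct and follows essentially the same strategy as the paper's proof: bound $L$ on stabiliser states orthogonal to $\ket0$ via the affine-support structure, identify the maximisers as uniform superpositions over proper affine hyperplanes, and then use a Fourier argument to show that the $d^n-1$ diagonal constraints pin down the uniform mixture uniquely. The only cosmetic differences are that the paper computes the entries of $\lambda$ via $\GL(\F_d^n)$-transitivity rather than your explicit hyperplane counts, and establishes uniqueness by proving linear independence of the indicator functions $\ind_K$ directly (Fourier-transforming them to nontrivial characters on one-dimensional subspaces) rather than by your dual kernel computation on the incidence matrix $M$; these are two sides of the same Fourier coin.
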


\begin{proof}[Proof of Lemma \ref{lem:lambda-extremality-AD}]
For any $\sigma\in\mathrm{P}_n$, we have
\begin{align}
\label{eq:L-functional}
    L(\sigma)
    =
    \langle +| \sigma | + \rangle
    =
    \sum_s p_s |\langle +| s\rangle|^2,
\end{align}
where $\ket s$ ranges over stabiliser states orthogonal to $\ket{0}$.
Among those, the inner product $|\braket{+}{s}|^2$ is maximal and equal to $1/d$ exactly for states such that  $\braket{x}{s}$ is proportional to an indicator function on an affine space $K$ of codimension $1$ with $0\neq K$.
We call $K$ a \emph{proper affine hyperplane}.

We can write any affine hyperplane as $K=V+w$ where $V$ is a linear subspace of codimension 1 and $w$ is an affine shift.
Those are only determined modulo $V$, hence there are $|\F_d^n/V|= d$ many.
The condition $0\neq K$ implies that the shift cannot be trivial, eliminating one possibility.
The number of linear subspaces of codimension 1 is given by the Gaussian binomial coefficient $\binom{n}{n-1}_d = \frac{1-d^n}{1-d}$, hence the number of proper affine hyperplanes is $(d-1)\frac{1-d^n}{1-d} = d^n-1$.

Define $\lambda$ to be the uniform convex combination of all maximising stabiliser states $\ket{K}$ given by indicator functions on the proper affine hyperplanes $K$.
For any $x\in\F_d^n\setminus 0$, the diagonal entry $\lambda_{x,x}$ is the overlap $\braket{x}{K}\propto \ind_K(x)$ averaged over $K$.
As $\GL(\F_d^n)$ acts transitively on both the non-zero points in $\F_d^n$ and the affine spaces not containing zero, $\lambda_{x,x}$ cannot depend on $x\neq 0$.
Since $\tr \lambda = 1$ and $\lambda_{0,0}=0$, we thus find $\lambda_{x,x}=(d^{n}-1)^{-1}$.
For the off-diagonal entries $\lambda_{x,y}$ with $x\neq y$, we can argue similarly.
First, as no $K$ contains zero, we have $\lambda_{x,0}=\lambda_{0,x}=0$.
Moreover, if $x\in K$, no non-trivial multiple of $x$ is in $K$, thus $\lambda_{x,tx}=\lambda_{tx,x}=0$ for $t\neq 1$.
In any other case, $\{x,y\}$ is linearly independent and transitivity again implies that $\lambda_{x,y}$ cannot depend on $(x,y)$.
There are in total $(d^n-1)(d^n-d)$ many of these pairs.
By construction, $L(\lambda)=1/d$, and writing out this condition then yields $\lambda_{x,y} = d^{-1}(d^n-1)^{-1}$.

It remains to be shown that this solution is unique.
To this end, we claim that the $d^n-1$ indicator functions $\ind_K$ on the proper affine hyperplanes $K$ are linearly independent.
As the main diagonal of the density matrix of any stabiliser state $\ket{K}$ is proportional to $\ind_K$, the $d^n-1$ constraints $\sigma_{x,x}=(d^n-1)^{-1}$ for $x \neq 0$ defining $\mathrm{P}_{n}$ then single out the above constructed $\lambda$.

To prove the claim, we apply the standard (cyclic) Fourier transform on $\FF_d^n$. 
Clearly, the set of indicator functions on proper affine hyperplanes $\{\ind_K\}$ is linearly independent if and only if their Fourier transforms are.
The image of the indicator function on $K=V+w$ is a function with support on $V^\perp$ and values proportional to the additive character 
\begin{align*}
    \chi: V^\perp \to \CC, \qquad x\mapsto \omega^{w \cdot x}.
\end{align*}
As we assume $w$ to be non-trivial, varying $w$ results in the set of non-trivial characters on the one-dimensional subspace $V^\perp$.
Thus, the set $\{\ind_K\}$ maps to the set of non-trivial characters on the one-dimensional subspaces of $\F_d^n$.
On a fixed subspace $V^\perp$, the non-trivial characters are linearly independent and this is still true for their restriction to the non-zero points $V^\perp\setminus 0$.
Since the non-zero points of one-dimensional subspaces form a disjoint partition of $\FF_d^n\setminus 0$, the set of all non-trivial characters of one-dimensional subspaces is also linearly independent.
\end{proof}

From the proof and Lemma \ref{lem:diag-subpolytope-2} it is clear that the matrix $\lambda$ defines a CSP channel which should be understood as a generalisation of the $\Lambda$ channel given for $n=d=2$ in Sec.~\ref{sec:min-result}.
For qubits, $d=2$, this channel reads as follows:
\begin{equation}
   \Lambda(\rho) := \rho_{00} \ketbra{+}{+} +  \sum_{x\in\F_2^n\setminus 0} \rho_{xx} \ketbra{x}{x} + \frac 1 2 \sum_{\substack{x,y\in\F_2^n\setminus 0 \\ x \neq y}} \rho_{xy} \ketbra{x}{y}, \qquad \rho_{xy}:=\sandwich{x}{\rho}{y}.
\end{equation}

The $n\geq 2$ case in our main Theorem \ref{thm:main} now follows from the straightforward observation that the linear functional $L$ is always strictly less than its maximum $1/d$ on elements of the form given in Lemma \ref{lem:so-in-ad}, in particular on stabiliser operations.

\begin{corollary}[$\SO_n\cap\AD_n \neq \CSP_n \cap \AD_n$]
\label{cor:SO-cap-AD}
 For $n\geq 2$, the value of the linear functional $L$ on $\SO_n\cap\AD_n$ is strictly smaller than $1/d = L(\lambda)$. 
 In particular, $\SO_n\cap\AD_n \neq \CSP_n \cap \AD_n$.
\end{corollary}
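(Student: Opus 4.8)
The plan is to reduce the statement to the structural description of almost-diagonal stabiliser operations and then feed it into a short estimate. Under the isomorphism of Lemma~\ref{lem:diag-subpolytope-2} the functional in question is $L(\sigma)=\sandwich{+}{\sigma}{+}$ on $\mathrm{P}_n$, and by Lemma~\ref{lem:so-in-ad} every $\mathcal{O}\in\SO_n\cap\AD_n$ corresponds to a convex combination of matrices of the form $\sigma=\frac{1}{d^n-1}\sum_{K\in\mathcal{K}}|K|\,\ketbra{s_K}{s_K}$, where $\mathcal{K}$ is a disjoint partition of $\F_d^n\setminus 0$ into affine spaces $K\not\ni 0$ and each $\ket{s_K}$ is a stabiliser state supported on $K$. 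Since $L$ is linear, it suffices to prove the strict bound $L(\sigma)<1/d$ for a single such $\sigma$; the bound then passes to all convex combinations.

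First I would evaluate $L(\sigma)=\frac{1}{d^n-1}\sum_{K\in\mathcal{K}}|K|\,|\braket{+}{s_K}|^2$ and estimate each overlap. Because $\ket{s_K}$ is normalised and supported on $K$, writing $\ket{+}=d^{-n/2}\sum_x\ket{x}$ and applying Cauchy--Schwarz to $\braket{+}{s_K}=d^{-n/2}\sum_{x\in K}\braket{x}{s_K}$ gives $|\braket{+}{s_K}|^2\le |K|/d^n$. Substituting yields $L(\sigma)\le \frac{1}{d^n(d^n-1)}\sum_{K}|K|^2$, which is the only analytic input needed.

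The decisive step is the combinatorial bound on $\sum_K|K|^2$. Any affine space $K\subseteq\F_d^n$ avoiding $0$ has dimension at most $n-1$, so $|K|\le d^{n-1}$; combined with $\sum_K|K|=|\F_d^n\setminus 0|=d^n-1$ this gives $\sum_K|K|^2\le d^{n-1}\sum_K|K|=d^{n-1}(d^n-1)$ and hence $L(\sigma)\le 1/d$. To make this \emph{strict}, I would note that equality would force $|K|=d^{n-1}$ for every block, i.e.\ a tiling of $\F_d^n\setminus 0$ by affine hyperplanes. This is impossible for $n\ge 2$: since $d^n-1=d\cdot d^{n-1}-1$, the size $d^{n-1}\ge d$ does not divide $d^n-1$, so no partition into blocks of size $d^{n-1}$ exists. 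Therefore at least one block has $|K|<d^{n-1}$, the inequality $\sum_K|K|^2\le d^{n-1}(d^n-1)$ is strict, and consequently $L(\sigma)<1/d$ on all of $\SO_n\cap\AD_n$. Finally, since $\lambda$ corresponds to a channel in $\CSP_n\cap\AD_n$ with $L(\lambda)=1/d$ by Lemma~\ref{lem:lambda-extremality-AD}, the point $\lambda$ lies in $\CSP_n\cap\AD_n$ but not in $\SO_n\cap\AD_n$, which proves the separation.

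I expect the main obstacle to be locating precisely where strictness enters, since the soft bound $L\le 1/d$ is immediate and by itself says nothing. The genuine content — and the reason the statement fails for $n=1$, where $\F_d\setminus 0$ \emph{is} partitioned by its $d-1$ points (hyperplanes of size $d^0=1$) and equality is attained — is the arithmetic fact that $d^{n-1}\nmid d^n-1$ for $n\ge 2$. I would therefore take care to argue strictness entirely from the size inequality $|K|^2\le d^{n-1}|K|$, which must fail for some block, rather than from the Cauchy--Schwarz step, so that the conclusion holds independently of the choice of stabiliser states $\ket{s_K}$.
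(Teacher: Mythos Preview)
Your proof is correct and follows essentially the same line as the paper's. Both arguments reduce to the observation that in a disjoint partition of $\F_d^n\setminus 0$ by affine spaces avoiding $0$, not every block can be a hyperplane when $n\ge 2$; you make this explicit via the divisibility $d^{n-1}\nmid d^n-1$, while the paper asserts it without detail. The only cosmetic difference is that the paper bounds each overlap directly by $|\braket{+}{s_K}|^2\le 1/d$ (quoting the characterisation from Lemma~\ref{lem:lambda-extremality-AD}) and then sums with weights $|K|$, whereas you keep the intermediate Cauchy--Schwarz bound $|\braket{+}{s_K}|^2\le |K|/d^n$ and push the strictness into $\sum_K|K|^2$; these are the same estimate applied in a different order.
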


\begin{proof}
Consider a stabiliser operation with $ \sigma \in P_n $ as constructed in Lemma \ref{lem:so-in-ad}, i.e. 
\begin{equation}
	\sigma = \frac{1}{d^n-1} \sum_{K\in \mathcal{K}} |K| \ketbra{s_K}{s_K}.
\end{equation}
Since $n\geq 2$, the disjoint partition $ \mathcal{K} $ of $ \FF_d^{n}\setminus \{0\} $ cannot only contain affine subspaces $ K $ of codimension $ 1 $, so $ |\langle + | s_K \rangle |^2 \le  1/d $ and $ |\langle + | s_K \rangle |^2 <  1/d  $ for at least one $ K \in \mathcal{K} $.
Hence, 
\begin{align}\label{eq:estimation-of-overlapI}
	\langle + | \sigma | + \rangle = \frac{1}{d^n-1}  \sum_{K \in \mathcal{K}} |K| \, | \langle +|s_K \rangle |^2  < \frac{1}{d^n-1}\sum_{K\in \mathcal{K}} |K| \frac{1}{d} = \frac{1}{d}.
\end{align}
\end{proof}

\begin{remark}
	To get a quantitative statement about the separation of $ \SO_n $ and $ \CSP_n $ within the polytope $ \AD_n $, we give a upper bound for 
	\begin{myalign}
		\max_{\sigma \in \SO_n \cap \AD_n}\langle +| \sigma | +\rangle.
	\end{myalign}
As in Eq.~\ref{eq:estimation-of-overlapI}, we have 
\begin{align}\label{eq:estimation-of-overlapII}
	\langle + | \sigma | + \rangle = \frac{1}{d^n-1}  \sum_{K \in \mathcal{K}} |K| \, | \langle +|s_K \rangle |^2  \le \frac{1}{d^n-1} \sum_{K \in \mathcal{K}} |K|\frac{|K|}{d^n} = \frac{1}{d^n(d^n-1)} \sum_{K \in \mathcal{K}} |K|^2.
\end{align}
The RHS gets large when affine subspaces in the disjoint partition $ \mathcal{K} $ of $ \F_2^n \setminus \{0\} $ have large cardinality. However, the partition $ \mathcal{K} $ must be chosen according to Thm.~\ref{thm:kraus-decomposition-SO}. Thus, all projectors $ \ketbra{s_K}{s_K} $ belong to  a projective measurement, which also contains the measurement of $ \ketbra{0^n}{0^n} $, due to the proof of Lemma \ref{lem:so-in-ad}. 
We conjecture that such a projective measurement which maximizes \eqref{eq:estimation-of-overlapII}
has the following form:
\begin{enumerate}[label=(\roman*)]
	\item Measure the first qudit in the computational basis.   
	\item If $ x \neq 0 $ is measured, do nothing.  
    If $ 0 $ is measured, measure the second qudit in the computational basis.  
    Continue in this fashion until all qubits are measured. 
	\item If $ 0 $ has been measured on every qudit, apply a Hadamard gate to every qudit.  
\end{enumerate}
Then, $ \ket{s_K} = \sum_{x \in K} \ket{x} $ and every $ K $ is of the form 
\begin{align}
	K = xe_i + (0^i \oplus \FF_d^{n-i }) \qquad \text{with} \qquad x \in \{1,\ldots,d-1\}, \quad   |K| = d^{n-i},
\end{align}
where $ e_i $ is the $ i $-th standard basis vector for $ i = 0,.\ldots,n-1 $. 
Thus, we have
\begin{align}
	\sum_{K \in \mathcal{K}} |K|^2 = \sum_{k = 0}^{n-1}  (d-1) d^{2k} = (d-1) \frac{d^{2n-2}}{d^2-1},
\end{align}
where we used that the expression is a geometric sum. 
Hence, 
\begin{multline}
	\langle + | \sigma | + \rangle  = \sum_{K \in \mathcal{K}} |K|\frac{|K|}{d^n} = \frac{1}{d^n(d^n-1)} \sum_{K \in \mathcal{K}} |K|^2\\
     = \frac{(d-1)d^{2n-2}}{d^n(d^n-1)(d^2-1)} = \frac{(d-1)d^{n-2}}{(d^n-1)(d^2-1)} \approx \frac{1}{d^3}.
\end{multline}
As the above stabiliser operation gives a upper bound on $ \max_{\sigma \in \SO_n \cap \AD_n}\langle +| \sigma | +\rangle $, this shows a separation between $\SO_n\cap\AD_n$ and $\CSP_n\cap\AD_n$
which depends, however, only on $ d $ and not on $n$.
\end{remark}

%-----------------------------------------------------------
\subsection{Equality of SO and CSP in the single-qudit case}
\label{sec:csp-eq-so}
%-----------------------------------------------------------
In this section, we will prove that $ \CSP $-channels coincide with stabiliser operations in the single-qudit case. More precisely, we will show that every extremal $ \CSP $-map is a Pauli measurement followed Clifford unitaries conditioned on the possible measurement outcomes. 
In the proof we will make use of the polar form of $ \CSP $-maps, Eq.~\eqref{eq:csp-polar-decomposition}, App.~\ref{sec:additional} (for more details, see \cite{heinrich_2021}).

To prove the statement, we will use the following auxiliary lemma:
\begin{lemma}
\label{lem:double-projectors}
Suppose $\mathcal{E}\in\CSP_n$ is a CSP map given in the polar form \eqref{eq:csp-polar-decomposition}
\begin{myalign}
\mathcal{E} = \sum_{i} c_i U_i P_i \cdot P_i U_i^\dagger, \qquad \text{where } c_i > 0 \text{ for all } i.
\end{myalign}
Assume that there exists an index pair $(k,\ell)$ with $P_{k}=P_{\ell}$ but $U_{k} P_{k }\neq U_{\ell} P_{\ell}$.
Then, $\mathcal{E}$ is not extremal.
\end{lemma}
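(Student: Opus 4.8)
The plan is to exhibit $\mathcal{E}$ as a proper convex combination of two distinct CSP channels, obtained by shifting weight between the two terms that share the projector $P := P_k = P_\ell$. Write $M_k := U_k P$ and $M_\ell := U_\ell P$, so that the two offending terms are the rank-one completely positive maps $c_k\, M_k \cdot M_k^\dagger$ and $c_\ell\, M_\ell \cdot M_\ell^\dagger$. For a parameter $t$ with $0 < t < \min(c_k, c_\ell)$ I would set
\[
\mathcal{E}_\pm := \sum_{i \neq k,\ell} c_i\, U_i P_i \cdot P_i U_i^\dagger + (c_k \pm t)\, M_k \cdot M_k^\dagger + (c_\ell \mp t)\, M_\ell \cdot M_\ell^\dagger ,
\]
so that $\mathcal{E} = \tfrac12(\mathcal{E}_+ + \mathcal{E}_-)$ by construction. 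It then remains to verify that $\mathcal{E}_\pm \in \CSP_n$ and that $\mathcal{E}_+ \neq \mathcal{E}_-$.

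Both membership claims follow from Lemma~\ref{lem:csp-choi} together with the equality $P_k = P_\ell$. Each term contributes the rank-one operator $\ketbra{\psi_i}{\psi_i}$ to the Choi representation, where $\ket{\psi_i} = (U_i P_i \otimes \one)\ket{\phi^+}$ is a (scaled) stabiliser state, since $\ket{\phi^+}$ is stabiliser, $P_i$ is a stabiliser-code projector, and $U_i$ is Clifford. Thus $\mathcal{J}(\mathcal{E}_\pm) = \mathcal{J}(\mathcal{E}) \pm t\big(\ketbra{\psi_k}{\psi_k} - \ketbra{\psi_\ell}{\psi_\ell}\big)$ is again a nonnegative combination of the \emph{same} stabiliser states; because $P_k = P_\ell$ the vectors $\ket{\psi_k}$ and $\ket{\psi_\ell}$ have equal norm, so the total weight — and hence $\tr \mathcal{J}(\mathcal{E}_\pm)$ — is unchanged, and for $t$ in the stated range all weights remain nonnegative. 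Hence $\mathcal{J}(\mathcal{E}_\pm) \in \SP_{2n}(d)$. Trace-preservation is preserved for the same reason: $\tr_1 \ketbra{\psi_i}{\psi_i}$ depends only on $P_i$ (it is proportional to $\overline{P_i}$), so the perturbation $t\,\tr_1(\ketbra{\psi_k}{\psi_k} - \ketbra{\psi_\ell}{\psi_\ell})$ vanishes and $\tr_1 \mathcal{J}(\mathcal{E}_\pm) = \tr_1 \mathcal{J}(\mathcal{E})$. Equivalently, the Kraus completeness relation $\sum_i c_i P_i = \one$ is untouched since the shifted terms carry the same projector. By Lemma~\ref{lem:csp-choi}, $\mathcal{E}_\pm \in \CSP_n$.

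Finally, $\mathcal{E}_+ - \mathcal{E}_- = 2t\big(M_k \cdot M_k^\dagger - M_\ell \cdot M_\ell^\dagger\big)$ has Choi operator $2t\big(\ketbra{\psi_k}{\psi_k} - \ketbra{\psi_\ell}{\psi_\ell}\big)$, which vanishes precisely when $\ket{\psi_k}$ and $\ket{\psi_\ell}$ span the same ray, i.e.\ when $M_k = U_k P$ and $M_\ell = U_\ell P$ are proportional. This proportionality is the one point to handle with care, and I expect it to be the main subtlety: the hypothesis only asserts $U_k P \neq U_\ell P$, which a priori permits $U_k P = \e^{\i\theta} U_\ell P$ with $\theta \neq 0$. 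The resolution is that in the polar form~\eqref{eq:csp-polar-decomposition} distinct terms correspond to distinct stabiliser states $\ket{\psi_i}$ — a proportional pair would collapse to a single term with combined coefficient — so $U_k P \neq U_\ell P$ forces $\ket{\psi_k}$ and $\ket{\psi_\ell}$ to be non-proportional, whence $\mathcal{J}(\mathcal{E}_+) \neq \mathcal{J}(\mathcal{E}_-)$. By injectivity of the Choi–Jamiołkowski map this yields $\mathcal{E}_+ \neq \mathcal{E}_-$, so $\mathcal{E} = \tfrac12(\mathcal{E}_+ + \mathcal{E}_-)$ is a proper convex decomposition and $\mathcal{E}$ is not extremal.
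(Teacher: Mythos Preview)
Your argument is correct and follows the same weight-shifting idea as the paper: since $P_k=P_\ell$, redistributing the coefficients $c_k,c_\ell$ between the two terms leaves the TP condition~\eqref{eq:povm-tp-cond} intact and yields two distinct CSP channels whose convex combination is $\mathcal{E}$. The paper simply makes the extreme redistribution (placing all of $c_k+c_\ell$ on one term, then on the other) to obtain $\mathcal{E}=\tfrac{c_k}{c_k+c_\ell}\mathcal{E}_k+\tfrac{c_\ell}{c_k+c_\ell}\mathcal{E}_\ell$, whereas you perturb symmetrically by $\pm t$; you are also more explicit than the paper in flagging the phase-proportionality issue behind ``distinct''.
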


\begin{proof}
Since $ \mathcal{E} \in \CSP_n$, the projectors $ P_i $ satisfy the TP-condition \eqref{eq:povm-tp-cond}
\begin{align}
	\one = c_k P_k + c_\ell P_\ell + \sum_{k \neq i \neq \ell} c_i P_i 
\end{align} 
and therefore 
\begin{align}
	\one = (c_k+c_\ell)P_k + \sum_{k \neq i \neq \ell} c_i P_i \quad \text{and} \quad  \one = (c_k+c_\ell) P_\ell + \sum_{k \neq i \neq \ell} c_i P_i.
\end{align}
Hence, $ \mathcal{E} $ is a convex combination $ \mathcal{E} = \frac{c_k}{c_k+c_\ell} \mathcal{E}_k + \frac{c_\ell}{c_k+c_\ell} \mathcal{E}_\ell $ of the two distinct CSP-channels
\begin{myalign}
	\mathcal{E}_k &= (c_k+c_\ell)  U_kP_k \cdot P_k U_k + \sum_{k \neq i \neq \ell} c_iU_iP_i \cdot P_i U_i^\dagger,  \\
	\mathcal{E}_\ell &=   (c_k+c_\ell)  U_\ell P_\ell \cdot P_\ell U_\ell + \sum_{k \neq i \neq \ell} c_iU_iP_i \cdot P_i U_i^\dagger, 
\end{myalign}
so $ \mathcal{E} $ cannot be extremal. 
\end{proof}

\begin{theorem}
Let $ \mathcal{E} \in \CSP_1$ be an extremal CSP map on a single qudit of prime dimension $d$. 
Then, either $ \mathcal{E} = U\cdot U^\dagger $ for some Clifford unitary $ U $ or $ \mathcal{E} $ is of the form 
\begin{myalign}\label{eq:extremal-channel-n-1}
	\mathcal{E} = \sum_{i=1}^d U_i P_i \cdot P_i U_i^\dagger,
\end{myalign}
where $ \{ P_i \}$ are the $d$ mutually orthogonal stabiliser code projectors associated to the eigenspaces of a Pauli operator and $ \{ U_i \} $ are Clifford unitaries.
Since such a channel $ \mathcal{E} $ can be realised via stabiliser operations, it follows $  \SO_1 = \CSP_1 $.
\end{theorem}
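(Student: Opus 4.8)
The plan is to start from the polar decomposition \eqref{eq:csp-polar-decomposition} of a CSP map, $\mathcal{E} = \sum_i c_i U_i P_i \cdot P_i U_i^\dagger$ with $c_i>0$ and the trace-preservation constraint \eqref{eq:povm-tp-cond}, i.e.\ $\one = \sum_i c_i P_i$. First I would invoke Lemma~\ref{lem:double-projectors} to assume, without loss of generality, that the stabiliser code projectors $P_i$ are pairwise distinct: whenever two terms share a projector, extremality forces $U_iP_i=U_jP_j$, so those terms may be merged. The crucial reduction is then that extremality forces the family $\{P_i\}$ to be \emph{linearly independent}. Indeed, if $\sum_i \delta_i P_i = 0$ for some nonzero $(\delta_i)$, then for small $\epsilon$ the coefficients $c_i \pm \epsilon\delta_i$ stay positive and still satisfy the TP-constraint, so they define CSP maps $\mathcal{E}_\pm$ with $\mathcal{E}=\tfrac12(\mathcal{E}_++\mathcal{E}_-)$; since the terms $U_iP_i\cdot P_iU_i^\dagger$ have pairwise distinct pure stabiliser states as Choi matrices and are therefore linearly independent, $\mathcal{E}_+\neq\mathcal{E}_-$, contradicting extremality.

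Next I would exploit that for a single qudit the only stabiliser code projectors are the identity $\one$ (rank $d$) and the rank-one projectors $\ketbra{s}{s}$ onto stabiliser states. Linear independence immediately rules out mixed families: if some $P_i=\one$ while other terms are rank one, the TP-relation expresses $\one$ as a combination of the rank-one projectors, contradicting independence. Hence either all $P_i=\one$ or all are rank one. In the first case $\sum_i c_i=1$ and $\mathcal{E}=\sum_i c_i\,U_i\cdot U_i^\dagger$ is a convex combination of Clifford channels, which by extremality (equivalently, directly by Lemma~\ref{lem:double-projectors}) collapses to a single $\mathcal{E}=U\cdot U^\dagger$.

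The main work — and the step I expect to be the principal obstacle — is the rank-one case $\sum_i c_i\ketbra{s_i}{s_i}=\one$ with distinct stabiliser states $\ket{s_i}$ and linearly independent projectors. Here I would pass to traceless parts: since each $\ketbra{s_i}{s_i}-\tfrac1d\one$ is traceless, the relation forces $\sum_i c_i=d$ together with $\sum_i c_i(\ketbra{s_i}{s_i}-\tfrac1d\one)=0$. The key structural input is that, for prime $d$, the traceless part of a stabiliser projector lies in the Cartan subalgebra spanned by the $d-1$ nontrivial Paulis of its maximal abelian stabiliser group, and that the $d+1$ such subalgebras (one per stabiliser eigenbasis, i.e.\ per mutually unbiased basis) are mutually orthogonal and give an orthogonal decomposition of the traceless Hermitian operators. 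Grouping the $\ket{s_i}$ by their basis, this orthogonality forces the contribution of each basis to vanish \emph{separately}. A short Gram-matrix computation (the traceless parts of $k$ orthonormal projectors are linearly independent when $k<d$, the only relation among a full set of $d$ being that they sum to zero) then shows that any basis which occurs must occur \emph{in full} and with \emph{equal} weights. Finally, linear independence of the whole family forbids two full bases, since each basis' projectors already sum to $\one$; hence exactly one basis occurs, with all weights equal to $1$. This basis is the eigenbasis of a single Pauli operator, yielding precisely the form \eqref{eq:extremal-channel-n-1}.

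It then only remains to observe that both extremal types are stabiliser operations: $U\cdot U^\dagger$ is a Clifford channel, and the projective measurement $\{P_i\}$ of a Pauli operator followed by the outcome-conditioned Cliffords $U_i$ is a stabiliser operation by definition. Since $\SO_1$ is convex and contained in $\CSP_1$, and since every CSP map is a convex combination of the extremal maps just characterised, we conclude $\CSP_1\subseteq\SO_1$; combined with the trivial inclusion this gives $\SO_1=\CSP_1$.
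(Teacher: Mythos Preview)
Your proof is correct and reaches the same conclusion as the paper, but the route is organised around a different principle. The paper argues more directly: it peels off any Clifford-unitary summand as a convex component (since each $V_j\cdot V_j^\dagger$ is already a CSP channel) to reduce to the rank-one case, then traces the TP constraint against the nontrivial powers of each Pauli operator to obtain $\sum_i \lambda_{i,a}\,\omega^{ki}=0$ for $k=1,\dots,d-1$, which forces the weights within each basis to be either all zero or all equal; finally, it writes $\mathcal{E}=\sum_a\tilde\lambda_a\mathcal{E}_a$ as a convex combination of CSP channels and invokes extremality once more to select a single basis.

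Your argument instead distils a single structural criterion---extremality forces the family $\{P_i\}$ to be linearly independent---via a perturbation of the coefficients that preserves the TP condition. This criterion then does triple duty: it rules out mixing the identity with rank-one projectors, it (combined with your Gram-matrix observation) forces any occurring MUB to occur in full with equal weights, and it excludes two complete bases. The MUB/Cartan decomposition you use is equivalent to the paper's ``trace against a Pauli'' step, just phrased geometrically rather than via characters. What your framing buys is uniformity: one lemma governs all three reductions, and the method would generalise to other settings where extremal polar decompositions are studied. What the paper's approach buys is brevity: no perturbation argument is needed, and the final step (one basis only) is a one-line convexity observation rather than a linear-dependence check.
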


\begin{proof}
Using the characterization of completely stabiliser preserving maps, cf.~Eq.~\eqref{eq:csp-polar-decomposition}, we may assume that a $ 1 $-qudit CSP channel is of the form
\begin{align}
	\mathcal{E} = d\sum_{i} \lambda_i U_i P_i  \cdot P_i U_i^\dagger + \sum_{j} \hat{\lambda}_j V_j \cdot V_j^\dagger
\end{align}
for coefficients $ \lambda_i, \hat{\lambda}_j \ge 0$ 
with $ \sum_i \lambda_i + \sum_j \hat{\lambda}_j = 1$, 
 Clifford unitaries $ U_i, V_j $ and stabiliser code projectors $ P_i $ which satisfy the TP-condition \eqref{eq:povm-tp-cond} :
\begin{align}\label{eq:TP-condition-n=1}
	\one = \mathcal{E}^\dagger(\one) = d \sum_{i} \lambda_i P_i. 
\end{align}
Since any channel that simply conjugates the input with a Clifford unitary $ U $ is already an extremal CSP channel, $ \mathcal{E} $ can only be extremal
if	(1)  there is exactly one non-zero $ \hat{\lambda}_j $ and $ \lambda_i = 0 $ for all $ i $ (which means that $ \mathcal{E}= U \cdot U^\dagger $ for some Clifford unitary $ U $), or (2)  $ \hat{\lambda}_j = 0 $ for all $ j $.

In the second case, note that for $n=1$, all stabiliser projectors have rank 1 and project onto a stabiliser state.
There are in total $d(d+1)$ stabiliser states $\ket{\phi_{i,a}}$ which form a complete set of mutually unbiased bases, in particular for any $a=1,\dots,d+1$ the set $\{\ket{\phi_{i,a}}\}_i$ is an orthonormal basis.
%and  $|\braket{\phi_{i,a}}{\phi_{j,b}}|^2 = d^{-1}$ whenever $a\neq b$.
By Lemma \ref{lem:double-projectors}, we can assume that every projector $\ketbra{\phi_{i,a}}{\phi_{i,a}}$ only occurs at most once in $\mathcal{E}$.
Thus, grouping the projectors by their basis, we can write the CSP channel $\mathcal{E}$ as
\begin{align}
	\mathcal{E} = d \sum_{a=1}^{d+1} \sum_{i=1}^d \lambda_{i,a} U_{i,a} \ketbra{\phi_{i,a}}{\phi_{i,a}} \cdot \ketbra{\phi_{i,a}}{\phi_{i,a}} U_{i,a}^\dagger.
\end{align} 
Since every basis $\{\ket{\phi_{i,a}}\}_i$ is the eigenbasis of a (non-trivial) Pauli operator $ w $ and $ \Tr(w \ketbra{\phi_{i,a'}}{\phi_{i,a'}}) = 0 $ for $ a \neq a' $, taking the trace inner product of Eq.~\eqref{eq:TP-condition-n=1} multiplied with $ w $ implies 
\begin{equation}
  0 = \sum_{i=1}^d \lambda_{i,a} \omega^i,
\end{equation}
which forces the $\lambda_{i,a}$ to be either identically zero or independent of $i$. 
Setting $\tilde\lambda_a = d^{-1}\lambda_{i,a}$, we thus arrive at a convex combination of $\mathcal{E}$ into CSP channels $\mathcal{E}_a$:
\begin{align}
	\mathcal{E} = \sum_{a=1}^{d+1} \tilde\lambda_a \mathcal{E}_a, \qquad \mathcal{E}_a := \sum_{i=1}^d U_{i,a} \ketbra{\phi_{i,a}}{\phi_{i,a}} \cdot \ketbra{\phi_{i,a}}{\phi_{i,a}} U_{i,a}^\dagger.
\end{align} 
Hence, extremality of $\mathcal{E}$ implies that it is of the desired form.
\end{proof}

%%% =============================================
\section{Additional properties of CSP channels and examples}
\label{sec:additional}
%%% =============================================

In this section, we derive additional properties of completely stabiliser-preserving channels which are not directly used to show the main result of this paper.
We characterise CSP channels in terms of certain generalised stabiliser measurements and adaptive Clifford unitaries.
This is what we call the \emph{polar form} and has been used in Sec.~\ref{sec:csp-eq-so} as well as in the simulation protocol of \textcite{seddon_quantifying_2021}.
We then use this characterisation to compile a list of examples of CSP channels.

By Lemma \ref{lem:csp-choi}, completely stabiliser-preserving maps are in bijection with the subset of the bipartite $2n$-qudit stabiliser polytope fulfilling the TP condition. 
Notably, bipartite stabiliser states have a special structure that can be exploited to bring them into a standard form which we call the \emph{polar form}. 
It is given by $\ket{s}= d^{k/2} U P\otimes \one \ketb{\phi^+}$ for a Clifford unitary $U\in\Cl_{n}(d)$ and a stabiliser code projector $P$ of rank $d^{n-k}$. 
Note that from this form, one can immediately derive the Schmidt rank of $\ket{s}$ as $\log_d\rank(P) = n-k$.
While this fact seems to be folk knowledge in the relevant community and related results can be found in Refs.~\cite{howe_1973,howe_1988,fattal_entanglement_2004}, we have been unable to find an explicit formulation in the literature.
A proof of this fact can be found in the PhD thesis of one of the authors \cite[Sec.~12.3.2]{heinrich_2021}.

\begin{proposition}
	\label{prop:polar-form}
	The $2n$-qudit state $\ket{s}\in(\C^d)^{\otimes 2n}$ is a stabiliser state if and only if there is a Clifford unitary $U\in\Cl_n(d)$ and a stabiliser code $P\in\stab(k,n)$ such that $\ket{s}=d^{k/2} U P\otimes \one \ketb{\phi^+}$.
\end{proposition}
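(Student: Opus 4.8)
The plan is to prove both implications, treating the forward (``if'') direction as a short observation and concentrating on the converse, where the nontrivial content lies. Throughout I write $A$ for the first $n$ qudits and $B$ for the last $n$, and I use the \emph{ricochet} identity $(M\otimes\one)\ket{\phi^+}=(\one\otimes M^\top)\ket{\phi^+}$, valid for any operator $M$ on $A$.

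\textbf{If direction.} The state $\ket{\phi^+}$ is itself a stabiliser state, and $P\otimes\one$ is the code projector of a stabiliser group on the joint $2n$-qudit system. Projecting a stabiliser state onto a stabiliser code and renormalising yields again a stabiliser state (this is just post-selected Pauli measurement, which preserves the stabiliser property). Since $\langle\phi^+|(P\otimes\one)|\phi^+\rangle=d^{-n}\Tr P=d^{-k}\neq 0$, the vector $d^{k/2}(P\otimes\one)\ket{\phi^+}$ is a normalised stabiliser state, and applying the Clifford $U\otimes\one$ keeps it one. This is exactly $\ket{s}$, and the same computation fixes the normalisation constant as $d^{k/2}$.

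\textbf{Only-if direction.} First I would vectorise: every $\ket{s}$ can be written uniquely as $\ket{s}=(M\otimes\one)\ket{\phi^+}$ for some operator $M$ on $A$, and the task is to show that $M$ admits a factorisation $M=d^{k/2}\,UP$ into a Clifford times a stabiliser code projector. To see that the \emph{positive} part of $M$ is proportional to a projector, I would compute the reduced state on $B$. Expanding $\ketbra{s}{s}=d^{-2n}\sum_{g\in S}g$ over the rank-$2n$ stabiliser group $S$ of $\ket{s}$ and taking the partial trace over $A$ kills every term whose $A$-part is a nonidentity Pauli, leaving $\tr_A\ketbra{s}{s}=(\rank P_B)^{-1}P_B$ for the code projector $P_B$ of the ``local'' group $\{h : \one\otimes h\in S\}$. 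On the other hand the vectorisation gives $\tr_A\ketbra{s}{s}=d^{-n}\overline{M^\dagger M}$, so $M^\dagger M\propto \overline{P_B}=:P$, again a stabiliser code projector (complex conjugation maps Paulis to Paulis). Hence $\lvert M\rvert=\sqrt{M^\dagger M}\propto P$, and a polar decomposition produces a unitary $U$ with $M=d^{k/2}UP$, the partial isometry on $\ran P$ being freely extended to a unitary on its orthogonal complement.

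\textbf{Main obstacle.} The one substantive point is that the polar factor $U$ can be chosen to be a \emph{Clifford} unitary; the reduced-state computation alone only yields the projector structure. To settle this I would invoke the canonical form for bipartite stabiliser states \cite{fattal_entanglement_2004}: there are Cliffords $U_A,U_B$ with $\ket{s}=(U_A\otimes U_B)\ket{\psi_0}$, where $\ket{\psi_0}$ is a fixed number of EPR pairs tensored with computational-basis states. Writing $\ket{\psi_0}\propto(P_0\otimes\one)\ket{\phi^+}$ for the elementary projector $P_0=\one\otimes\ketbra{0}{0}$ and pushing $U_B$ through the maximally entangled state by the ricochet identity $(\one\otimes U_B)\ket{\phi^+}=(U_B^\top\otimes\one)\ket{\phi^+}$, I obtain $M\propto U_A P_0 U_B^\top=(U_A U_B^\top)\big((U_B^\top)^\dagger P_0 U_B^\top\big)$. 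The transpose of a Clifford is again a Clifford (transposition maps Paulis to Paulis up to phase), so $U:=U_A U_B^\top$ is Clifford and $P:=(U_B^\top)^\dagger P_0 U_B^\top$ is a stabiliser code projector, which is the desired factorisation; matching the rank of $P$ to the Schmidt rank gives $P\in\stab(k,n)$ and the prefactor $d^{k/2}$. The crux is therefore entirely contained in converting the bipartite canonical form into the one-sided polar form via the ricochet identity, and I expect the transpose-of-a-Clifford step to be the only place that appeals to a genuine (if standard) structural fact about the Clifford group.
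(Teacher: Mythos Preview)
Your argument is correct. Note, however, that the paper does not actually supply a proof of this proposition: it states the result, remarks that it ``seems to be folk knowledge'' with related results in \cite{howe_1973,howe_1988,fattal_entanglement_2004}, and defers the proof to \cite[Sec.~12.3.2]{heinrich_2021}. So there is no in-text proof to compare against; your write-up would in fact fill a gap the paper leaves open.

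Two minor comments. First, the reduced-state computation establishing $|M|\propto P$ is logically redundant once you invoke the bipartite canonical form: the factorisation $M\propto U_A P_0 U_B^\top$ already exhibits a polar decomposition with Clifford unitary part and stabiliser-code positive part, so the earlier paragraph can be dropped or kept only as motivation. Second, the canonical form of \cite{fattal_entanglement_2004} is stated for qubits; for general prime $d$ the analogous statement follows from Witt's lemma for the symplectic form on $\F_d^{2n}$ (any isotropic subspace can be moved to a standard one by a symplectic, hence Clifford, transformation), which is presumably what the Howe references and the cited thesis use. You should flag this when citing the result, since the proposition is asserted for all prime $d$.
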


While the projective part in the polar form of a stabiliser state is unique, the unitary part is not. 
This is because replacing the Clifford unitary by $U\mapsto UV$ where $V$ acts trivially on the code space gives an equivalent presentation of the state.
Technically, this means that the unitary part is unique \emph{up to the left Clifford stabiliser} of the stabiliser code.

Using the polar form, the polytope of CSP maps can be characterised as follows:
The $\SP_{2n}(d)$ polytope corresponds under the inverse Choi-Jamiołkowski isomorphism to the polytope which is spanned by channels with a single stabiliser Kraus operator $d^{k/2}UP$.
Hence, any CSP map is of the form 
\begin{equation}
	\label{eq:csp-polar-decomposition}
	\mathcal E = \sum_{i=1}^r \lambda_i \frac{d^n}{\rank P_i} U_i P_i \cdot P_i U_i^\dagger, 
\end{equation}
where the $\lambda_i$ form a probability distribution.
However, Eq.~\eqref{eq:csp-polar-decomposition} only defines a valid CSP map $\mathcal{E}$ if it is trace-preserving.
We can cast the TP condition into an appealing form: 
$\mathcal E$ is a CSP map if and only if in addition to Eq.~\eqref{eq:csp-polar-decomposition}, it fulfils
\begin{equation}
	\label{eq:povm-tp-cond}
	\one = \mathcal E^\dagger(\one) =  \sum_{i=1}^{r}  \frac{d^{n}\lambda_i}{\rank P_i} P_i.
\end{equation}
Thus, a sufficient and necessary condition for a convex combination of stabiliser Kraus operators to define a CSP map is that the rescaled projective parts $\tilde P_i := (d^n\lambda_i/\rank P_i) P_i$ form a POVM. 
In this context, the CSP channel $\mathcal{E}$ in Eq.~\eqref{eq:csp-polar-decomposition} can be seen as the quantum instrument associated with the stabiliser POVM $\{\tilde P_i\}$ combined with the application of Clifford unitaries $U_i$ conditioned on outcome $i$.

A possible solution to Eq.~\eqref{eq:povm-tp-cond} is a \emph{syndrome measurement}, i.e.~the POVM that is defined by the measurement of a set of mutually commuting Pauli operators (cp.~example 3 below).
Then, the corresponding CSP channel $\mathcal E$ is a stabiliser operation.
However, as stabiliser operations are also allowed to use auxiliary qubits, they can effectively induce more complicated POVMs that fulfil Eq.~\eqref{eq:povm-tp-cond}.
A priori, it is thus not clear whether CSP channels are different from stabiliser operations (this is, of course, answered by our main theorem \ref{thm:main}).
Interestingly, it even seems to be difficult to find solutions to Eq.~\eqref{eq:povm-tp-cond} in terms of admissible stabiliser codes $P_i$ and coefficients $\lambda_i$. 
In particular, one could think of arranging overlapping codes with the right weights in non-trivial ways such that they yield the identity on Hilbert space. 
Indeed, an example of a CSP channel defined via overlapping stabiliser codes is the $\Lambda$ channel used for our main argument, see also App.~\ref{sec:on-lambda}.
Note that given a set of stabiliser codes, it is in principle possible to decide whether there exist coefficients such that Eq.~\eqref{eq:povm-tp-cond} holds by solving a linear system of equations which depends on the structure of code overlaps.

Finally, let us give some examples of CSP maps:
\begin{enumerate}
	\item \emph{Mixed Clifford channels.} Take $P_i \equiv \one$, then $d^n/\rank P_i =1$ and Eq.~\eqref{eq:povm-tp-cond} is trivially fulfilled for any convex combination.
	\item \emph{Dephasing in a stabiliser basis.} Take a basis of stabiliser states, and let $P_i$ be the rank-one projectors onto the basis. A uniform convex combination $\lambda_i=d^{-n}$ of these fulfils the TP condition Eq.~\eqref{eq:povm-tp-cond}.
	Such a channel corresponds to a dephasing in the chosen basis, followed by the potential application of conditional Clifford unitaries $U_i$ depending on the basis measurement outcome $i$.
	\item \emph{Dephasing in stabiliser codes.} More generally, take an arbitrary stabiliser group $S=\langle g_1,\dots,g_k\rangle$ and let $P_i$ be all $d^k$ orthogonal stabiliser codes corresponding to different phases of the generators and $\lambda_i = d^{-k}$. 
	This defines a POVM (``syndrome measurement'').
	\item \emph{Reset channels.} Let $s\in\stab(n)$ be an arbitrary stabiliser state and consider the channel which replaces every input by $s$, i.e.~$\mathcal R_s:\, X \mapsto \tr(X) s$. 
	It is clearly CSP and is a special cases of the second example where $\ket s$ is completed to a stabiliser basis and the Clifford unitaries are chosen such that all basis elements are mapped to $\ket s$.
\end{enumerate}

%%% =============================================
\section{Summary and open questions}
\label{sec:summary}
%%% =============================================

In this work, we have studied and compared two classes of free operations in the resource theory of magic state quantum computing, namely completely stabiliser-preserving (CSP) channels and stabiliser operations (SO).
Our main result shows that the set of multi-qudit CSP channels is always strictly larger than its subset of stabiliser operations.
In the single-qudit case, however, the two classes coincide.
Thus, our result is in analogy with the well-known fact from entanglement theory that LOCC operations are contained but not equal to the set of separable quantum channels.

Our proof strategy is simplified by the observation that it is sufficient to show the separation of CSP and SO in a suitable subspace.
Having derived restrictions on the form of CSP and SO channels in this subspace, we then give a linear functional which is able to separate the two sets.
In particular, we explicitly construct a CSP channel which is the unique maximiser of said functional and thus extremal in CSP.

As an auxiliary result, we restrict the form of Kraus operators of extremal stabiliser operations.
In particular, this implies that stabiliser operations can be realised in a finite number of rounds.
This is in contrast to entanglement theory, where the analogous LOCC operations become strictly more powerful with the number of rounds.

\vspace\floatsep

In our operational definition of SO, we intentionally allow for arbitrary classical control logic.
As laid out in Sec.~\ref{sec:stab-ops}, this is implicit in the axiomatic definition of CSP and a separation would otherwise be trivial.
However, as our proof does not depend on the details of the classical control, the separation still holds if we restrict the latter to efficient classical algorithms.
For CSP, this has to be understood in the sense of Sec.~\ref{sec:additional}, i.e.~as efficient classical processing of the outcomes of generalised stabiliser POVMs and control of adaptive Clifford operations.

Some magic monotones, such as the \emph{dyadic negativity} can be connected to classical simulation algorithms \cite{seddon_quantifying_2021}.
These allow to efficiently simulate a restricted class of CSP channels which is, however, strictly smaller than CSP with efficient classical control.
The main reason for this is that it is not clear how to efficiently simulate the generalised stabiliser POVMs introduced in Sec.~\ref{sec:additional}.
Therefore, additional assumptions on these POVMs are necessary.
However, it is plausible that CSP with these restricted POVMs is still strictly larger than SO.
Hence, we expect that the algorithm by \textcite{seddon_quantifying_2021} allows for simulation beyond the Gottesman-Knill theorem.
A thorough analysis of the simulability of CSP channels and comparison with the Gottesman-Knill theorem is left for future work.

\vspace\floatsep

Finally, we think that our result will stimulate further research in the resource theory of magic state quantum computing.
The axiomatic approach to free operations has the advantage that it is possible to directly apply results from general resource theory and obtain explicit bounds on e.g.~state conversion and distillation rates \cite{veitch_resource_2014,liu_one-shot_2019,fang_no-go_2020,seddon_quantifying_2021,wang_efficiently_2020}.
For the case of stabiliser-preserving channels, it is also known that the theory is asymptotically reversible  \cite{liu_many-body_2020}.
Here, it would be interesting to investigate which results still hold when the set of free operations is restricted to CSP.
Moreover, if ``free'' shall have an operational meaning, then the question of simulability and the power of classical control will have to be discussed.

Our separation result opens the possibility that tasks like magic state distillation show a gap in the achievable rates between CSP channels and stabiliser operations.
Again, this question is motivated from entanglement theory, where a significant separation between separable channels and LOCC operations for e.g.~entanglement conversion is known \cite{chitambar_increasing_2012}.

%%% =============================================
\section{Acknowledgements}
%%% =============================================

We would like to thank James Seddon and Earl Campbell for discussions which helped initialising this project.
Furthermore, we thank Felipe Montealegre Mora for many discussions during the various stages of this work, and Mateus Araújo for helpful input on ancilla-assisted operations.
This work has been supported by Germany's Excellence Strategy -- Cluster of Excellence \emph{Matter and Light for Quantum Computing (ML4Q)} EXC2004/1, the Deutsche Forschungsgemeinschaft (DFG, German Research Foundation) within the Emmy Noether program (grant number 441423094) and the Priority Program CoSIP, the German Federal Ministry for Education and Research through the Quantum Technologies program (QuBRA, QSolid) and the German Federal Ministry for Economic Affairs and Climate Action (ProvideQ).
\bibliography{csp}

%%% =============================================
\appendix

%%% =============================================
\section{Miscellaneous facts on stabiliser states}
\label{sec:misc-stabiliser-stuff}
%%% =============================================

Here, we state a fact (Proposition~\ref{prop:well-known stab basis}) on \emph{stabiliser bases} that seems to be widely known, but for which we could not find a direct reference.
It is used in the proof of Lemma~\ref{lem:diag-subpolytope}.

Let $S$ be a stabiliser group on $n$ qudits of size $|S|=d^n$.
There is a unique (up to phases) joint eigenbasis $\{|\alpha_i\rangle\}_i$ of all elements $s$ of $S$.
Concretely:
The eigenvalue equations
\begin{align}\label{eqn:stabiliser basis}
	\chi_i(s) 
	s
	|\alpha_i\rangle 
	= 
	|\alpha_i\rangle,
	\qquad s\in S
\end{align}
establish a one-one correspondence between the set of characters of $S$ and elements of the common eigenvectors.
Bases arising this way are called \emph{stabiliser bases}.

The argument uses basic notions from the description of Pauli operators and stabiliser states in terms of discrete symplectic vector spaces \cite{gross_hudsons_2006}.
In particular, two Pauli operators $w(a), w(b)$ for  commute if and only if the \emph{symplectic inner product} 
\begin{align*}
	[a,b] = 
	\sum_{i=1}^n (a_z)_i (b_x)_i
	-
	\sum_{i=1}^n (a_x)_i (b_z)_i
\end{align*}
is zero (as an element of $\FF_d$).
A subset $M\subset \FF_d^{2n}$ is \emph{isotropic} if the symplectic inner product vanishes between any two elements of $M$.
An isotropic subspace $M$ is \emph{maximal} if $\dim M = n$.
Witt's Lemma implies that every isotropic set is contained in a maximal isotropic subspace.

\begin{lemma}\label{lem:states in pauli span}
	Let $M\subset \FF_d^{2n}$ be an isotropic set.
	There is a stabiliser basis such that all pure states in the linear span of $\{ w(a) \,|\, a \in M\}$ belong to that basis.
\end{lemma}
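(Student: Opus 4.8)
The plan is to extend $M$ to a maximal isotropic subspace, read off the associated stabiliser basis, and observe that every operator in $\spann\{w(a)\mid a\in M\}$ is diagonal in that basis; the only diagonal operators that are pure states are the basis projectors, which yields the claim.

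First I would apply Witt's Lemma to embed the isotropic set $M$ into a maximal isotropic subspace $\tilde M\subset\FF_d^{2n}$ with $\dim\tilde M=n$. Choosing suitable phases $\chi(a)$, the operators $\{\chi(a)w(a)\mid a\in\tilde M\}$ form a stabiliser group $S$ (an Abelian subgroup not containing $\omega\one$), and by Eq.~\eqref{eqn:stabiliser basis} its common eigenbasis $\{|\alpha_i\rangle\}_i$ is a stabiliser basis. Since the symplectic form vanishes on $\tilde M$, the Pauli operators $w(a)$, $a\in\tilde M$, mutually commute, and each acts diagonally on this basis, $w(a)|\alpha_i\rangle=\chi_i(a)|\alpha_i\rangle$ for suitable phases $\chi_i(a)$. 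As $M\subseteq\tilde M$, every generator $w(a)$ with $a\in M$ is therefore diagonal in $\{|\alpha_i\rangle\}_i$, and hence so is every element of $\spann\{w(a)\mid a\in M\}$.

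The key step is then a short linear-algebra observation. Suppose $A=\sum_i a_i|\alpha_i\rangle\langle\alpha_i|$, diagonal in the stabiliser basis, is a pure state. Hermiticity and positivity give $a_i\ge 0$, the trace condition gives $\sum_i a_i=1$, and the rank-one (equivalently idempotency) condition $A^2=A$ forces each $a_i\in\{0,1\}$ with exactly one equal to $1$. Thus $A=|\alpha_j\rangle\langle\alpha_j|$ for some $j$, so every pure state in the span is a member of the stabiliser basis $\{|\alpha_i\rangle\}_i$, as asserted.

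I expect the only delicate point to be the bookkeeping of phases: one must check that a valid phase choice making $\{w(a)\}_{a\in\tilde M}$ into a stabiliser group exists, which is standard, and that passing from the group elements of $S$ to the bare operators $w(a)$ does not affect diagonality, which is immediate since rescaling a diagonal operator by a scalar leaves it diagonal. Beyond this, the argument is routine, with the substance residing entirely in the elementary fact that a diagonal pure state must be a single basis projector.
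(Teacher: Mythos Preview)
Your proof is correct and follows essentially the same approach as the paper: both extend $M$ to a maximal isotropic subspace, take the associated stabiliser basis, and then argue that any pure state in the span must be a basis projector. The only cosmetic difference is the final step---the paper phrases it as ``the pure state commutes with every element of $S$, hence $\ket\psi$ is a joint eigenvector,'' whereas you unpack the equivalent fact that a diagonal rank-one projector is a single basis projector.
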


\begin{proof}
	Choose a basis $b_1, \dots, b_n$ for some maximal isotropic subspace containing $M$.
	Then the operators $w(b_1), \dots, w(b_n)$ generate a stabiliser group $S$ of size $d^n$.
	Their unique common eigenbasis is a stabiliser basis. 
%	By construction, the Pauli operators $w(M)$ commute with $S$ and are thus diagonal in that stabiliser basis, which implies the claim.
	By construction, any pure state $ \ketbran{\psi}{\psi} $ contained in the span of $ \{ w(a) \, | \, a \in M \} $ commutes with the elements in $ S $ and is thus a joint eigenvector of all elements in $ S $, which implies that $ \ket{\psi} $ belongs to the stabiliser basis of $ S $. 
\end{proof}

\begin{proposition}\label{prop:well-known stab basis}
	Let $\ket\psi\in
	({\CC^d})^{\otimes n_1} \otimes 
	({\CC^d})^{\otimes n_2}$
	be a bi-partite stabiliser state.
	Let $\{|\alpha_1\rangle, \dots, |\alpha_{d^{n_1}}\rangle\}$ be a stabiliser basis on the first subsystem.
	Then there is a stabiliser basis on the second subsystem such that each of the
	the partial contractions 
	\begin{align*}
					\ket{\beta_i} = 
					(\bra{\alpha_i}\otimes\one)|\psi\rangle \in (\CC^d)^{\otimes n_2}
	\end{align*}
	is proportional to some element of this basis (with a proportionality constant of $0$ being allowed).
\end{proposition}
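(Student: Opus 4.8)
The plan is to produce a single maximal isotropic subspace $N\subset\FF_d^{2n_2}$ whose associated stabiliser basis contains all of the nonzero contractions $\ket{\beta_i}$, and then to invoke Lemma~\ref{lem:states in pauli span}. Write $W=\FF_d^{2n}=W_1\oplus W_2$ with $W_j=\FF_d^{2n_j}$, so that the symplectic form splits as $[(a,b),(a',b')]=[a,a']+[b,b']$. Let $M_\psi\subset W$ be the maximal isotropic subspace of the stabiliser group of $\ket\psi$, and let $M_1\subset W_1$ be the maximal isotropic subspace of the stabiliser group defining the basis $\{\ket{\alpha_i}\}$. With $Y:=M_1\oplus W_2$ and $\pi_2$ the projection onto $W_2$, define
\begin{align*}
	N := \pi_2\bigl(M_\psi\cap Y\bigr)\subset W_2.
\end{align*}

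First I would record the elementary observation that pins the contractions down. If $(a,b)\in M_\psi$ with $a\in M_1$, then the corresponding stabiliser element $g=\tau^{c}\,w(a)\otimes w(b)$ fixes $\ket\psi$, and since $a\in M_1$ the vector $\ket{\alpha_i}$ is an eigenvector of $w(a)$. Contracting $\ket{\beta_i}=(\bra{\alpha_i}\otimes\one)g\ket\psi$ and pulling the eigenvalue of $w(a)$ through then shows that $w(b)\ket{\beta_i}$ is a phase multiple of $\ket{\beta_i}$. Hence every $b\in N$ yields a Pauli $w(b)$ stabilising each nonzero $\ket{\beta_i}$ up to a phase.

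Next I would establish the two structural facts about $N$. Isotropy is immediate: for $(a,b),(a',b')\in M_\psi\cap Y$ the isotropy of $M_\psi$ gives $[a,a']+[b,b']=0$, while $a,a'\in M_1$ forces $[a,a']=0$, so $[b,b']=0$. The crucial point is that $N$ is in fact \emph{maximal} isotropic, i.e.\ $\dim N=n_2$. This I would obtain from a symplectic dimension count: using $M_\psi^\perp=M_\psi$ and the easily verified identity $Y^\perp=M_1\oplus 0$, one finds $(M_\psi+Y)^\perp=M_\psi\cap(M_1\oplus 0)=:K$, hence $\dim(M_\psi+Y)=2n-\dim K$ and therefore $\dim(M_\psi\cap Y)=n_2+\dim K$. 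Since the kernel of $\pi_2$ restricted to $M_\psi\cap Y$ is exactly $K$, this gives $\dim N=(n_2+\dim K)-\dim K=n_2$.

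With $N$ maximal isotropic, the operators $\{w(b)\mid b\in N\}$ generate a full stabiliser group on the second system, and by the first step each nonzero $\ket{\beta_i}$ is a common eigenvector of all of them; consequently its projector $\ketbra{\beta_i}{\beta_i}$ equals a phase-weighted average of the $w(b)$ and thus lies in $\lin\{w(b)\mid b\in N\}$. Applying Lemma~\ref{lem:states in pauli span} to the isotropic set $N$ then produces a single stabiliser basis to which every nonzero $\ket{\beta_i}$ belongs, which is the claim. I expect the dimension count $\dim N=n_2$ (equivalently, the fact that projecting a stabiliser state onto a stabiliser basis of one tensor factor leaves a full-rank stabiliser structure on the other) to be the main obstacle; the isotropy of $N$ and the phase bookkeeping in the contraction are routine.
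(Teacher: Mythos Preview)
Your proof is correct and follows essentially the same route as the paper: both arguments single out the set
\[
N \;=\; \pi_2\bigl(M_\psi\cap(M_1\oplus W_2)\bigr)
\;=\;
\{\,\hat t_2 \in \FF_d^{2n_2}\mid \exists\,\hat s\in M_1,\ \hat s\oplus\hat t_2\in M_\psi\,\},
\]
show that it is isotropic, and then conclude via Lemma~\ref{lem:states in pauli span} that the rank-one projectors $\ketbra{\beta_i}{\beta_i}$ lie in a single stabiliser basis on the second factor.

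The one genuine difference is how the span condition of Lemma~\ref{lem:states in pauli span} is verified. The paper expands $\ketbra{\beta_i}{\beta_i}$ directly as the double sum $\sum_{s\in S,\,t\in T}\chi_i(s)\,\tr_1\bigl((s\otimes\one)t\bigr)$ and reads off that only the Weyl operators $w(\hat t_2)$ with $\hat t_2\in N$ survive; this never requires knowing the size of $N$, since Witt's lemma inside Lemma~\ref{lem:states in pauli span} takes care of extending $N$ to a maximal isotropic subspace. You instead establish the eigenvector relation $w(b)\ket{\beta_i}\propto\ket{\beta_i}$ for $b\in N$ and then prove, via a clean symplectic dimension count, that $\dim N=n_2$, so that $N$ is already maximal and the projector identity $\ketbra{\beta_i}{\beta_i}\propto\sum_{b\in N}\chi(b)\,w(b)$ follows immediately. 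Your extra step is not needed for the paper's version of the argument, but it is correct and gives the slightly sharper statement that no extension of $N$ is required; the ``main obstacle'' you flagged is in fact avoidable.
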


\begin{proof}
	Let $S$ be the stabiliser group of $|\alpha_1\rangle$.
	There exists an isotropic subspace $\hat S\subset \FF_d^{2n_1}$ and a function $f_S: \hat S \to \CC$ such that 
	\begin{align*}
		S = \{ f_S(\hat s) \, w(\hat s) \,|\, \hat s \in \hat S \}.
	\end{align*}
	Let $\chi_i$ be the character associated with $|\alpha_i\rangle$ as in (\ref{eqn:stabiliser basis}).
	Analogously, write the stabiliser group of $\ket\psi$ as
	\begin{align*}
		T = \{ f_T(\hat t) \, w(\hat t) \,|\, \hat t \in \hat T \}
	\end{align*}
	for a suitable isotropic subspace $\hat T\subset \FF_d^{2(n_1+n_2)}$ and phase function $f_T: \hat{T} \to \CC$.
	Let 
	\begin{align*}
		\hat U = \{ \hat t_2 \in \FF_d^{2n_2} \,|\, \exists \hat s \in \hat S, \hat s \oplus \hat t_2 \in \hat T \}.
	\end{align*}
	The fact that $\hat S$ and $\hat T$ are isotropic implies that the same is true for $\hat{U}$.

	Using Eq.~(\ref{eqn:stab projection}), we obtain 
	\begin{align*}%\label{eqn:partial}
		\ket{\beta_i}\bra{\beta_i}
		&=
		(\bra{\alpha_i}\otimes\Id)|\psi\rangle\langle\psi| (\ket{\alpha_i}\otimes\Id) \\
		&\propto
		%&=
		%\frac1{|S|\,|T|}
		\sum_{s\in S,t \in T} 
		\chi_i(s) 
		\tr_1 ((s\otimes\Id)  t)  \\
		 &=
		\sum_{\hat s\in \hat S,\hat t_1\oplus\hat t_2 \in \hat T} 
		\chi_i(\hat s) 
		f_S(\hat s)
		f_T(\hat t)
		\tr (w(\hat s)  w(\hat t_1)) w(\hat t_2)  \\
		 &\propto
		 \sum_{\hat{t}_2 \in \hat U}
		 w(\hat{t}_2)
		 \Big(
			 \sum_{\hat{s} \in \hat{S}: \hat{s}\oplus \hat{u} \in T}
			 \chi_i(\hat{s}) f_S(\hat{s}) f_T(\hat{s} \oplus \hat{t}_2)
		\Big).
	\end{align*}
	The statement follows by invoking Lemma~\ref{lem:states in pauli span}.
\end{proof}

Next, we show a property of stabiliser code projectors used in the proof of Lemma \ref{lem:so-in-ad}.

\begin{lemma}
	\label{lem:non-diagonal-projector-stuff}
	Suppose that $P$ is a projector onto a stabiliser code.
	If $P$ is non-diagonal, then there are $\ket x \neq \ket y$ such that $P\ket x = P\ket y \neq 0$.
\end{lemma}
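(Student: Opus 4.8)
The plan is to run everything through the group-averaging formula $P = |S|^{-1}\sum_{s\in S}s$ of Eq.~\eqref{eqn:stab projection}, where $S$ is the stabiliser group with $\ran P = C(S)$, together with the elementary fact that $gP = Pg = P$ for every $g\in S$ (since $gS=S$). First I would translate the hypothesis that $P$ is non-diagonal into a statement about $S$. Because the Pauli operators form an orthogonal operator basis, the average $\sum_{s\in S}s$ is diagonal in the computational basis if and only if each $s\in S$ is, and a Pauli $\tau^{\bullet}Z(a_z)X(a_x)$ is diagonal precisely when its $X$-part vanishes, i.e.\ $a_x=0$. Hence a non-diagonal $P$ guarantees some $g=\tau^{t}w(a)\in S$ with $a_x\neq 0$.

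Next I would use that such a $g$ acts on the computational basis as a phased shift, $g\ket z=\phi(z)\ket{z+a_x}$ with $\phi(z)=\tau^{t-\gamma(a)}\omega^{a_z\cdot(z+a_x)}$ a nonzero scalar, and note $\ket{z+a_x}\neq\ket z$ as $a_x\neq 0$. Inserting this into the invariance $Pg=P$ evaluated on $\ket z$ gives the key identity
\begin{equation*}
 P\ket{z+a_x} = \phi(z)^{-1}\,P\ket z \qquad\text{for all } z\in\F_d^n.
\end{equation*}
In particular $P\ket z\neq 0$ iff $P\ket{z+a_x}\neq 0$, so the support $A:=\{z\,|\,P\ket z\neq 0\}$ is invariant under $z\mapsto z+a_x$, and for every $z\in A$ the vectors $P\ket z$ and $P\ket{z+a_x}$ are nonzero and proportional. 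Taking any $z\in A$ (nonempty since $P\neq 0$) and setting $x=z$, $y=z+a_x$ already yields distinct basis labels whose images are nonzero and differ only by the phase $\phi(z)$. This is the substance of the ``standard argument'' invoked in the proof of Lemma~\ref{lem:so-in-ad}: proportionality alone is enough there, because feeding $\ket{0^k}\otimes P\ket x$ and $\ket{0^k}\otimes P\ket y$ through the unitary would produce proportional vectors, forcing the second tensor factors in Eqs.~\eqref{eq:SO-AD-Kraus-action-1}--\eqref{eq:SO-AD-Kraus-action-2} to coincide, which is impossible for $x\neq y$.

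The hard part will be upgrading this proportionality to the literal equality $P\ket x=P\ket y$, i.e.\ normalising the connecting phase. Writing $\phi(z)=c\,\omega^{a_z\cdot z}$ with $c$ independent of $z$, one sees that $\phi$ varies as $z$ ranges over $A=z_0+V$ exactly when $a_z$ is not orthogonal to the linear part $V$ of the support (here $A$ is an affine subspace, being the common support of the diagonal constraints imposed by $S\cap\{Z\text{-type}\}$). In that generic case $a_z\cdot z$ runs through all of $\F_d$, one can solve $\phi(z)=1$, and equality follows. The genuine obstacle is the complementary situation in which every $X$-carrying element of $S$ has its $Z$-part orthogonal to $V$; one may then extract a pure-$X$ element $\tau^{s}X(a_x)\in S$ whose shift phase is a fixed root of unity that cannot always be set to $1$ (e.g.\ $S=\langle -X\rangle$ on one qubit gives $P=\ketbra{-}{-}$ with constant phase $-1$, where no two basis images are literally equal). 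I therefore expect the robust content of the statement to be ``$P\ket x\propto P\ket y$ with both nonzero,'' which is precisely what is used downstream; exact equality then holds in every case where the above phase normalisation applies.
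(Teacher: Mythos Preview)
Your approach differs from the paper's: you use the invariance $Pg=P$ for $g\in S$ to push a single non-diagonal stabiliser element along and obtain $P\ket{z+a_x}=\phi(z)^{-1}P\ket z$, whereas the paper computes the full stabiliser group of $P\ket x$ (generated by $S$ together with the $Z$-type stabilisers of $\ket x$ that commute with $S$) and argues that this group is unchanged when $x$ is replaced by any $y$ with $x-y\in L^\perp$. Both routes are clean ways to get \emph{proportionality} $P\ket x\propto P\ket y$.

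More importantly, your diagnosis of the ``hard part'' is accurate and in fact uncovers a genuine defect in the lemma as literally stated. Your one-qubit counter-example $S=\langle -X\rangle$, $P=\ketbra{-}{-}$, has $P\ket0=-P\ket1$ and there is no other basis vector available, so strict equality fails. This is not a single-qudit accident: take $S=\langle -X_1,Z_2\rangle$ on two qubits, so $P=\ketbra{-0}{-0}$; then the only basis vectors with nonzero image are $\ket{00}$ and $\ket{10}$, and $P\ket{00}=-P\ket{10}$. The paper's proof does not address this phase either: showing that $P\ket x$ and $P\ket y$ share the same maximal stabiliser group only pins them down up to a global phase, so the paper's argument likewise establishes proportionality rather than literal equality.

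You are also right that proportionality is all that is needed in Lemma~\ref{lem:so-in-ad}. If $P_i\ket x$ and $P_i\ket y$ are nonzero and proportional, then so are $U_i(\ket{0^k}\otimes P_i\ket x)$ and $U_i(\ket{0^k}\otimes P_i\ket y)$; but the second tensor factors prescribed by Eqs.~\eqref{eq:SO-AD-Kraus-action-1}--\eqref{eq:SO-AD-Kraus-action-2} for distinct $x,y$ are never proportional. So the ``robust content'' you isolate---$P\ket x\propto P\ket y\neq 0$---is precisely what should be stated and proved, and your argument for it is complete.
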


\begin{proof}
	We can assume that the stabiliser group of $P$ has $l\geq 1$ non-diagonal generators, given as $\omega^{s_j} w(z_j,x_j) $ for $s_j\in\F_d$, $ z_j,x_j \in \F_d^n $, and $x_j\neq 0$.
	The remaining, diagonal generators are of the form $\omega^{-z\cdot b} Z(z)$ for some $b\in\F_d^n$ and $z$ is an element of a suitable subspace $M$, such that
	\begin{align*}
		M \subset L:= \{ z \in \FF_d^n \; | \;  Z(z)w(z_j,x_j) = w(z_j,x_j) Z(z) \Leftrightarrow z \cdot x_j = 0 \quad \forall j=1,\dots,l \}.
	\end{align*}
	For any $\ket x$, its stabilisers are $\omega^{-z\cdot x}Z(z)$ for $z\in\F_d^n = L\oplus L^\perp$.
	Under the projection $P$, the stabilisers with $z\in L^\perp$ are replaced by the group generated by $\omega^{s_j} w(z_j,x_j)$.
	For $P\ket{x}$ to be non-zero it is then necessary and sufficient that $z\cdot x = z\cdot b$ for all $z\in M$.
	We then have $P\ket x = P \ket y \neq 0$  if moreover $z\cdot x = z\cdot y$ for all $z \in L$.
	Since this enforces $\dim L = n-l$ constraints on $x$, there are $d^l>1$ possible solutions.
    Thus, we can always find at least two distinct states $x\neq y$ such that $P\ket x = P \ket y \neq 0$, as claimed.
\end{proof}

%%% =============================================
\section{Properties of the counter-example \texorpdfstring{$\Lambda$}{Lambda}}
\label{sec:on-lambda}
%%% =============================================

In this section, we analyse the properties of the $ \Lambda $-channel in the case of qubits.\footnote{A similar analysis can also be done for qudits which is however more evolved.} 
Its action on the computational basis is given by
\begin{equation}
	\label{eq:lambda-channel}
	\Lambda(\rho) := \rho_{00} \ketbra{+}{+} +  \sum_{x\neq 0} \rho_{xx} \ketbra{x}{x} + \frac 1 2 \sum_{\substack{x\neq y \\ x \neq 0 \neq y}} \rho_{xy} \ketbra{x}{y}, \qquad \rho_{xy}:=\sandwich{x}{\rho}{y}.
\end{equation} 
From the definition, it is evident that $\Lambda$ is trace-preserving. 
However, it is not obvious that $\Lambda$ is completely stabiliser-preserving, a fact which is proven by Lem.~\ref{lem:lambda-extremality-AD}.
Here, we give an independent, self-contained proof for the CSP property which also sheds a bit of light on the interpretation of the channel $\Lambda$.

To this end, we claim that $\Lambda$ has a Kraus decomposition given by 
\begin{equation}
\label{eq:lambda-kraus-decomposition}
\Lambda(\rho) = H^{\otimes n}\ketbra{0}{0}\rho\ketbra{0}{0}H^{\otimes n} + \frac{1}{2^{n-1}}\sum_{z \in\F_2^n\setminus 0} P_z \rho P_z,
\end{equation}
where $P_z=(\one-Z(z))/2$ projects onto the stabiliser code given by the span of computational basis states $\ket{x}$ with $x\cdot z \neq 0$.
Then, by the polar decomposition, Eq.~\eqref{eq:csp-polar-decomposition}, of CSP channels discussed in App.~\ref{sec:additional}, the Kraus decomposition \eqref{eq:lambda-kraus-decomposition} defines a CSP channel since 
\begin{equation}
 \ketbra{0}{0} + \frac{1}{2^{n}}\sum_{z \in\F_2^n\setminus 0} \left(\one - Z(z)\right) = \one + \ketbra{0}{0} - \frac{1}{2^{n}}\sum_{z \in\F_2^n} Z(z) = \one.
\end{equation}
Alternatively, it is also straightforward to compute the Choi state from Eq.~\eqref{eq:lambda-kraus-decomposition}.
Let us define for any $z\in\F_2^n\setminus 0$ the affine subspace $K_z:=\{x\in\F_2^n: z\cdot x =1\}$ and the $2n$-qubit stabiliser state
\begin{equation}
 \ket{\psi_z} := 2^{- \frac{n-1}{2} }\sum_{x\in K_z} \ket{xx}.
\end{equation}
Then, the Choi state is
\begin{equation}
 \mathcal{J}(\Lambda) = \frac{1}{2^n} \left( \ketbra{+}{+}\otimes\ketbra{0}{0} + \sum_{z\neq 0} \ketbra{\psi_z}{\psi_z} \right),
\end{equation}
which lies in the stabiliser polytope $ \SP_{2n} $.

Finally, to prove the Kraus decomposition \eqref{eq:lambda-kraus-decomposition}, we check that it agrees with Eq.~\eqref{eq:lambda-channel} on the computational basis.
To this end, let us denote the channel Eq.~\eqref{eq:lambda-kraus-decomposition} as $\tilde\Lambda$.
Note that $P_z\ketbra{x}{y}P_z$ is zero if and only if $x$ or $y$ is orthogonal to $z$ and $ \ketbra{x}{y} $ otherwise.
Thus, $\tilde\Lambda(\ketbra{0}{0}) = \ketbra{+}{+}$.
For any $x\neq 0$, the linear equation $x\cdot z=1$ has exactly $2^{n-1}$ solutions $z\in\F_2^n$.
Since the first term in Eq.~\eqref{eq:lambda-kraus-decomposition} yields 0, we get $\tilde\Lambda(\ketbra{x}{x}) = \ketbra{x}{x}$ for any $x\neq 0$.
Furthermore, adding the condition $y\cdot z=1$ for any $y\notin\{ 0,x\}$ will further half the solution space, yielding $2^{n-2}$ vectors which are not orthogonal to both $x$ and $y$.
Thus, given two non-zero vectors $x\neq y$, we get $\tilde\Lambda(\ketbra{x}{y}) = \frac 1 2 \ketbra{x}{y}$ which then shows that $\tilde\Lambda=\Lambda$.

A natural question to ask is whether $\Lambda$ can be expressed in terms of more elementary quantum channels.
We can write the channel as a composition of the following three operations: 
\begin{enumerate}
 \item Perform a projective measurement with projectors $\{\ketbra{0^n}{0^n},\one-\ketbra{0^n}{0^n}\}$. This channel sets all off-diagonal terms in the first row and column of $\rho$ to zero, i.e.~it block-diagonalises $\rho$ with respect to the entry at position $ (0,0) $.
 \item Partial dephasing in the computational basis with probability $1/2$. This channel reduces the amplitude of the off-diagonal terms by $1/2$.
 \item Apply a global Hadamard gate on all qubits conditioned on the ``0'' outcome of the measurement.
\end{enumerate}
Interestingly, all three components are necessary for $\Lambda$ to have the desired properties.
If we leave out the second channel, it is possible to show that the composition of 1 and 3 is not stabiliser-preserving for $n\geq 2$\footnote{This can in principle be done by computing the Choi states of the corresponding channels and then finding a hyperplane that separates them from the stabiliser polytope $ \SP_{2n} $}, while for $n=1$ it is simply a stabiliser operation. 
Moreover, if we leave out channel 2 and 3, then we can rewrite the block-diagonalisation as a uniform convex combination of the identity and the diagonal $n$-qubit gate $V_n:=\diag(-1,1,\dots,1)$.
Note that $V_n = X^{\otimes n} (C^{n-1}Z) X^{\otimes n}$, thus it is in the $n$-th level of the Clifford hierarchy.
Hence, for $n\leq 2$, this is a mixed Clifford channel and in particular a stabiliser operation.
For $n>2$, the same technique as before can be used to show that this channel is not CSP. 
The effect of the dephasing channel is to sufficiently reduce the ``magic'' of the overall channel.
With increasing dephasing strength, it approaches the CSP polytope from the outside and eventually becomes CSP.
Figuratively speaking, the Hadamard gate in the last step fine-tunes the direction from which the CSP polytope is being approached, resulting in a channel which is a vertex.

%%% =============================================
\section{Measurements that are not followed by adaptive operations are never extremal}
\label{app:Measurements that are not followed by adaptive operations are never extremal}
%%% =============================================

\begin{lemma}\label{lem:NoContOperations-NoExtremality}
Suppose $ \mathcal{E} = \sum_{i} K_i \cdot K_i \in \CSP_n$ is an extremal CSP map with Kraus operators $ K_i $. Then $ \mathcal{E} $ does not contain a set of $ d $ Kraus operators that are Pauli measurements of the form $ PP_0,...,PP_{d-1} $ for some fixed Pauli projector $ P $ and where $P_0,...,P_{d-1} $ are projectors onto the $ d $ eigenspaces of some Pauli operator $ w(z,x) $. 
\end{lemma}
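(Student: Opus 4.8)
The plan is to exploit the defining feature of the hypothesis---that the common operator $P$ is applied irrespective of the measurement outcome $j$---to exhibit $\mathcal E$ as a non-trivial convex combination of $d$ distinct CSP maps. First I would split $\mathcal E = \mathcal F + \mathcal G$, where $\mathcal F(\rho) = \sum_{j=0}^{d-1} PP_j\,\rho\,P_j P$ is the completely positive contribution of the $d$ distinguished Kraus operators and $\mathcal G$ is the (completely positive) remainder. Since the $d$ operators $PP_j$ together realise the projective measurement of $w(z,x)$ with the \emph{same} post-processing for every outcome, they form an orthogonal refinement of $P$; in particular $P$ commutes with $w(z,x)$ and $\sum_j P_j P P_j = P$, so that $\mathcal G^\dagger(\one) = \one - P$.

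The engine of the argument is the identity expressing a dephasing in the eigenbasis of a Pauli operator as a uniform mixture of conjugations by its powers,
\begin{equation*}
  \sum_{j=0}^{d-1} P_j\,\rho\,P_j = \frac1d\sum_{k=0}^{d-1} w(z,x)^k\,\rho\,w(z,x)^{-k},
\end{equation*}
which follows from $w(z,x)^k = \sum_j \omega^{jk} P_j$ together with $\tfrac1d\sum_k \omega^{(j-j')k} = \delta_{j,j'}$. Sandwiching by $P$ and using $[P,w(z,x)]=0$, I would rewrite
\begin{equation*}
  \mathcal F(\rho) = \frac1d \sum_{k=0}^{d-1} \big(w(z,x)^k P\big)\,\rho\,\big(P\, w(z,x)^{-k}\big),
\end{equation*}
so that each summand is again a legitimate stabiliser Kraus term of the polar form \eqref{eq:csp-polar-decomposition}, namely the Clifford $w(z,x)^k$ applied after the stabiliser projector $P$.

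The key point is a trace-weight computation: every term $w(z,x)^k P \cdot P\, w(z,x)^{-k}$ carries the \emph{identical} weight $\tfrac1d\,P\,w(z,x)^{-k} w(z,x)^k P = \tfrac1d P$, because conjugation by the unitary $w(z,x)^k$ leaves $P^\dagger P = P$ invariant. Consequently the $d$ maps
\begin{equation*}
  \mathcal E_k(\rho) := w(z,x)^k P\,\rho\,P\, w(z,x)^{-k} + \mathcal G(\rho), \qquad k = 0,\dots,d-1,
\end{equation*}
each satisfy $\mathcal E_k^\dagger(\one) = P + (\one - P) = \one$; their Choi states are convex combinations of pure stabiliser states and obey the trace-preservation constraint, so by Lemma~\ref{lem:csp-choi} (equivalently, by the POVM form of the TP condition, Eq.~\eqref{eq:povm-tp-cond}) they are genuine CSP maps, and manifestly $\mathcal E = \tfrac1d\sum_k \mathcal E_k$. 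Since $w(z,x)$ acts non-trivially across the support of $P$---precisely the non-degenerate case in which all $d$ operators $PP_j$ are non-zero, i.e.\ there really are $d$ distinct outcomes---the maps $\mathcal E_0$ and $\mathcal E_1$ differ (as $w(z,x)^0 P = P$ while $w(z,x)P$ is not proportional to $P$), so $\mathcal E$ is not extremal.

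I expect the main obstacle to be exactly the trace-preservation bookkeeping in the last step: naively decomposing $\mathcal F$ into its $d$ Clifford-conjugation terms fails, since a single term $w(z,x)^kP\cdot P w(z,x)^{-k}$ is generally \emph{not} trace-preserving on its own. What rescues the argument is the commutation $[P,w(z,x)]=0$ forced by the ``no conditioning'' hypothesis, which makes all $d$ terms share the common weight $\tfrac1d P$ so that adding back $\mathcal G$ restores trace preservation uniformly. I would close by disposing of the degenerate case in which the support of $P$ lies inside a single eigenspace of $w(z,x)$: there all but one of the $PP_j$ vanish, so $\mathcal E$ does not in fact contain $d$ non-trivial Pauli-measurement Kraus operators and the hypothesis is vacuous.
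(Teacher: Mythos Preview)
Your argument is correct and rests on the same idea as the paper's: rewrite the non-adaptive measurement block $\sum_j PP_j(\cdot)P_jP$ as a uniform mixture of $d$ Clifford conjugations of the single projector $P$, then split $\mathcal E$ into $d$ CSP channels sharing the common remainder $\mathcal G$.

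The execution differs. You use the Pauli-power identity $\sum_j P_j\rho P_j=\tfrac1d\sum_k w^k\rho w^{-k}$ directly, so the Cliffords are simply $w(z,x)^k$ and no change of basis is needed. The paper instead conjugates by a Clifford $C$ to put the measurement into the form $\tilde P\otimes\ketbra{x}{x}$ on a distinguished qudit and then writes the computational-basis dephasing as a mixture over diagonal single-qudit Cliffords $C_i=\diag(\sqrt d\,s_i)$ built from the components of a stabiliser basis $\{s_i\}$. After undoing the conjugation these $C_i$ are (up to phase) powers of the original $w(z,x)$, so the two constructions coincide; yours just gets there without the detour.

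One caveat: your sentence ``they form an orthogonal refinement of $P$; in particular $P$ commutes with $w(z,x)$'' is not a valid deduction---having all $PP_j\neq0$ does not force $[P,w(z,x)]=0$ (take $P=\ketbra{0}{0}$ and $w=X$ on one qubit). Your trace-preservation check $\mathcal E_k^\dagger(\one)=P+(\one-P)$ genuinely needs this commutation, since otherwise $w^{-k}Pw^{k}$ varies with $k$ and $\sum_jP_jPP_j\neq P$. The paper's proof makes exactly the same implicit assumption (the step $PP_x=C(\tilde P\otimes\ketbra{x}{x})C^\dagger$ already requires $PP_x$ to be a projector), so this is a shared looseness in the lemma's statement rather than a defect of your argument; in the intended application---the normal form of Theorem~\ref{thm:kraus-decomposition-SO}---consecutive Pauli projectors commute by construction.
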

\begin{proof}
Let $ \mathcal{O} $ be the map that is composed of the $ d $ Kraus operators $ PP_0,...,PP_{d-1} $, i.e.
\begin{align*}
	\mathcal{O}(\rho) = \sum_{x = 0}^{d-1} PP_x \rho P_xP. 
	%			= P\Big  (\sum_{x = 0}^{d-1} P_x \rho P_x  \Big ) P .
\end{align*}
There is a Clifford $ C $ such that $ PP_x = C \tilde{P}  \otimes \ketbra{x}{x}  C^\dagger$ for a projector $ \tilde{P} $ acting on the first $ n-1 $ qudits. 
Define the operation 
\begin{align}
	\mathcal{M}(\rho ) =\sum_{x = 0}^{d-1}  \tilde{P}\otimes \ketbra{x}{x} \rho  \tilde{P} \otimes \ketbra{x}{x},
\end{align}
so $ \mathcal{O}(\rho) = C \,  \mathcal{M}(C^\dagger \rho C)    \, C^\dagger  $.

Let $ s_0,...,s_{d-1} $ be the eigenstates of some Pauli operator $ w(z,x) $ for $ z,x \in \FF_1 $ and let $ C_i =  \diag(d s_i) \in \Cl_{1}$ be the diagonal Clifford unitary with diagonal proportional to $ s_i $. 
We claim that 
\begin{align}\label{eq:decomposition-measurement-into-Clifford}
	{\mathcal{M}}(\rho) = \frac{1}{d}\sum_{i = 0}^{d-1} (\tilde{P} \otimes C_i )\rho (\tilde{P} \otimes C_i^\dagger).
\end{align}
It suffices to check the equation for inputs of the form $  A \otimes \ketbra{x}{y} $ for $ x,y \in \FF_d $ and a Hermitian matrix $ A $ acting on $ n-1 $ qubits.
We have $ {\mathcal{M}}(A \otimes \ketbra{x}{y}) = \braket{x}{y} \tilde{P}A\tilde{P} \otimes \ketbra{x}{y} $
and for the RHS of \eqref{eq:decomposition-measurement-into-Clifford}
\begin{align}
	\frac{1}{d} \sum_{i = 0}^{d-1} \tilde{P}A\tilde{P} \otimes C_i \ketbra{x}{y}  C_i^\dagger = \frac{1}{d}\tilde{P}A\tilde{P} \otimes \sum_{i = 0}^{d-1} C_i \ketbra{x}{y} C_i^\dagger  
	&= \tilde{P}A\tilde{P} \otimes \sum_{i = 0}^{d-1} s_i(x) \overline{s_i(y)} \ketbra{x}{y}  \\
	&= \tilde{P}A\tilde{P}  \otimes \ketbra{x}{y}  \sum_{i = 0}^{d-1} s_i(x) \overline{s_i(y)}   \\
	&=\braket{x}{y}  \tilde{P}A\tilde{P}  \otimes \ketbra{x}{y},
\end{align}
where the last equality stems from the fact that 
\begin{align}
	\sum_{i = 0}^{d-1} s_i(x) \overline{s_i(y)}  = \Big (\sum_{i = 0}^{d-1} \ketbra{s_i}{s_i} \Big )(x,y) = \one_{d}(x,y) = \braket{x}{y} \ketbra{x}{y}. 
\end{align}
Hence, 
\begin{align}
	\mathcal{O}(\rho)  = C  \, \mathcal{M}(C \rho C^\dagger)  \, C^\dagger  &= C {M}(C^\dagger \rho C)  \, C^\dagger    \\
	&= \frac{1}{d}  \sum_{i = 0}^{d-1}  C (\tilde{P} \otimes C_i )C^\dagger \, \rho \, C (\tilde{P} \otimes C_i) C^\dagger .
\end{align}
If the original channel $ \mathcal{E} $ decomposes as $ \mathcal{E} = \mathcal{O} + \mathcal{O}^c $, then we can write it now as a convex combination of distinct operations 
\begin{align*}
	\mathcal{E} =\frac{1}{d} (\mathcal{E}_1 + \dots +\mathcal{E}_{d-1}  )\quad \text{with}  \quad 
	\mathcal{E}_i(\rho) =  C (\tilde{P} \otimes C_i )C^\dagger \, \rho \, C (\tilde{P} \otimes C_i) C^\dagger  + \mathcal{O}^c(\rho).
\end{align*}
The maps $ \mathcal{E}_i $ are completely positive and trace-preserving because 
\begin{align*}
	\mathcal{E}_i^\dagger(\one_n) &= C (\tilde{P} \otimes C_i^\dagger) C^\dagger \one_{n} C (\tilde{P} \otimes C_i)  C^\dagger  + (\mathcal{O}^c)^\dagger(\one_n)      \\
	&= C (\tilde{P} \otimes \one_1)  C^\dagger   + (\mathcal{O}^c)^\dagger(\one_n)   \\
	&= \mathcal{O}^\dagger(\one_n)  + (\mathcal{O}^c)^\dagger(\one_n)   \\
	&= \mathcal{E}^\dagger(\one_n)  \\
	&= \one_n,
\end{align*}
where the third equation follows from
\begin{align*}
	\mathcal{O}^\dagger(\one_n) &= \sum_{x = 0}^{d-1} C (\tilde{P} \otimes \ketbra{x}{x}) C^\dagger \one_n C (\tilde{P} \otimes \ketbra{x}{x})  C^\dagger   \\
	&= C \Big (\sum_{x = 0}^{d-1} \tilde{P} \otimes \ketbra{x}{x}  \Big ) C^\dagger   \\
	&= C (\tilde{P} \otimes \one_1 )C^\dagger.
\end{align*}
This proves that $ \mathcal{E} $ cannot be extremal. 
\end{proof}

\end{document}